\newcommand{\Nat}{\mathbb N}
\newtheorem{theorem}{Theorem}[section]
\newtheorem{lemma}[theorem]{Lemma}
\newtheorem{corollary}[theorem]{Corollary}
\newtheorem{definition}[theorem]{Definition}
\newcommand{\M}{\mathcal{M}}
\newcommand{\ModelM}{\M=(W,\sim,V)}
\newcommand{\lang}{\mathcal L}
\newcommand{\lPAL}{\lang_{PAL}}
\newcommand{\satisfies}{\vDash}
\newcommand{\eq}{\leftrightarrow}
\newcommand{\imp}{\rightarrow}
\newcommand{\Imp}{\Rightarrow}
\newcommand{\Dia}{\Diamond}
\renewcommand{\phi}{\varphi}
\newcommand{\inter}{\cap}
\newcommand{\K}{\square}
\newcommand{\update}{\sslash}
\newcommand{\sep}{\ \bm{\mid} \ }
\newcommand{\sepC}{ \bm{\mid}  }  
\newcommand{\calculus}{\textbf{DHS$_{PAL}$}}
\newcommand{\xMerge}{\otimes}
\newcommand{\byIH}{\overset{(IH)}{\rightsquigarrow} }
\newcommand{\byInvertibility}{\dashrightarrow}
\newcommand{\Rule}{\mathcal{R}}
\begin{document}

\title{Dynamic Hypersequents for Public Announcement Logic}
\author{Clara Lerouvillois\footnote{IRIT, CNRS---INP---UT \& IHPST, CNRS---Paris 1 Panthéon Sorbonne. ORCID: 0009-0008-8826-9050. \\ Email: clara.lerouvillois@irit.fr}, Francesca Poggiolesi\footnote{IHPST, CNRS---Paris 1 Panthéon Sorbonne. ORCID: 0009-0007-4867-3799}}
\date{}

\maketitle

\begin{abstract}
Dynamic Epistemic Logic extends classical epistemic logic by modeling not only static knowledge but also its evolution through information updates. Among its various systems, Public Announcement Logic (PAL) provides one of the simplest and most studied frameworks for representing epistemic change, see \cite{DEL}. While the semantics of PAL is well understood as transformation of Kripke models, the existing proof theory might fail to fully capture this dynamism at the syntactic level.
In this paper we propose a step toward addressing this gap. In particular, building on the hypersequent calculus for S5 introduced in \cite{Poggiolesi2008}, we extend it with a mechanism that models the transition between epistemic models induced by public announcements. We call these structures \emph{dynamic hypersequents}. Using dynamic hypersequents, we construct a calculus for PAL and we show that it enjoys several desirable properties: admissibility of all structural rules (including contraction), invertibility of logical rules, as well as syntactic cut-elimination. \\

\noindent \textbf{Key words:} Public announcement logic, proof theory, hypersequents, cut-elimination. 

\end{abstract}

\section{Introduction}

Dynamic Epistemic Logic (DEL) \cite{baltagetal:1998,DEL} is a branch of logic that emerged in the late 1980s. It extends classical epistemic logic by modeling not only static states of knowledge but also the processes by which knowledge is revised. Today, the term DEL serves as an umbrella for a variety of extensions of epistemic logic equipped with dynamic operators. Among the simplest and historically most influential is the public announcement operator, which characterises one of the earliest and most studied forms of DEL: Public Announcement Logic (PAL) \cite{plaza1989}.

From both a Hilbert-style and a Kripke-semantic perspective, the different systems of dynamic epistemic logic display strong and distinctive features. Their most striking originality, however, lies in their semantics. Unlike in classical epistemic logic, which is concerned with static sets of possible worlds, dynamic epistemic logic focuses on transformations of these structures. After an announcement, for example, one typically moves from one model to another by altering the set of worlds or the accessibility relations between them.

The proof-theoretic side of DEL, and in particular PAL, presents a rather different picture. Several attempts have been made to develop effective proof systems for PAL. The first such attempt was the tableau method introduced in \cite{balbiani2007tableau, balbiani2010tab}, where results on soundness, completeness, and complexity were established. Later, sequent calculi were proposed. In \cite{nomura2015revising} and \cite{wu2023labelled}, two distinct labelled Gentzen-style calculi are developed. In the former, the dynamics of announcements are represented by relational atoms, while in the latter, rules are introduced that simulate reduction axioms -- reducing dynamic PAL-formulas into equivalent static ones. The first system admits a cut-elimination theorem, yielding a subformula property (modulo labels), whereas the second achieves only a weaker form of the subformula property. 
More recently, in \cite{liu2023}, a non-labelled sequent calculus was proposed, again using reduction-style rules akin to those in \cite{wu2023labelled}. In this calculus, only a pseudo-analytic cut-elimination theorem has been established.
Finally, it is worth emphasizing that algebraic approaches to dynamic epistemic logics have led to the development of proof systems that also display highly interesting structural properties, in particular for logics of information updates extending beyond PAL, see e.g. \cite{baltagandco,dyckhoffandco,frittellaandco}.

The main objective of this paper is to develop a sequent calculus for PAL that captures, by purely structural means, the dynamic character of dynamic epistemic logic, and of PAL in particular. More precisely, our approach builds on hypersequents (e.g. see \cite{avron,KULE,Restall}), and in particular on the hypersequent calculus for S5 introduced in \cite{Poggiolesi2008,Poggiolesi2010}. In this calculus, each sequent corresponds to a world in a model, and the collection of sequents represents the set of worlds in an S5 model, namely an S5 model. We extend this framework by introducing a mechanism that allows one to move from a hypersequent---representing a given model---to a new hypersequent, which corresponds to the updated model obtained through a public announcement. We refer to these new objects as \emph{dynamic hypersequents}. Dynamic hypersequents provide a way to represent a model together with its successive updates in a single object. They are therefore syntactically rich while, at the same time, avoiding the inclusion of explicit semantic elements. This combination gives rise to interesting structural properties, such as a genuine subformula property or admissibility of all structural rules (contraction rules included), while preserving the possibility of extending the framework to several different information updates. In this sense, dynamic hypersequents offer an original and expressive approach to syntactically capturing the dynamic nature of dynamic epistemic logic at the proof-theoretical level.

Within this framework, we construct a dynamic hypersequent calculus for PAL and establish several key results like the already mentioned admissibility of all structural rules, but also the invertibility of logical rules and syntactic cut-elimination. As far as we know, none of the calculi so far proposed for PAL satisfy all these features at the same time.

The paper is structured in the following way. In \emph{Section} \ref{sectionPAL} we will introduce the basic notions of PAL, whilst in \emph{Section} \ref{sectionDHS} we will introduce dynamic hypersequents as well as the calculus for PAL. \emph{Section} \ref{sectionAdmissibility} will be dedicated to the proof of the admissibility of the structural rules, and invertitbility of logical rules, and in \emph{Section} \ref{sectionSoundAndComplete} we will prove that our calculus is sound and complete with respect to PAL. Finally \emph{Section} \ref{sectionCut} will serve to prove syntactically the cut-elimination theorem. In \emph{Section} \ref{sectionConclusion} we will draw some conclusions and sketch several paths of future research.

\section{Public Announcement Logic}\label{sectionPAL}

This section provides a brief overview of the language, syntax, and semantics of PAL.
Although PAL is usually presented with a set of agents $\mathcal{A}$ and correspondingly as many epistemic modalities $K_a$ as agents $a\in \mathcal{A}$, in this work, for the sake of simplicity, we consider the single-agent version of PAL. The extension of our result to the multi-agent case is left for future research but sounds natural  in light of existing results for S5 with multiple agents \cite{Poggiolesi2013fromSingleToMany}.

\begin{definition}[Language $\lang_{PAL}$]
The language of public announcement logic $\mathcal{L}_{PAL}$ is composed of a countable set of atomic sentences,  $p, q,etc. $, the classical connectives $\neg$ and $\wedge$, the modal operator $\Box$, and the public announcement operator $[\cdot]$. Formulas of  $\lang_{PAL}$ are inductively defined as follows:

\begin{displaymath}
    A \ ::=\ p\ |\ \lnot A \ |\ (A \land A)\ |\ \K A \ |\ [A]A
\end{displaymath}
\end{definition}

We follow the standard rules for omission of the parentheses. The classical connectives $\vee, \rightarrow, \leftrightarrow$, as well as the dual modalities $\Dia$ and $\langle\cdot \rangle$ are defined as usual.

\begin{definition}[\textbf{PAL}]
    Based on the language $\mathcal{L}_{PAL}$, we introduce the Hilbert-style axiomatic system \textbf{PAL}, which is composed of the axioms and rules of inference in Table \ref{AxPAL}. As standard, we will write $\vdash_{\textbf{PAL}} A$, for the formula $A$ is derivable in the Hilbert system \textbf{PAL}.
\end{definition}

\begin{table}[h] \small
\centering
\begin{tabular}{|ll|}
\hline 
All instantiations of propositional tautologies &  \\
& \\
\textbf{Axioms of S5:} & \\
$ \K(A \imp B) \imp (\K A \imp \K B) $		& distributivity of $ \K $ (axiom K) \\ 
$ \K A \imp A $ 							& truth (axiom T) \\ 
$ \K A \imp \K\K A $ 						& positive introspection (axiom 4) \\ 
$ \lnot\K A \imp \K\lnot\K A $ 				& negative introspection (axiom 5) \\
& \\
\textbf{Reduction axioms:} & \\
$ [A]p \eq (A \imp p) $				 	    & atomic permanence \\ 
$ [A]\lnot B \eq (A \imp \lnot[A] B) $ 		& announcement and negation \\ 
$ [A](B \land C) \eq ([A]B \land [A]C) $ 	& announcement and conjunction \\ 
$ [A]\K B \eq (A \imp \K[A]B) $ 			& announcement and knowledge \\ 
$ [A][B]C \eq [A \land [A]B]C $ 			& announcement composition  \\ 
& \\
\textbf{Inference rules:} & \\
From $ A $ and $ A \imp B $, infer $ B $ 	& modus ponens (MP) \\ 
From $A$, infer $\K A$ 						& necessitation rule (Nec) \\ 
\hline
\end{tabular}
\caption{The axiomatisation \textbf{PAL}}\label{AxPAL}
\end{table}

We now present the semantics for PAL by first introducing the notion of standard epistemic model, and consequently the notions of satisfaction of a formula in a model and that of updated model.

\begin{definition}[Epistemic model]
An \emph{epistemic model} is a tuple $\ModelM$ where $W$ is a set of possible states, $\sim: W \rightarrow \mathcal{P}(W^2)$ is an accessibility relation and $V:P \rightarrow \mathcal{P}(W)$ is a valuation. For a model $\ModelM$ and a world $w \in W$, we call $(\M,w)$ a \emph{pointed epistemic model}, or simply a \emph{pointed model}.
\end{definition}

In this definition, $\sim$ is an equivalence relation, \emph{i.e.} it is reflexive, transitive, and symmetric.

\begin{definition}[Semantics of PAL]\label{defSemPAL}
Given an epistemic model $\ModelM$, the semantic relation $\satisfies$ and the updated model $M_A$ are simultaneously defined by induction as follows:
\begin{align*}
&\M,w \satisfies p                &   &\text{iff}  &   &w \in V(p) \\
&\M,w \satisfies \lnot A          &   &\text{iff}  &   &\M,w \nvDash A \\
&\M,w \satisfies A\land B         &   &\text{iff}  &   &\M,w \satisfies A \text{ and } \M,w \satisfies B \\ 
&\M,w \satisfies \K A             &   &\text{iff}  &   &\M,v \satisfies A \text{ for all } v \in W \text{ such that } w \sim v\\
&\M,w\satisfies[A]B & &\text{iff} & &\text{if }\M,w\satisfies A \text{ then } \M_A,w\satisfies B
\end{align*}

After announcement of $ A $, the updated model $ \M_A = ( W_{A},\sim_{A}, V_{A}) $ is defined as follows:
\begin{align*}
W_{A} &=  \lbrace w \in W : \M,w \satisfies A \rbrace \\
\sim_{A} &= \ \sim \inter\   W_{A} \\
V_{A}(p) &= V(p) \inter W_{A}
\end{align*}

A formula $A$ is \emph{valid}, denoted $\satisfies_{PAL} A$ if, and only if, for all models $\ModelM$ and all worlds $w \in W$, $\M,w\satisfies A$. The set of all validities is denoted PAL.
\end{definition}

\noindent Informally, to define the updated model $ \M_A $ we retain all and only the worlds in which $ A$ is true and leave the accessibility relation between remaining worlds and valuations at those worlds unchanged. In this sense, the effect of the announcement of $ A $ is to restrict the epistemic model to all and only the states where $A$ holds. Therefore, announcements can be seen as epistemic state \emph{updaters}.



For our purpose, we need to introduce a notation to denote a list of announcements. We do that in the following definition.

\begin{definition}[List of announcement]  We define a list of announcement $\alpha$ in a public announcement formula $[\alpha]B$ by induction in the following way:

\begin{itemize}
    \item $[\epsilon]B := B$
    
    \item $[\alpha \cdot A]B := [\alpha][A]B$
\end{itemize}

By consequence we have
\begin{align*}
    &\M_\alpha,w \satisfies [A]B & &\text{iff} & &\text{if } \M_\alpha,w \satisfies A \text{ then } \M_{\alpha \cdot A},w \satisfies B
\end{align*}
\end{definition}

\begin{theorem}[Soundness and Completeness of \textbf{PAL}]\label{AxComp} For all formulas $A \in \lPAL$, 
\begin{displaymath}
\vdash_{\textbf{PAL}} A \quad \text{ if, and only if, } \quad \satisfies_{PAL} A
\end{displaymath}
\end{theorem}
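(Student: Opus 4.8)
The plan is to prove the two directions separately, as is standard for soundness and completeness of a Hilbert system.

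\medskip

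\noindent\textbf{Soundness ($\Rightarrow$).} First I would show that every axiom in Table \ref{AxPAL} is valid and that the inference rules (MP) and (Nec) preserve validity; the result then follows by induction on the length of a \textbf{PAL}-derivation. The propositional tautologies and the S5 axioms are handled by the familiar argument: validity of the S5 schemas uses that each $\sim$ is an equivalence relation (reflexivity for T, transitivity for 4, symmetry together with transitivity for 5). For (Nec), if $A$ is valid then $M,w \satisfies \K A$ for every pointed model since every $\sim$-successor of $w$ also satisfies $A$; (MP) is immediate from the clause for $\imp$. The only genuinely new cases are the five reduction axioms, each of which is proved by unwinding Definition \ref{defSemPAL}. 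For instance, for atomic permanence one computes that $M,w \satisfies [A]p$ iff ($M,w\satisfies A$ implies $M_A,w\satisfies p$) iff ($M,w\satisfies A$ implies $w\in V_A(p) = V(p)\inter W_A$), and since $w\in W_A$ exactly when $M,w\satisfies A$, this collapses to $M,w\satisfies A \imp p$. The cases for negation, conjunction and $\K$ are analogous, using $W_A$, $\sim_A$ and $V_A$; announcement composition is verified by checking that $(M_A)_B$ and $M_{A\,\wedge\,[A]B}$ have the same domain, relation and valuation, which is a short set-theoretic computation. A mild point worth stating explicitly is that these equivalences must be shown to hold \emph{locally at every world of every model}, not merely globally, so that the substitution instances needed inside larger formulas are also available; this is why one proves validity of each schema rather than just of the particular top-level formula.

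\medskip

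\noindent\textbf{Completeness ($\Leftarrow$).} The strategy is the usual reduction argument specific to PAL: the reduction axioms let one rewrite any $\lPAL$-formula into a provably equivalent formula of the underlying static epistemic language (the fragment without $[\cdot]$), and then one invokes completeness of S5. Concretely, I would first define a translation $t$ on $\lPAL$ that pushes announcement operators inward and eventually eliminates them, using the right-to-left directions of the reduction axioms as rewrite rules: $[A]p \rightsquigarrow A\imp p$, $[A]\lnot B \rightsquigarrow A \imp \lnot [A]B$, $[A](B\et C)\rightsquigarrow [A]B \et [A]C$, $[A]\K B \rightsquigarrow A \imp \K[A]B$, and $[A][B]C \rightsquigarrow [A\et [A]B]C$. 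One shows this rewriting terminates — the standard measure being something like a weighted count of connectives occurring inside the scope of an announcement, or a lexicographic combination of the complexity of the announced formula with the size of its argument — so that $t(A)$ is announcement-free. By construction and propositional reasoning (each reduction axiom is a biconditional theorem, and \textbf{PAL} proves replacement of provable equivalents within any context, which itself needs a small induction on formula structure using K and Nec for the modal case), we get $\vdash_{\textbf{PAL}} A \eq t(A)$. Now suppose $\satisfies_{PAL} A$; by soundness $\satisfies_{PAL} A\eq t(A)$, hence $\satisfies_{PAL} t(A)$, and since $t(A)$ is announcement-free its PAL-validity coincides with its S5-validity; by completeness of S5 (which we take as known), $\vdash_{\textbf{S5}} t(A)$, therefore $\vdash_{\textbf{PAL}} t(A)$, and finally $\vdash_{\textbf{PAL}} A$ using the proven equivalence and (MP).

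\medskip

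\noindent\textbf{Main obstacle.} The soundness direction is routine. The delicate part of completeness is the termination of the translation: the composition axiom $[A][B]C \eq [A\et [A]B]C$ replaces nested announcements by a single one but with a syntactically larger announced formula, so a naive subformula-count does not decrease. One must design the right well-founded measure — decreasing the \emph{nesting depth} of announcements while controlling the growth in breadth — and verify that every rewrite strictly decreases it; this bookkeeping, together with the lemma that $\vdash_{\textbf{PAL}}$ is closed under substitution of provable equivalents in arbitrary contexts, is where essentially all the work lies. Since this is the classical completeness proof for PAL (see \cite{DEL}), I would present the translation and the measure precisely but cite the standard reference for the full verification.
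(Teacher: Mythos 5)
Your proposal is correct and is precisely the standard argument (soundness by validity of each reduction axiom, completeness by a terminating translation into the announcement-free fragment plus S5 completeness) that the paper itself does not spell out but simply defers to the reference \cite{DEL}. You also correctly identify the one delicate point, namely the well-founded complexity measure needed to handle the composition axiom in the termination argument, which is exactly where the cited proof does its real work.
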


\begin{proof} The proof can be found in \cite{DEL}.
\end{proof}


\section{Dynamic hypersequents for PAL}
\label{sectionDHS}

In this section, we introduce \emph{dynamic hypersequents}, which extend classical hypersequents by incorporating a dynamic mechanism that enables transitions from one hypersequent to another. We introduce them step-by-step, starting from the standard notion of sequent.

\begin{definition}[Sequent]
A \emph{sequent} is a syntactic object of the form $M\Imp N$, where $M$ and $N$ are finite multisets of formulas. We denote sequents by capital greek letters $\Gamma, \Delta, ...$. By consequence, occurrences of specific formulas in sequents, e.g. $A, M\Rightarrow N$ or $M\Rightarrow N, A$, will be denoted by $A, \Gamma$ and $\Gamma, A$, respectively. Finally, when $\Gamma = M \Imp N$ and $\Delta = P \Imp Q$, the sequent $M,P \Imp N,Q$ is denoted $\Gamma\Delta$.
\end{definition}

\begin{definition}[Dynamic Sequent]
Let $\Gamma_1,\cdots,\Gamma_n$ be sequents and $\alpha_1,\cdots,\alpha_n$ be finite sequences of announcements pairwise distinct, \emph{i.e.} $\alpha_i\neq \alpha_j$ for all $1\leq i\neq j \leq n$. Then $\update_{\alpha_1}\Gamma_1 \update \cdots \update_{\alpha_n}\Gamma_n$ is a \emph{dynamic sequent}. For the empty sequence $\epsilon$ we write $\Gamma$ instead of $\update_\epsilon \Gamma$. In particular then $\Gamma$ is a dynamic sequent.
\end{definition}

Intuitively, a dynamic sequent represents a world $w$ in some model $\ModelM$, together with its copies in successively updated models $\M_\alpha$ that $w$ belongs to. For instance the dynamic sequent $M\Imp N \update_A M' \Imp N'$ represents the world $w$ in the initial model $\M$ as well as in the updated model $\M_A$. 

Note that in a dynamic sequent $X = \Gamma \update_{\alpha_1} \Gamma_1 \update \cdots \update_{\alpha_n} \Gamma_n$, we call $\update_\alpha \ \Gamma$ the $\alpha$-projection of $X$, or simply the \emph{$\alpha$-sequent}. Note that by definition, for any $\alpha$, the $\alpha$-projection of $X$ is unique. In particular, we call $\Gamma$ the \emph{initial sequent}.

\begin{definition}[Dynamic Hypersequent]
Let $X_1,\cdots,X_n$ be dynamic sequents. Then, $G := X_1 \sep \cdots \sep X_n$ is a \emph{dynamic hypersequent} (\emph{DHS} for short).
\end{definition}

Intuitively,  a dynamic hypersequent is the syntactic representation of a model $\ModelM$ and its successive updates $\M_\alpha$. For a better understanding, we can think of the set of updated models as a two-dimension table, where each row represents a world and its copies (if any) in the successive updated models, and each column represents an updated model, with the different worlds that belong to it. Dynamic hypersequents are a linear representation of such a table. As an example, the following table represents a model $\M$ with three worlds $w, u, v$, and three updated models obtained from $\M$ by announcing respectively $A$, $A$ and then $B$, and $C$.\footnote{In this table, an empty cell indicates that the corresponding world does not belong to the updated model.} 

\begin{center}
\begin{tabular}{|c|ccc|}
\hline 
$\M$ & $\M_A$ & $\M_{A\cdot B}$ & $\M_C$ \\ 
\hline
$w$ & $w'$ &  & $w''$ \\ 
$u$ &  &  & $u'$ \\ 
$v$ & $v'$ & $v''$ & $v'''$ \\ 
\hline 
\end{tabular} 
\end{center}

The  following dynamic hypersequent stands for the syntactic representation of the table above, where sequents $\Gamma,\Delta$ and $\Lambda$ represent worlds $w,u$ and $v$, respectively.
\begin{displaymath}
\Gamma \update_A \Gamma' \update_C \Gamma'' \sep \Delta \update_C \Delta' \sep \Lambda \update_A \Lambda' \update_{A \cdot B} \Lambda'' \update_C \Lambda'''
\end{displaymath}

In a hypersequent $G$, if we only consider initial sequents, then we get the syntactic representation of the initial model $\M$, called \emph{initial hypersequent}. Similarly, if we only consider $\alpha$-sequents, we obtain the $\alpha$ projection of $G$, that is a representation of the updated model $\M_\alpha$: this is called  \emph{$\alpha$-hypersequent}.


\begin{definition}[Interpretation]
The interpretation $\tau$ of a sequent $M \Imp N$ is the standard one, namely: $(M \Imp N)^\tau := \bigwedge M \imp \bigvee N$. We extend it to dynamic sequents as follows:
\begin{displaymath}
\update_{\alpha_1}\Gamma_1 \update \cdots \update_{\alpha_n} \Gamma_n := [\alpha_1]\Gamma_1^\tau \lor \cdots \lor [\alpha_n]\Gamma_n^\tau
\end{displaymath}
where, for all $1 \leq i \leq n$, $\alpha_i = A_{i_1}\cdot \cdots \cdot A_{i_{k_i}}$ and $[\alpha_i]B = [A_{i_1}]\cdots[A_{i_{k_i}}]B$, for $k_i \in \Nat$.

Finally, we extend the interpretation function $\tau$ to dynamic hypersequents as follows:
\begin{displaymath}
(X_1 \sep \cdots \sep X_n)^\tau := \K X_1^\tau \lor \cdots \lor \K X_n^\tau
\end{displaymath}
\end{definition}

We now have all ingredients to introduce the calculus $\calculus$, which is composed by the following axioms and inference rules.\\

\noindent \textbf{Axioms:}
\begin{flalign*}
    &\begin{prooftree}
    	\hypo{G \sepC X \update_\alpha p,M \Imp N,p}
    \end{prooftree} &
\end{flalign*}

\noindent \textbf{Propositional rules:}
\begin{flalign*}
    &\begin{prooftree}[regular]
		\hypo{ G \sepC X \update_\alpha M \Imp N, A }
		\infer1[(L$\lnot$)]{G\sepC X \update_\alpha \lnot A, M \Imp N}
	\end{prooftree} \hspace{6em}
	&
	&\begin{prooftree}[regular]
		\hypo{ G \sepC X \update_\alpha A, M \Imp N }
		\infer1[(R$\lnot$)]{G\sepC X \update_\alpha M \Imp N,\lnot A}
	\end{prooftree} & \\
	\hfill & \\
	&\begin{prooftree}[regular]
		\hypo{ G \sepC X  \update_\alpha A,B, M \Imp N }
		\infer1[(L$\land$)]{G\sepC X  \update_\alpha A\land B, M \Imp N}
	\end{prooftree} 
	& 
	&\begin{prooftree}[regular]
		\hypo{ G \sepC X \update_\alpha M \Imp N,A }
		\hypo{ G \sepC X \update_\alpha M \Imp N,B}
		\infer[separation = 0.8em]2[(R$\land$)]{G\sepC X \update_\alpha M \Imp N,A \land B}
	\end{prooftree} & \\
\end{flalign*}

\noindent \textbf{Modal rules:}
\begin{flalign*}
	&\begin{prooftree}[regular]
		\hypo{ G \sepC X \update_\alpha \K A,A, M \Imp N }
		\infer1[(L$\K_1$)]{G\sepC X \update_\alpha \K A, M \Imp N}
	\end{prooftree} 
	& 
	&\begin{prooftree}[regular]
		\hypo{ G \sepC X \update_\alpha M \Imp N \sep \update_\alpha \Imp A }
		\infer1[(R$\K$)]{G\sepC X \update_\alpha M \Imp N,\K A}
	\end{prooftree} & \\
	\hfill & \\
	&\begin{prooftree}[regular]
    	\hypo{ G \sepC X \update_\alpha \K A, M \Imp N \sepC Y \update_\alpha A,P \Imp Q}
    	\infer1[(L$\K_2$)]{ G \sepC X \update_\alpha \K A,M \Imp N \sepC Y \update_\alpha P \Imp Q}
	\end{prooftree} &
\end{flalign*}
\begin{flalign*}
	&\begin{prooftree}[regular]
    	\hypo{G \sepC \overline{X} \sep Y \update_{\alpha} \Delta, B_1}
    	\hypo{G \sepC \overline{X} \sep Y \update_\alpha \Delta \update_{\alpha \cdot B_1} \Imp B_2}
    	\hypo{\cdots}
    	\hypo{G \sepC \overline{X} \sep Y \update_\alpha \Delta \update_{\alpha\cdot\beta} A \Imp }
   		\infer4[(L$\K_3$)]{G \sepC X \update_{\alpha\cdot\beta} \K A, \Gamma \sep Y \update_\alpha \Delta}
	\end{prooftree} & \\
    \hfill\\
	&\text{where $\beta = B_1 \cdot B_2 \cdot \cdots \cdot B_n$ and $\overline{X}=X \update_{\alpha\cdot\beta} \K A, \Gamma$} & \\
\end{flalign*}

\noindent \textbf{Announcement rules:}
\begin{flalign*}
	&\begin{prooftree}[regular]
    	\hypo{G \sepC X \update_\alpha M \Imp N,A \update_{\alpha \cdot A} M' \Imp N'}
    	\hypo{G \sepC X \update_\alpha M \Imp N \update_{\alpha\cdot A} B,M' \Imp N'}
    	\infer2[(L$[\cdot]$)]{ G \sepC X \update_\alpha [A]B, M \Imp N \update_{\alpha\cdot A} M' \Imp N' } 
	\end{prooftree} \hfill & \\
	\hfill & \\
	&\begin{prooftree}[regular]
    	\hypo{G \sepC X \update_\alpha A, M \Imp N \update_{\alpha\cdot A} M' \Imp N', B}
    	\infer1[(R$[\cdot]$)]{ G \sepC X \update_\alpha M \Imp N, [A]B \update_{\alpha\cdot A} M' \Imp N' }
	\end{prooftree} & \\
\end{flalign*}

\noindent \textbf{Dynamic rules:}
\begin{flalign*}
	&\begin{prooftree}[regular]
    	\hypo{G \sepC X \update_\alpha p,M \Imp N \update_{\alpha \cdot A} p, M' \Imp N'}
    	\infer1[(Lat)]{G \sepC X \update_\alpha M \Imp N \update_{\alpha \cdot A} p,M'\Imp N'}
	\end{prooftree}
	&
	&\begin{prooftree}[regular]
    	\hypo{G \sepC X \update_\alpha M \Imp N, p \update_{\alpha \cdot A} M' \Imp N', p}
    	\infer1[(Rat)]{G \sepC X \update_\alpha M \Imp N \update_{\alpha \cdot A} M' \Imp N',p}
	\end{prooftree} & \\
	\hfill & \\
	&\begin{prooftree}[regular]
    	\hypo{G \sepC X \update_\alpha \Imp \update_{\alpha \cdot A} \, M \Imp N}
    	\infer1[(New)]{G \sepC X \update_{\alpha \cdot A} M \Imp N}
	\end{prooftree}
	&
	&\begin{prooftree}[regular]
    	\hypo{G \sepC X \update_\alpha A,M \Imp N \update_{\alpha \cdot A} M' \Imp N'}
    	\infer1[(Recall)]{G \sepC X \update_\alpha M \Imp N \update_{\alpha \cdot A} M' \Imp N'}
	\end{prooftree} &\\
\end{flalign*}

In both the announcement rules (L$[\cdot]$) and (R$[\cdot]$), if the dynamic sequent $\update_{\alpha\cdot A}\ M' \Imp N'$ does not occur in the conclusion, we add it to their premises with $M'$ and $N'$ instantiated as empty multisets. So, for example, the rule (L$[\cdot]$) might have the following form:
\begin{center}
	\begin{prooftree}[regular]
    	\hypo{G \sepC X \update_\alpha M \Imp N,A }
    	\hypo{G \sepC X \update_\alpha M \Imp N \update_{\alpha\cdot A} B\Imp}
    	\infer2[(L$[\cdot]$)]{ G \sepC X \update_\alpha [A]B, M \Imp N} 
	\end{prooftree}
\end{center}

\noindent In the rest of the paper, (L$[\cdot])_1$ and (R$[\cdot])_1$ will denote instances of the rules where the dynamic sequent $\update_{\alpha\cdot A}\ M' \Imp N'$ occurs in the conclusion;   (L$[\cdot])_2$ (R$[\cdot])_2$ will denote instances of the rules where the dynamic sequent $\update_{\alpha\cdot A}M' \Imp N'$ does not occur in the conclusion. 

    

We now turn to a more informal discussion of the rules of the calculus. The propositional rules are the standard ones from classical logic, adapted to the dynamic hypersequent framework. The modal rules---(L$\K_1$), (L$\K_2$), and (R$\K$)---are essentially those introduced in \cite{Poggiolesi2008}, though once again reformulated to fit within the dynamic hypersequent setting.

The announcement rules become intuitive when viewed through the Kripke semantics satisfability relation of public announcement formulas. Consider the rules read bottom-up. For (R$[\cdot]$), let $\ModelM$ be a model and $w$ be a world of that model represented by the dynamic sequent $X$. If $[A]B$ is false at $(\M_\alpha,w)$, then $A$ must be true at $w$ (so that it can indeed be announced): this is displayed via the sequent $\update_\alpha \ A, M \Imp N$. After this announcement, however, $B$ is false at $w$ in the further updated model $\M_{\alpha \cdot A}$, which is displayed by the sequent $\update_{\alpha \cdot A} \ M'\Imp N',B$.

Similarly, for (L$[\cdot]$): if $[A]B$ is true at $w$ in $\M_\alpha$, then one of the following two options must hold: either $A$ is false at $(\M_\alpha,w)$, in which case the announcement cannot be made, or else $A$ is true at $(\M_\alpha,w)$, so that $w \in \M_{\alpha \cdot A}$, and $B$ holds at $(\M_{\alpha \cdot A},w)$. The former case is represented in the left-premise by $\update_\alpha\ M \Imp N, A$ while the latter is displayed in the right-premise by $\update_{\alpha \cdot A}\ B, M' \Imp N'$.

The dynamic rules capture important structural properties of models and their updates. Rules (Lat) and (Rat) express atomic permanence: read bottom-up, (Lat) for example states that if $p$ holds at $w$ in the updated model $\M_{\alpha \cdot A}$---represented by $\update_{\alpha \cdot A}\ p,M' \Imp N'$---then $p$ must also hold at $w$ in the previous model $\M_\alpha$---which corresponds to $\update_{\alpha}\ p,M \Imp N$. Rule (New) ensures that if $w \in \M_{\alpha \cdot A}$, then $w$ must already belong to the previous model $\M_\alpha$: hence from $\update_{\alpha \cdot A}\ M \Imp N$ we get $\update_\alpha \Imp$. Finally, (Recall) requires that if $w \in \M_{\alpha \cdot A}$---displayed by $\update_{\alpha \cdot A}\ M' \Imp N'$ ---, then $A$ must be true at $(\M_\alpha,w)$, which corresponds to $\update_\alpha\ A, M \Imp N$.

Finally, rule (L$\Box_3$) should be seen as a dynamic modal rule, since it displays the relationship between knowledge and announcements. It can be understood as follows: for $\K A$ to be true at some world $w$ in some updated model $\M_{\alpha \cdot \beta}$, every accessible world in $\M_{\alpha \cdot \beta}$ must satisfy $A$. Hence, for any other world $v\in \M_\alpha$, either $v\notin \M_{\alpha \cdot \beta}$, or $A$ is true at $v$ in $\M_{\alpha \cdot \beta}$. The latter case is represented by the rightmost premise of the rule. The former case, on the other hand, is represented by all the remaining premises: indeed, for $v \in \M_\alpha$ not to belong to $\M_{\alpha \cdot \beta}$, one of the following must hold (where $\beta = B_1 \cdots B_n$): $B_1$ is false at $(\M_\alpha,v)$, so $v$ does not belong to $\M_{\alpha\cdot B_1}$, or $v \in \M_{\alpha\cdot B_1}$ but $B_2$ is false at $(\M_{\alpha\cdot B_2},w)$, etc. This is why (L$\K_3$) has $n+1$ premises, when the sequence $\beta$ is composed of $n$ announcements.

 As an example, consider the following instance of the rule, where $\beta = B$ and $G$ is empty:

\begin{center}
\begin{prooftree}[regular]
    \hypo{ X \update_{\alpha\cdot B} \K A, \Gamma \sep Y \update_\alpha \Delta, B}
    \hypo{ X \update_{\alpha\cdot B} \K A, \Gamma \sep Y \update_\alpha \Delta \update_{\alpha \cdot B} A \Imp }
    \infer2[(L$\K_3$)]{ X \update_{\alpha\cdot B} \K A, \Gamma \sep Y \update_\alpha \Delta }
\end{prooftree}
\end{center}
Let $\ModelM$ be a model where $W = \lbrace w,v \rbrace$. Worlds $w, v$ are represented by $X \update_{\alpha\cdot B} \K A, \Gamma$ and $Y \update_{\alpha}  \Delta$ respectively. Then, we have $w \in W_{\alpha\cdot B}$ (which implies $w \in W_\alpha$) and $v \in W_{\alpha}$. For $\M_{\alpha \cdot B},w \satisfies \K A$ to hold, either $v \notin W_{\alpha \cdot B}$ or $\M_{\alpha \cdot B}, v \satisfies A$. The first case arises when $\M_\alpha, v \nvDash B$, which corresponds, in the first premise, to the sequent $\update_\alpha \ \Delta,B$. The second case is captured, in the second premise, by the sequent $\update_{\alpha\cdot B} \, A \Imp$.

\begin{definition} \emph{Derivations} in $\calculus$ are defined in the standard way and are denoted by $d, d', etc.$. The \emph{height} of a derivation $d$, denoted $h(d)$, is also inductively defined in the standard way (see \cite{BasicProofTheoryTroelstra}). As usual, we write $\vdash_{\calculus} G$ to denote that the dynamic hypersequent $G$ is derivable in the calculus \textbf{DHS}$_{PAL}$.
\end{definition}


\section{Admissibility of the structural rules and invertibility of the logical rules}\label{sectionAdmissibility}

In this section, we show that the structural rules of weakening, contraction and merge are hp-admissible, and that all logical rules are hp-invertible. We also introduce new dynamic structural rules and we prove their hp-admissibility.

First of all, we introduce the following notion that will prove useful to define the structural rules.
\begin{definition}
    Let $X = \update_{\alpha_1} \, \Gamma_1 \update \cdots \update_{\alpha_n} \Gamma_n$ be a dynamic sequent. The labels of $X$ are all the sequences of announcements that label some sequent in $X$, \emph{i.e.} $labels(X)= \lbrace \alpha_1, \ldots, \alpha_n \rbrace$. 
\end{definition}


\begin{lemma}\label{lemmaDerivabilityTrivialSequents}
All dynamic hypersequents of the form $G \sep X \update_\alpha A, M \Imp N, A$ are derivable.
\end{lemma}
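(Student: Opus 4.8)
The plan is to prove this by induction on the structure of the formula $A$, showing in each case that $G \sep X \update_\alpha A, M \Imp N, A$ is derivable by decomposing the two displayed occurrences of $A$ until we reach an instance of the atomic axiom. This is the standard "identity expansion" argument, adapted to the dynamic hypersequent setting, and it is made possible by the fact that every logical rule acts on a single dynamic sequent within a dynamic hypersequent, leaving the context $G \sep X \update_\alpha \cdots$ untouched.

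For the base case $A = p$, the sequent is literally an instance of the axiom scheme $G \sep X \update_\alpha p, M \Imp N, p$, so there is nothing to do. For the propositional cases $A = \lnot B$ and $A = B \land C$, I would apply the corresponding left and right rules to the two occurrences of $A$ and invoke the induction hypothesis on the subformulas; for instance, for $A = B \land C$ one applies (L$\land$) to the antecedent occurrence and (R$\land$) to the succedent occurrence, reducing to derivations of $G \sep X \update_\alpha B, C, M \Imp N, B$ and $G \sep X \update_\alpha B, C, M \Imp N, C$, each of which follows from the induction hypothesis for $B$ (resp. $C$) together with (hp-)admissibility of weakening — though since weakening will be proved shortly after this lemma, I would instead note that the induction hypothesis can be stated with arbitrary side multisets $M, N$ absorbed, so weakening is not actually needed: the IH already gives $G \sep X \update_\alpha B, M' \Imp N', B$ for all $M', N'$. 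The modal case $A = \K B$ is the key one: apply (R$\K$) to the succedent occurrence to obtain $G \sep X \update_\alpha \K B, M \Imp N \sep \update_\alpha \Imp B$, then apply (L$\K_2$) (with the newly created component as the "$Y$" component) to move $B$ into its antecedent, yielding $G \sep X \update_\alpha \K B, M \Imp N \sep \update_\alpha B \Imp B$; the right component is now an instance closable by the induction hypothesis for $B$ (in the degenerate case $B$ atomic it is an axiom), and one also needs (L$\K_1$) is not required here since $\K B$ can simply be carried along. For the announcement case $A = [B]C$, apply (R$[\cdot]$) to the succedent occurrence, producing $G \sep X \update_\alpha B, [B]C, M \Imp N \update_{\alpha \cdot B} \Imp C$ (introducing the $\update_{\alpha\cdot B}$ sequent via the (R$[\cdot])_2$ convention if needed), then apply (L$[\cdot]$) to the antecedent occurrence of $[B]C$: the left premise is $G \sep X \update_\alpha B, M \Imp N, B \update_{\alpha\cdot B} \Imp C$, closable by the IH for $B$, and the right premise is $G \sep X \update_\alpha B, M \Imp N \update_{\alpha\cdot B} C \Imp C$, closable by the IH for $C$.

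The main obstacle I anticipate is bookkeeping in the modal and announcement cases: one must be careful that the side dynamic sequents and hypersequent components created by (R$\K$), (R$[\cdot]$) — and governed by the conventions about whether $\update_{\alpha\cdot A}\ M' \Imp N'$ already occurs — line up exactly with the shape required by the left rules (L$\K_2$), (L$[\cdot]$), so that the two occurrences of $A$ are genuinely "facing" each other in the same dynamic sequent after the decomposition. A secondary subtlety is that this lemma is proved before weakening is established, so the induction hypothesis must be stated in the general form (with arbitrary $G$, $X$, $\alpha$, $M$, $N$) and applied directly, rather than derived from a special case by weakening. Once the statement is phrased generally, each inductive step reduces cleanly to the induction hypothesis at smaller subformulas, with no appeal to structural rules.
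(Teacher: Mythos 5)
Your proof is correct and takes essentially the same approach as the paper: the paper's proof is the one-line ``by standard induction on the complexity of $A$,'' and your case analysis --- axiom for atoms, (L$\lnot$)/(R$\lnot$), (L$\land$)/(R$\land$), (R$\K$) followed by (L$\K_2$), and (R$[\cdot])_2$ followed by (L$[\cdot])_1$ --- is precisely the standard identity-expansion argument that line refers to. Your precaution of stating the induction hypothesis with arbitrary $G$, $X$, $\alpha$, $M$, $N$ so as to avoid any appeal to weakening (which is only proved afterwards) is exactly right.
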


\begin{proof} By standard induction on the complexity of $A$.
\end{proof}



\begin{lemma}\label{lemmaAdmissibilityIW}
The rule of \emph{internal weakening} is hp-admissible.
\begin{displaymath}
\begin{prooftree}[regular]
    \hypo{G \sep X \update_\alpha M \Imp N}
    \infer1[ (IW)]{G \sep X \update_\alpha P,M \Imp N,Q}
\end{prooftree}
\end{displaymath}
\end{lemma}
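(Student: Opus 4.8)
The plan is to prove hp-admissibility of (IW) by induction on the height $h(d)$ of the derivation $d$ of the premise $G \sep X \update_\alpha M \Imp N$, following the standard Troelstra–Schwichtenberg strategy but adapted to the dynamic hypersequent setting. In the base case, $d$ is an axiom, i.e. $G \sep X \update_\alpha M \Imp N$ contains a component of the form $\update_\gamma p, \Pi \Imp \Sigma, p$; weakening the multisets of one sequent of one dynamic sequent clearly preserves this, so $G \sep X \update_\alpha P, M \Imp N, Q$ is still an axiom (note the weakened sequent need not be the one carrying the axiomatic $p$'s). For the inductive step, I would consider the last rule $\Rule$ applied in $d$ and show that (IW) commutes upward past $\Rule$: apply the induction hypothesis to the premise(s) of $\Rule$ to weaken them with the same $P, Q$ in the same position, then reapply $\Rule$. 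Since (IW) only changes the passive part of sequents, most rules go through transparently.

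The cases to check are the propositional rules (L$\lnot$), (R$\lnot$), (L$\land$), (R$\land$); the modal rules (L$\K_1$), (L$\K_2$), (R$\K$), (L$\K_3$); the announcement rules (L$[\cdot]$), (R$[\cdot]$) (in both their forms, with and without the extra dynamic sequent in the conclusion); and the dynamic rules (Lat), (Rat), (New), (Recall). In each case one picks the occurrence of $\update_\alpha M \Imp N$ inside $G \sep X$ that is being weakened and locates it among the components/sequents in the conclusion of $\Rule$. If it is a component untouched by $\Rule$ (i.e. it lies in the side hypersequent $G$, or in the parametric part of $X$ or $Y$), then the same occurrence appears in every premise and one simply applies the IH to each premise and reapplies $\Rule$. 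If the weakened sequent is the principal sequent of $\Rule$ — for instance the sequent $\update_\alpha M \Imp N$ that $\Rule$ acts on, or a sequent that $\Rule$ creates such as the $\update_\alpha \Imp A$ introduced by (R$\K$) or the fresh $\update_{\alpha\cdot A}\, M \Imp N$ in (New) — one checks that adding $P$ to the antecedent and $Q$ to the succedent of that sequent in the premise(s) still leaves a correctly-shaped premise of $\Rule$ with the same (weakened) conclusion; this is immediate because all these rules are schematic in the passive multisets. A mild bookkeeping point arises for (R$\K$), where the newly created sequent $\update_\alpha \Imp A$ shares the projection tag $\alpha$ with $\update_\alpha M \Imp N$: one must be careful that we are weakening the sequent $\update_\alpha M \Imp N, \K A$ in the conclusion (legitimate) and correspondingly the sequent $\update_\alpha M \Imp N$ in the premise, not the auxiliary $\update_\alpha \Imp A$.

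The genuinely dynamic rules are the only place where one should slow down. For (L$[\cdot]$), (R$[\cdot]$), (Lat), (Rat) and (Recall), the conclusion involves a single dynamic sequent $X$ carrying two projections $\update_\alpha M \Imp N$ and $\update_{\alpha\cdot A} M' \Imp N'$; (IW) as stated weakens exactly one sequent inside one dynamic sequent, so whichever of these two projections is weakened, the corresponding projection in each premise is weakened by the IH and the rule is reapplied, the other projection being left alone. For (New), if the weakened sequent is the conclusion's $\update_{\alpha\cdot A} M \Imp N$, then in the premise it is again $\update_{\alpha\cdot A} M \Imp N$ that gets weakened (not the empty $\update_\alpha \Imp$), so the reapplication is sound. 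For (L$\K_3$), which has $n+1$ premises and duplicates the component $\overline{Y} = Y \update_{\alpha\cdot\beta} \K A, \Delta$ across all of them, one simply notes that the weakened occurrence is either inside $G$, inside $X$, or inside the parametric parts of $Y$, $\Gamma$, or $\Delta$, and in every premise that occurrence is present verbatim; apply the IH to each of the $n+1$ premises and reapply (L$\K_3$). The main obstacle, such as it is, is purely one of exhaustiveness: making sure that for each rule every possible location of the weakened sequent — in the side hypersequent, in a parametric sequent, or as (part of) a principal or newly-created sequent — is covered and that the induction hypothesis is applied to premises of strictly smaller height, so that the resulting derivation of $G \sep X \update_\alpha P, M \Imp N, Q$ has height at most $h(d)$.
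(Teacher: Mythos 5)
Your proposal is correct and follows essentially the same route as the paper, which proves hp-admissibility of (IW) by a straightforward induction on the height of the derivation of the premise, commuting the weakening past the last applied rule in each case. The additional care you take with the cases where the weakened sequent coincides with a principal or newly-created projection (e.g.\ in (R$\K$), (New), and (L$\K_3$)) is exactly the bookkeeping the paper leaves implicit.
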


\begin{proof}
    The proof proceeds by straightforward induction on the height of the derivation of the premise. 
\end{proof}

\begin{lemma}\label{lemmaAdmissibilityEW}
The rule of \emph{external weakening} is hp-admissible.
\begin{displaymath}
\begin{prooftree}[regular]
    \hypo{G}
    \infer1[(EW)]{G \sep X}
\end{prooftree}
\end{displaymath}
\end{lemma}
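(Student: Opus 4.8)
The plan is to prove hp-admissibility of external weakening by induction on the height $h(d)$ of the derivation $d$ of the premise $G$, producing a derivation of $G \sep X$ of no greater height. Since $X$ is an arbitrary dynamic sequent, it is convenient to first observe (or note informally) that $X$ can be obtained from a sequent of the form $\update_\epsilon\, \Rightarrow$ by repeated applications of the already-available hp-admissible internal weakening (Lemma~\ref{lemmaAdmissibilityIW}) together with the dynamic mechanism; but in fact it is cleaner simply to carry the whole component $X$ along passively through the induction, since none of the rules of $\calculus$ inspect or act on hypersequent components other than those they explicitly name.

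The base case is immediate: if $G$ is an axiom, then $G$ has the shape $G' \sep Y \update_\beta p, M \Rightarrow N, p$, and appending a new component $\sep X$ still leaves an instance of the axiom scheme (the axiom scheme has the side-hypersequent $G$ in an arbitrary position, so enlarging it is harmless). For the inductive step, suppose $G$ is obtained by a rule $\Rule$ from premises $G_1, \dots, G_k$ (with $k \in \{1,2,4\}$). Each $G_i$ is of the form $G_i' \sep (\text{active material})$, where the active material lives in components named by $\Rule$, and the side-hypersequent $G_i'$ is inherited from the corresponding side-hypersequent of the conclusion $G$. Apply the induction hypothesis to each $d_i$ with the extra component $X$, obtaining height-preserving derivations of $G_i \sep X$. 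Then reapply the \emph{same} rule $\Rule$: this is legitimate because every rule of $\calculus$ is stated with a generic side-hypersequent $G$ (written $G \sepC \dots$ in all the schemas), so an instance with conclusion $G$ immediately yields an instance with conclusion $G \sep X$, the premises being exactly $G_i \sep X$. The resulting derivation of $G \sep X$ has height $h(d)$.

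The one point deserving care is that several rules — notably (R$\K$), (L$\K_2$), (L$\K_3$), and the announcement and dynamic rules — mention \emph{more than one} component or introduce a fresh component in a premise (e.g. the premise $\sep \update_\alpha \Rightarrow A$ in (R$\K$), or the second named component $Y \update_\alpha \dots$ in (L$\K_2$)). In each such case the new component $X$ we are adding is still disjoint from all components the rule touches, so it is simply part of the generic context $G$ in the schema; no eigenvariable-style side condition of these rules constrains the other components, so reapplying the rule is unproblematic. Hence the induction goes through and external weakening is hp-admissible. I do not expect a genuine obstacle here; the only thing to be vigilant about is making sure that adding $X$ never collides with the "fresh component" requirements implicit in (R$\K$) and (L$\K_3$) — which it does not, precisely because $X$ is appended as an entirely new component rather than identified with any component already present.
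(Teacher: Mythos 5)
Your proof is correct and follows essentially the same route as the paper: induction on the height of the derivation of $G$, noting that axioms remain axioms under an extra component and that every rule is stated with a generic side-hypersequent into which the new component $X$ can be absorbed before reapplying the same rule. The paper's proof is exactly this argument, illustrated with the (R$[\cdot])_2$ case.
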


\begin{proof}
    The proof proceeds by induction on the height of the derivation of the premise. If $G$ is an axiom, so is $G \sep X$. If $G$ is not an axiom, we distinguish cases on the last applied rule $ \Rule$: in each of the cases, we apply the inductive hypothesis to the premise(s) of $ \Rule$ and then use $ \Rule$ again. We only show one example, whilst the others can be treated similarly.

    If the last applied rule is (R$[\cdot])_2$, we proceed as follows:
    \begin{center}
        \begin{prooftree}[regular]
            \hypo{}
            \ellipsis{$d$}{  G \sep Y \update_\alpha A, \Gamma \update_{\alpha \cdot A} \Imp B}
            \infer1[(R$[\cdot])_2$]{  G \sep Y \update_\alpha \Gamma, [A]B}
        \end{prooftree}
        $\byIH$\footnote{From now on, by the symbol $\byIH$, we will denote the fact that the premise(s) of the inference on the right is obtained by inductive hypothesis on the premise(s) of the inference on the left.} \
        \begin{prooftree}[regular]
            \hypo{}
            \ellipsis{$d'$}{  G \sep Y \update_\alpha A, \Gamma \update_{\alpha \cdot A} \Imp B \sep X}
            \infer1[(R$[\cdot])_2$]{  G \sep Y \update_\alpha \Gamma, [A]B \sep X}
        \end{prooftree}
    \end{center}
\end{proof}

\begin{lemma}\label{lemmaAdmissibilityDW}
The rule of \emph{dynamic weakening} is hp-admissible, where $\alpha \notin labels(X)$.
\begin{displaymath}
    \begin{prooftree}[regular]
        \hypo{G \sep X}
        \infer1[(DW)]{G \sep X \update_\alpha M \Imp N}
    \end{prooftree}
\end{displaymath}
\end{lemma}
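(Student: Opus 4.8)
The plan is to proceed by induction on the height of the derivation $d$ of the premise $G \sep X$, reasoning by cases on the last rule applied in $d$. Two remarks streamline the argument. First, since the order of the dynamic sequents inside a dynamic hypersequent is immaterial, it suffices to treat the dynamic sequent displayed on the right. Second, we always add the new projection $\update_\alpha M \Imp N$ at the very end of the dynamic sequent being weakened; as dynamic sequents may, by definition, carry several projections with the same list of announcements, this placement is always legitimate, and, crucially, it keeps the new projection out of the way of the local configuration that each rule requires of the projections it acts on. For the base case, if $G \sep X$ is an axiom then the atom $p$ witnessing it occurs on both sides of some component of some dynamic sequent of $G \sep X$; appending a projection to $X$ deletes no component, so $G \sep X \update_\alpha M \Imp N$ is again an axiom.

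For the inductive step, let $\Rule$ be the last rule of $d$. If $\Rule$ does not act on $X$, then $X$ occurs unchanged in every premise; we apply the induction hypothesis to obtain derivations of no greater height of the premises with $\update_\alpha M \Imp N$ appended to $X$, and then reapply $\Rule$, which remains a correct instance because the new projection is pure context. If $\Rule$ does act on $X$ --- possibly jointly with a second dynamic sequent, as in (L$\K_2$), (L$\K_3$) and (R$\K$) --- then, reading the rule upwards, the descendant of $X$ in each premise is obtained from $X$ by adding formulas to one of its components and/or by inserting one or more fresh projections, never by deleting a component. We again invoke the induction hypothesis, appending $\update_\alpha M \Imp N$ at the end of that descendant, and reapply $\Rule$: since the manipulation performed by $\Rule$ is confined to components already present in $X$ (and their immediate neighbours in the list), while the appended projection lies behind all of them, $\Rule$ is still applicable and yields precisely $G \sep X \update_\alpha M \Imp N$.

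The only configuration calling for a different move is when $\Rule$ is (L$[\cdot]$)$_2$ or (R$[\cdot]$)$_2$ --- so that a fresh projection at some list $\alpha' \cdot A$ appears in the premises but not in the conclusion --- and the projection $\update_\alpha M \Imp N$ we are adding sits precisely at that list $\alpha' \cdot A$. In that case the conclusion-to-be already contains a projection at $\alpha' \cdot A$, so the relevant instance of (L$[\cdot]$)$_2$, resp. (R$[\cdot]$)$_2$, is no longer available; instead we first use the hp-admissibility of internal weakening (Lemma \ref{lemmaAdmissibilityIW}) to introduce the multisets $M$ and $N$ into the fresh projection of each premise of $\Rule$, and then close with (L$[\cdot]$)$_1$, resp. (R$[\cdot]$)$_1$, whose side condition is now met; as (IW) is hp-admissible, the height bound is preserved. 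I expect this case, together with the routine but somewhat tedious bookkeeping of where the various rules insert fresh projections relative to the appended one, to be the only genuine obstacle --- all remaining cases follow the usual ``apply the induction hypothesis, then reapply the rule'' pattern already used in the proofs of Lemmas \ref{lemmaAdmissibilityIW} and \ref{lemmaAdmissibilityEW}.
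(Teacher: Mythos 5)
Your proof is correct and takes essentially the same route as the paper's, which is given in one line as an induction on the height of the derivation of the premise, analogous to the proof for (EW). The one subtlety you single out --- the case where the label of the appended projection collides with the fresh projection introduced by (L$[\cdot]$)$_2$ or (R$[\cdot]$)$_2$, handled via hp-admissibility of (IW) followed by the $_1$ version of the rule --- is precisely the kind of adjustment the paper makes explicitly in other proofs (e.g.\ invertibility), so it correctly fills in the detail the paper leaves implicit here.
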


\begin{proof}
The proof proceeds by induction on the height of the derivation of the premise and is similar to the one of the above lemma.
Note that in the inductive step, we may need to use hp-admissibility of (IW) and replace some instances of (L$[\cdot])_2$ (resp. (R$[\cdot])_2$) by instances of (L$[\cdot])_1$ (resp. (R$[\cdot])_1$). 
We provide one example below.

\begin{center}
    \begin{prooftree}[regular]
        \hypo{G \sep X \update_\alpha A, M \Imp N \ \update_{\alpha \cdot A} \Imp B}
        \infer1[(R$[\cdot])_2$]{G \sep X \update_\alpha M \Imp N, [A]B}
    \end{prooftree}
\end{center}

\noindent Here we use the hp-admissibility of (IW) and apply (R$[\cdot])_2$ to obtain the desired conclusion.
\begin{center}
    \begin{prooftree}[regular]
        \hypo{G \sep X \update_\alpha A, M \Imp N \ \update_{\alpha \cdot A} M' \Imp N', B}
        \infer1[(R$[\cdot])_1$]{G \sep X \update_\alpha M \Imp N, [A]B \ \update_{\alpha \cdot A} M' \Imp N'}
    \end{prooftree}
\end{center}

\end{proof}


\begin{lemma}\label{lemmaInvertible}
    All logical rules are hp-invertible.
\end{lemma}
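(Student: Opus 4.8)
The plan is to prove hp-invertibility of each logical rule by induction on the height of the derivation of the conclusion, in the standard Gentzen-style manner, exploiting the fact that the structural rules of internal weakening (Lemma \ref{lemmaAdmissibilityIW}), external weakening (Lemma \ref{lemmaAdmissibilityEW}) and dynamic weakening (Lemma \ref{lemmaAdmissibilityDW}) are already known to be hp-admissible. Concretely, for each logical rule $\Rule$ with conclusion of the shape $G \sep X \update_\alpha \ldots$ containing a principal formula, I would show: if $\vdash_{\calculus} G'$ with height $h$, where $G'$ is the conclusion of an instance of $\Rule$, then each premise of that instance is derivable with height $\leq h$. The base case is when $G'$ is an axiom: then the axiom is still an axiom after removing/decomposing the principal formula (or one applies Lemma \ref{lemmaDerivabilityTrivialSequents} together with weakening), since an axiom's atomic witness $p$ is never the principal formula of a logical rule.

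For the inductive step I would distinguish cases on the last rule $\Rule'$ applied in the derivation of $G'$. The easy and generic case is when $\Rule'$ does not act on the principal formula we are trying to invert (it acts on a different component of the hypersequent, a different sequent, or a different formula): then the principal formula is merely a side formula, it is present in all premises of $\Rule'$, so we apply the induction hypothesis to those premises and re-apply $\Rule'$, obtaining the desired premise with no increase in height. The interesting case is when $\Rule' = \Rule$ and it acts precisely on the principal formula in question: then the premise we want is literally a premise of $\Rule'$, so it is derivable with strictly smaller height, a fortiori with height $\leq h$. For the two-premise rules -- (R$\land$), (L$[\cdot]$) -- one must do this argument for each premise separately. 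I would present (R$\lnot$)/(L$\lnot$), (L$\land$)/(R$\land$) as the routine classical cases, then treat the announcement rules (L$[\cdot]$) and (R$[\cdot]$) more carefully, since they also manipulate the dynamic structure: here one must check that when $\Rule'$ is a dynamic rule (New), (Recall), (Lat), (Rat), or another announcement rule acting on the neighbouring sequent $\update_{\alpha\cdot A}M'\Imp N'$, the induction still goes through, possibly invoking dynamic weakening to reconcile the presence or absence of the $\update_{\alpha\cdot A}$-sequent (this is exactly the distinction between the $(\cdot)_1$ and $(\cdot)_2$ instances flagged after the rules).

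The main obstacle I anticipate is precisely the bookkeeping around the announcement rules and their interaction with the dynamic rules, because the principal-formula occurrence of $[A]B$ sits in a dynamic sequent that other rules may extend with a fresh $\update_{\alpha\cdot A}$-projection, or whose $\update_{\alpha\cdot A}$-projection (New), (Recall), (Lat), (Rat) may have just introduced. One has to verify that in every such overlap the premises of the announcement rule can still be recovered at the same height -- using hp-admissibility of (DW) to add the $\update_{\alpha\cdot A}M'\Imp N'$ sequent when an instance of type $(\cdot)_2$ meets a derivation that happens to already contain it, and symmetrically being careful that (New) applied with the relevant $\alpha\cdot A$ does not block the inversion. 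I do not expect genuine difficulty from (L$\K_1$), (L$\K_2$), (R$\K$) or (L$\K_3$) as last rules in these cases, since they either leave the principal formula untouched or (for (R$\K$), (L$\K_3$)) merely copy it into a new or existing sequent where the induction hypothesis still applies; likewise the axiom case is immediate. So the proof is a long but essentially routine case analysis, and I would write out one representative logical case in full (say (L$\land$)) and one representative announcement case (say (R$[\cdot])$, distinguishing whether the last rule is another announcement rule, a dynamic rule, or an unrelated rule), leaving the remaining cases to the reader as analogous.
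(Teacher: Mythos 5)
Your overall strategy coincides with the paper's: induction on the height of the derivation of the conclusion, with the generic dichotomy between the last rule having the formula to be inverted as principal (the wanted premise is then a premise of that rule) or not (apply the inductive hypothesis and re-apply the rule), and you correctly locate the one genuinely delicate point, namely the interaction of the announcement rules with the $(\cdot)_1$/$(\cdot)_2$ distinction. As a side remark, the paper does not run the induction for (L$\K_1$), (L$\K_2$), (L$\K_3$), (Lat), (Rat), (Recall) and (New): their premises are obtained from their conclusions by (IW) and/or (DW), so their hp-invertibility is immediate from the weakening lemmas; only (R$\K$), the propositional rules and the announcement rules need the inductive case analysis.

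The mechanism you propose at the delicate point, however, does not close the case as stated. Suppose you are inverting (R$[\cdot])_2$ -- so the conclusion carries no $\alpha\cdot A$-projection -- and the last rule $\Rule'$ is (L$[\cdot])_2$, (L$\K_3$) or (New), whose premise(s) \emph{do} contain a projection $\update_{\alpha\cdot A}\Gamma'$ (possibly with formulas in it, e.g.\ $\update_{\alpha\cdot A}\,C\Imp$ in the right premise of (L$[\cdot])_2$). Those premises are then not conclusions of a $(\cdot)_2$ instance of (R$[\cdot]$), so the inductive hypothesis simply does not apply to them; and hp-admissibility of (DW) cannot help, since (DW) only adds a fresh \emph{empty} projection and cannot reconcile the inversion with a projection that is already present and populated. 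What is needed is the hp-invertibility of the $(\cdot)_1$ instance -- which must therefore be established beforehand -- applied to exactly those premises in place of the inductive hypothesis, after which one re-applies a \emph{different} version of $\Rule'$: (L$[\cdot])_1$ instead of (L$[\cdot])_2$, (L$\K_2$) instead of (L$\K_3$), and no rule at all instead of (New). This ordering ($(\cdot)_1$ first, then $(\cdot)_2$ using it) and the version-switching are the missing ingredients; without them your induction stalls precisely where you predicted the difficulty would be.
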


\begin{proof}
    For each rule, the proof proceeds by induction on the height of a derivation of the conclusion. \\

    \noindent$-$ Propositional rules: we only show invertibility of (L$\land$), because the method is the same for the others. If $G \sep X \update_\alpha A \land B, M \Imp N$ is an axiom, then so is $G\sep X \update_\alpha A,B, M \Imp N$. Otherwise, we consider the last occurrence of a rule $\Rule$ in the derivation. If $ \Rule$ has $A\land B$ as its principal formula, then it must be an occurrence of (L$\land)$ and the premise is precisely $G \sep X \update_\alpha A,B,M \Imp N$. If $A\land B$ is not principal in $\Rule$, we apply the inductive hypothesis to the premise(s) of $\Rule$ and then use $\Rule$ again. \\

    \noindent$-$ Modal rules: (L$\K_1$) and (L$\K_2$) are hp-invertible by hp-admissibility of (IW) and (L$\K_3$) is hp-invertible by hp-admissibility of (IW) and (DW). We now dwell on the hp-invertibility of (R$\K$). If $G \sepC X \update_\alpha M \Imp N, \K A$ is an axiom, then so is $G \sepC X \update_\alpha M \Imp N \sep \update_\alpha \Imp A$. Otherwise, we consider the last occurrence of a rule $\Rule$ in the derivation. If the principal formula of $\Rule$ is $\K A$, then $\Rule$ is the rule (R$\K$) and its premise is precisely what we want, namely $G \sepC X \update_\alpha M \Imp N \sep \update_\alpha \Imp A$. If $\K A$ is not principal in $\Rule$, we apply the inductive hypothesis to the premise(s) of the rule and then use $\Rule$ again. We show one example where $\Rule$ is an occurrence of (L$\K_3$).

    \begin{center}
    \begin{prooftree}[inter]
        \hypo{}
        \ellipsis{$d_1$}{G \sepC X \update_\alpha \Gamma, \K A, B_1 \sepC Y \update_{\alpha\cdot \beta} \K C,\Delta}
        \hypo{\cdots}
        \hypo{}
        \ellipsis{$d_n$}{G \sepC X \update_\alpha \Gamma, \K A \update_{\alpha \cdot \beta} C \Imp \sepC Y \update_{\alpha\cdot \beta} \K C,\Delta}
        \infer3[(L$\K_3$)]{G \sepC X \update_\alpha \Gamma, \K A \sepC Y \update_{\alpha\cdot \beta} \K C,\Delta}
    \end{prooftree}
    \end{center}
    $\byIH$
    \begin{center}
    \begin{prooftree}[small]
        \hypo{}
        \ellipsis{$d_1'$}{G \sepC X \update_\alpha \Gamma, B_1 \sep \update_\alpha \Imp A \sepC Y \update_{\alpha\cdot \beta} \K C,\Delta}
        \hypo{\cdots}
        \hypo{}
        \ellipsis{$d_n'$}{G \sepC X \update_\alpha \Gamma \update_{\alpha \cdot \beta} C \Imp \sep \update_\alpha \Imp A \sepC Y \update_{\alpha\cdot \beta} \K C,\Delta}
        \infer3[(L$\K_3$)]{G \sepC X \update_\alpha \Gamma \sep \update_\alpha \Imp A \sepC Y \update_{\alpha\cdot \beta} \K C,\Delta}
    \end{prooftree}
    \end{center}
    
    \bigskip
    
    \noindent$-$ Announcement rules: we need to consider both versions of each rule. We first consider (R$[\cdot])_1$, whilst (L$[\cdot])_1$ is treated likewise. If $G \sepC X \update_\alpha \, A, M \Imp N \update_{\alpha \cdot A} \, M' \Imp N',B$ is an axiom, then so is $G\sepC X \update_\alpha\,  M \Imp N, [A]B \update_{\alpha \cdot A} M' \Imp N'$. Otherwise, we consider the last applied rule $ \Rule$. If $ \Rule$ has $[A]B$ as its principal formula, then it must be (R$[\cdot])_1$ and the premise is precisely $G\sepC X \update_\alpha \, A,M \Imp N \update_{\alpha \cdot A} M' \Imp N',B$. If $[A]B$ is not principal, we apply the inductive hypothesis on the premise(s) and then $ \Rule$ again. We show one case with (L$\K _3$), whilst others are treated analogously. In the following, $\overline{Y}$ denotes $Y \update_{\alpha \cdot A \cdot   \beta} \K C,\Delta$.

    \begin{center}
    \begin{prooftree}[compact]
            \hypo{}
            \ellipsis{$d_1$}{ G \sepC X \update_\alpha \Gamma,  [A]B\update_{\alpha \cdot A} \Gamma', B_1 \sep \overline{Y}}
            \hypo{\cdots}
            \hypo{}
            \ellipsis{$d_k$}{ G \sepC X \update_\alpha  \Gamma,  [A]B  \update_{\alpha \cdot A} \Gamma' \update_{\update \cdot A \cdot \beta} C \Imp \sep \overline{Y} }
        \infer3[L($\K_3$)]{G \sepC X \update_\alpha \Gamma, [A]B \update_{\alpha \cdot A} \Gamma' \sep Y \update_{\alpha \cdot A \cdot   \beta} \K C,\Delta}
    \end{prooftree}
    \end{center}
    $\byIH$
    \begin{center}
    \begin{prooftree}[compact]
        \hypo{}
        \ellipsis{$d_1'$}{  G \sepC X \update_\alpha  A, \Gamma \update_{\alpha \cdot A} \Gamma', B,B_1 \sep \overline{Y}}
        \hypo{\cdots}
        \hypo{}
        \ellipsis{$d_k'$}{ G \sepC X \update_\alpha  A, \Gamma \update_{\alpha \cdot A} \Gamma',B \update_{\alpha \cdot A \cdot \beta} C \Imp \sep \overline{Y} }
        \infer3[L($\K_3$)]{G \sepC X \update_\alpha  A, \Gamma \update_{\alpha \cdot A} \Gamma',B \sep Y \update_{\alpha \cdot A \cdot \beta} \K C, \Delta}
    \end{prooftree}
    \end{center}

    We now turn to (R$[\cdot])_2$, while (L$[\cdot])_2$ is treated likewise. The proof proceeds similarly, except when the dynamic sequent $\update_{\alpha\cdot A} \ \Gamma'$ appears in the premise(s) of $\Rule$. In such cases, we use hp-invertibility of (R$[\cdot])_1$ on the premise(s) displaying $\update_{\alpha\cdot A} \ \Gamma'$ instead of the inductive hypothesis. Afterwards, we might need to apply a different version of rule $\Rule$, \emph{i.e.} (L$[\cdot])_1$ (resp. (R$[\cdot])_1$) instead of (L$[\cdot])_2$ (resp. (R$[\cdot])_2$), and (L$\K_2$) instead of (L$\K_3$). We provide some examples, whilst other cases can be treated similarly.
    
    If $\Rule$ is an occurrence of (L$[\cdot])_2$, we proceed as follows.
    \begin{center}
    \begin{prooftree}
        \hypo{}
        \ellipsis{$d_1$}{G\sep X \update_\alpha \Gamma,[A]B,A}
        \hypo{}
        \ellipsis{$d_2$}{G \sep X \update_\alpha \Gamma,[A]B \update_{\alpha \cdot A} C \Imp}
        \infer2[(L$[\cdot])_2$]{G \sep X \update_\alpha [A]C,\Gamma,[A]B}
    \end{prooftree}
    \end{center}
    
    \noindent We apply the inductive hypothesis on the left-hand premise, to obtain $G\sep X \update_\alpha \ A,\Gamma,A \update_{\alpha\cdot A} \ \Imp B$ while we use hp-invertibility of (R$[\cdot])_1$ on the right-hand premise to get $G \sep X \update_\alpha \ A,\Gamma \update_{\alpha \cdot A} \ C \Imp B$. We then apply (L$[\cdot])_1$ to obtain the desired conclusion:
    
    \begin{center}
    \begin{prooftree}
        \hypo{}
        \ellipsis{$d_1'$}{G\sep X \update_\alpha A,\Gamma,A \update_{\alpha\cdot A} \Imp B}
        \hypo{}
        \ellipsis{$d_2'$}{G \sep X \update_\alpha A,\Gamma \update_{\alpha \cdot A} C \Imp B}
        \infer2[(L$[\cdot])_1$]{G \sep X \update_\alpha A,[A]C,\Gamma \update_{\alpha \cdot A} \Imp B}
    \end{prooftree}\\
    \end{center}

    When $\Rule$ is an occurrence of (L$\K_3$) with two premises as below

\begin{center}
    \begin{prooftree}[regular]
        \hypo{}
        \ellipsis{$d_1$}{G \sepC X \update_\alpha M \Imp N, [A]B, A \sepC Y \update_{\alpha \cdot A} \K C, \Delta}
        \hypo{}
        \ellipsis{$d_2$}{G \sepC X \update_\alpha  M \Imp N, [A]B  \update_{\alpha \cdot A} C \Imp \sepC Y \update_{\alpha \cdot A} \K C, \Delta }
        \infer2[L($\K_3$)]{G \sepC X \update_\alpha  M \Imp N, [A]B \sepC Y \update_{\alpha \cdot A} \K C, \Delta}
    \end{prooftree}
\end{center}

    \noindent we consider the right-hand premise only: we apply hp-invertibility of (R$[\cdot])_1$ on it and then use (L$\K_2$) instead of (L$\K_3$), to get\footnote{Note that this method generalises straightforwardly to an application of (L$\K_3$) with more premises.}:
    \begin{displaymath}
    \begin{prooftree}[regular]
        \hypo{}
        \ellipsis{$d_2'$}{ G \sepC X \update_\alpha A, M \Imp N \update_{\alpha \cdot A} C \Imp B  \sepC Y \update_{\alpha \cdot A} \K C, \Delta }
        \infer1[L($\K_2$)]{ G \sepC X \update_\alpha A, M \Imp N \update_{\alpha \cdot A} \Imp B \sepC Y \update_{A} \K C, \Delta}
    \end{prooftree}
    \end{displaymath}

    Eventually, when $\Rule$ is an occurrence of (New) where $\update_{\alpha\cdot A} \ \Gamma'$ appears in the premise, hp-invertibility of (R$[\cdot])_1$ directly provides the conclusion, with no need for further application of (New), as shown below.
    \begin{center}
    \begin{prooftree}
        \hypo{}
        \ellipsis{$d$}{G \sepC X \update_\alpha \Gamma, [A]B \  \update_{\alpha\cdot A} \Imp \update_{\alpha\cdot A\cdot C} \ \Gamma'}
        \infer1[(New)]{G \sepC X \update_\alpha \Gamma, [A]B \update_{\alpha\cdot A\cdot C}\ \Gamma'}
    \end{prooftree}
    $\overset{(R[\cdot])_1}{\byInvertibility}$\footnote{We write $\overset{\Rule}{\byInvertibility}$ in an informal proof to indicate that we use the hp-invertibility of rule $\Rule$.}
    \begin{prooftree}
        \hypo{}
        \ellipsis{$d'$}{G \sepC X \update_\alpha A,\Gamma \ \update_{\alpha\cdot A} \ \Imp B \update_{\alpha \cdot A \cdot C} \ \Gamma'}
    \end{prooftree}
    \end{center}

    \hfill \\
    \noindent$-$ Dynamic rules: (Lat) and (Rat), as well as (Recall) are hp-invertible by hp-admissibility of (IW). (New) is hp-invertible by hp-admissibility of (DW).

\end{proof}

From now on, for any given rule $\Rule$, we denote by $\Rule^\star$ the inverted rule\footnote{In case rule $\Rule$ has several premises then $\Rule^\star$ denotes any of the corresponding inverted rules, e.g. for (R$\land$), we denote by (R$\land)^\star$ either $\begin{prooftree}[small]
    \hypo{G\sep X \update_\alpha M \Imp N,A\land B}
    \infer1{G\sep X\update_\alpha M \Imp N, A}
\end{prooftree}$ or
$\begin{prooftree}[small]
    \hypo{G\sep X \update_\alpha M \Imp N,A\land B}
    \infer1{G\sep X\update_\alpha M \Imp N, B}
\end{prooftree}$.}, that is derivable, by Lemma \ref{lemmaInvertible}. \\


As is standard in a hypersequent framework, there is a merge rule that can be shown to be hp-admissible.
%
To define the rule, we need to introduce the symbol $\xMerge$ for $X\xMerge Y$, which intuitively combines two dynamic hypersequents $X$ and $Y$ in the following way: all dynamic sequents labelled by the same sequence of announcements $\alpha$ are merged pairwise, while the remaining dynamic sequents are simply juxtaposed.


\begin{definition} Let $X$ and $Y$ be two DSs, then  $X \xMerge Y$ is inductively defined as follows:
\begin{itemize}
    \item if $X$ is empty (resp. $Y$ is empty) then $X \xMerge Y := Y$ (resp. $X \xMerge Y := X$);
    \item if $X= X' \update_{\alpha}\,  M \Imp N$ and $Y= Y' \update_{\alpha} \, P \Imp Q$ then $X \xMerge Y := X'\xMerge Y' \update_{\alpha} M,P\Imp N,Q$;
    \item if $X= X' \update_{\alpha} M \Imp N$ and $Y= Y' \update_{\beta} P \Imp Q$, where $\alpha\notin \ labels(Y')$ and $\beta \notin \ labels(X')$, then $X \xMerge Y := X'\xMerge Y' \update_{\alpha} M\Imp N \update_{\beta}P \Imp Q$.
\end{itemize}
\end{definition}

\noindent For instance, if $X = \Gamma \update_{A} \Gamma' \update_{A\cdot B} \Gamma''$ and $Y= \Delta \update_A \Delta' \update_B \Delta'' \update_{B\cdot A} \Delta'''$, then $X \xMerge Y = \Gamma\Delta \update_A \Gamma'\Delta' \update_{A\cdot  B} \Gamma''  \update_B \Delta'' \update_{B\cdot A} \Delta'''$.

\begin{lemma}
    The rule of \emph{Merge} is hp-admissible.
    \begin{align*}
    \begin{prooftree}[regular]
        \hypo{G \sep X \update_\alpha M \Imp N \sep Y \update_\alpha P \Imp Q}
        \infer1[\emph{(Merge)}]{G \sep X \xMerge Y \update_\alpha M,P \Imp N,Q}
    \end{prooftree}
    \end{align*}
\end{lemma}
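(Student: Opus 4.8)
The plan is to prove the admissibility of (Merge) by induction on the height of the derivation $d$ of the premise $G \sep X \update_\alpha M \Imp N \sep Y \update_\alpha P \Imp Q$, simultaneously with the analogous statement allowing $\alpha$ to be \emph{any} shared label of the two dynamic sequents (not just $\epsilon$), since the $\xMerge$ operation recursively descends through the labels and the modal and dynamic rules will force us to merge sequents at deeper labels too. The base case is immediate: if the premise is an axiom, the merged conclusion is still an axiom, because a shared atom $p$ on both sides of one of the two merged sequents is preserved in $M,P \Imp N,Q$, and an axiom witnessed elsewhere in $G$ or in another sequent of $X$ or $Y$ is untouched.

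For the inductive step, I would distinguish cases on the last rule $\Rule$ applied in $d$. If the principal part of $\Rule$ lies entirely inside $G$, or inside a component of $X$ or $Y$ other than the two sequents being merged, we simply apply the inductive hypothesis to the premise(s) of $\Rule$ and reapply $\Rule$; here one uses that $\xMerge$ leaves all other components fixed. The interesting cases are when $\Rule$ acts on the $\alpha$-sequent of $X \sep Y$ (the two copies $\update_\alpha M\Imp N$ and $\update_\alpha P \Imp Q$), or when it introduces a fresh deeper sequent in one of them. For a propositional or announcement rule acting on, say, the $X$-copy, we push it down past the merge; if the same rule also happens to be applicable on the $Y$-copy we may need to fire it twice, but height only decreases on the branch we recurse into. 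For (R$\K$) applied to the $X$-copy, producing a new component $\update_\alpha \Imp A$, note this new component carries label $\alpha$ and must itself be merged with nothing (it is a genuinely new world), so after the inductive hypothesis merges the old sequents we reapply (R$\K$); care is needed because the side component $\update_\alpha \Imp A$ created from $X$ and the one created from $Y$ are distinct worlds and should \emph{not} be merged — the semantics of hypersequents (disjunction of $\K$'s) makes this sound, and one keeps them as separate components. For (L$\K_2$) and (L$\K_3$), the rules copy a boxed formula across components or create sequents at deeper labels $\alpha\cdot\beta$; here the third clause of the definition of $\xMerge$ (the case where the labels differ and neither occurs in the other dynamic sequent) and the recursive second clause are exactly what is needed, and one invokes the inductive hypothesis at the deeper shared label. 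The dynamic rules (Lat), (Rat), (New), (Recall) are handled the same way, using hp-admissibility of (IW) and (DW) (Lemmas \ref{lemmaAdmissibilityIW} and \ref{lemmaAdmissibilityDW}) to adjust contexts when a rule that was applicable to one copy is not literally applicable to the merged sequent.

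The main obstacle, as usual for merge in hypersequent calculi, is the bookkeeping around rules that create new components or new deeper sequents — principally (R$\K$), (L$\K_3$), and the announcement rules in their $(\cdot)_2$ form, where the dynamic sequent $\update_{\alpha\cdot A} M'\Imp N'$ is introduced by the rule rather than inherited. One must check that after merging, the label structure still satisfies the side conditions of $\xMerge$ (the disjointness-of-labels requirement in the third clause) and of the rules themselves, and that the two halves' freshly created sequents are kept separate rather than accidentally identified. A secondary subtlety is that $\xMerge$ as defined is a \emph{partial} operation — it presupposes that whenever $X$ and $Y$ share a prefix of labels they agree on it — so one should first record, as a small preliminary remark, that in any derivable hypersequent the two dynamic sequents occurring in a (Merge) instance do have compatible label structure, or else restrict the statement to such hypersequents; this is guaranteed because both $X$ and $Y$ descend, by the dynamic rules, from a common initial configuration. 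Once these invariants are in place, every case is a routine permutation of $\Rule$ below the merge, and the height bound is preserved because we only ever recurse into strictly shorter subderivations.
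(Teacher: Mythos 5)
Your overall strategy---induction on the height of the derivation of the premise, case analysis on the last applied rule $\Rule$, apply the inductive hypothesis to the premise(s) and reapply $\Rule$---is exactly the strategy of the paper, and your treatment of the axiom case, of rules acting outside the merged components, and of (R$\K$) (keeping the freshly created component $\update_\alpha \Imp A$ separate) all match the paper's proof.

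There is, however, one concrete gap. When the last rule is (L$\K_2$) or (L$\K_3$) and its \emph{two} active components are precisely the two dynamic sequents being merged, your generic recipe of ``apply the inductive hypothesis at the deeper shared label and reapply the rule'' breaks down: after the merge, the component containing $\Gamma$ and the component containing $\K A, \Delta$ have been fused into a single dynamic sequent, so the side condition of (L$\K_2$)/(L$\K_3$) (which requires $\K A$ to sit in a component \emph{distinct} from the one receiving $A$) can no longer be met, and the rule cannot be reapplied. The paper resolves this by abandoning all premises except the rightmost one: applying the inductive hypothesis to $G \sepC X \update_\alpha \Gamma \update_{\alpha\cdot\beta} A \Imp \sep Y \update_\alpha \Delta \update_{\alpha\cdot\beta} \K A,\Delta'$ yields $G \sep X \xMerge Y \update_\alpha \Gamma\Delta \update_{\alpha\cdot\beta} A, \K A, \Delta'$, from which the conclusion follows by a single application of (L$\K_1$), the one-component variant of the left box rule. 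Your sketch never identifies this rule switch, and without it the case does not close. (Your remark that $\xMerge$ is partial and needs a compatibility invariant on labels is a fair observation the paper leaves implicit, but it is a side issue; the missing (L$\K_1$) step is the substantive omission.)
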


\begin{proof}
    The proof proceeds by induction on the height of the derivation of the premise. If $G \sepC X \update_\alpha M \Imp N \sep Y \update_\alpha P \Imp Q$ is an axiom, so is $G \sepC X \xMerge Y \update_\alpha M,P \Imp N,Q$. Otherwise, we consider the last applied rule $ \Rule$: we apply the inductive hypothesis to the premise(s) of $ \Rule$ and then use $ \Rule$ again. We only show two paradigmatic examples.

    If $\Rule$ is an occurrence of (R$\K$):
    \begin{align*}
        &\begin{prooftree}[regular]
            \hypo{}
            \ellipsis{$d$}{G \sep X \update_\alpha\Gamma \sep Y \update_\alpha \Delta \sep \update_\alpha \Imp A}
            \infer1[(R$\K$)]{G \sep X \update_\alpha \Gamma, \K A \sep Y \update_\alpha \Delta}
        \end{prooftree}
        \quad 
        \byIH
        \quad
        \begin{prooftree}[regular]
            \hypo{}
            \ellipsis{$d'$}{G \sep X \xMerge Y \update_\alpha \Gamma\Delta  \sep \update_\alpha \Imp A}
            \infer1[(R$\K$)]{G \sep X \xMerge Y \update_\alpha \Gamma\Delta, \K A}
        \end{prooftree}        
    \end{align*}

\medskip If $ \Rule$ is an occurrence of (L$\K_3$) of the following shape,
    \begin{center}
    \begin{prooftree}[inter]
        \hypo{}
        \ellipsis{$d_1$}{G \sepC X \update_\alpha \Gamma, B_1 \sep Y \update_\alpha \Delta \update_{\alpha \cdot \beta} \K A, \Delta'}
        \hypo{\cdots}
        \hypo{}
        \ellipsis{$d_n$}{G \sepC X \update_\alpha \Gamma \update_{\alpha\cdot \beta} A \Imp \sep Y \update_\alpha \Delta \update_{\alpha \cdot \beta} \K A, \Delta'}
        \infer3[(L$\K_3$)]{  G \sepC X \update_\alpha \Gamma \sep Y \update_\alpha \Delta \update_{\alpha \cdot \beta} \K A, \Delta'}
    \end{prooftree}
    \end{center}    

    \noindent we simply apply the inductive hypothesis to the rightmost premise, and conclude with (L$\K_1$):
    \begin{center}
    \begin{prooftree}[regular]
        \hypo{}
        \ellipsis{$d_n'$}{G \sep X \xMerge Y \update_\alpha \Gamma \Delta \update_{\alpha\cdot \beta} A, \K A, \Delta'}
        \infer1[(L$\K_1$)]{  G \sep X \xMerge Y \update_\alpha \Gamma \Delta \update_{\alpha\cdot \beta} \K A, \Delta'}
    \end{prooftree}
    \end{center}   

\end{proof}


\begin{lemma}\label{lemmaAdmissibilityC}
The rules of \emph{contraction} are hp-admissible.
\begin{align*}
&\begin{prooftree}[regular]
    \hypo{G \sep X \update_\alpha A, A, M \Imp N}
    \infer1[(LC)]{G \sep X \update_\alpha A,M \Imp N}
\end{prooftree}
&
&\begin{prooftree}[regular]
    \hypo{G \sep X \update_\alpha M \Imp N,A,A}
    \infer1[(RC)]{G \sep X \update_\alpha M \Imp N, A}
\end{prooftree}
\end{align*}
\end{lemma}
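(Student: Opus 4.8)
The plan is to establish the hp-admissibility of (LC) and (RC) \emph{simultaneously}, by induction on the height $h$ of the derivation of the premise; since the two rules are mirror images, I will describe the argument for (LC) and only flag the symmetric steps for (RC). In the base case the premise is an axiom $G \sep X \update_\alpha p, M \Imp N, p$: contracting a duplicated formula in any sequent cannot destroy a witnessing pair $p,\dots \Imp \dots,\,p$ — if it is that pair itself that gets shortened, at least one $p$ survives on each side — so the conclusion is again an axiom.

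For the inductive step, let $\Rule$ be the last rule applied and trace the two contracted occurrences of $A$ upward into the premises of $\Rule$. If both occurrences appear \emph{parametrically} there — which is the case whenever $A$ is not the principal formula of $\Rule$, and also when $\Rule$ is one of the ``cumulative'' rules (L$\K_1$), (L$\K_2$), (L$\K_3$), (Lat), (Rat), (Recall) and $A$ is their persistent side formula — I apply the inductive hypothesis to each premise (using both (LC) and (RC) when $\Rule$ moves material between the two sides, as in (L$\K_1$)), and then reapply $\Rule$; for the variable-arity rule (L$\K_3$) this is carried out uniformly on all $n+1$ premises, invoking the hp-admissibility of (IW), (EW), (DW) (Lemmas \ref{lemmaAdmissibilityIW}, \ref{lemmaAdmissibilityEW}, \ref{lemmaAdmissibilityDW}) and of Merge where contexts need to be matched.

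The genuinely new work is the case where one of the two occurrences of $A$ is the principal formula introduced by $\Rule$, hence is \emph{decomposed} in the premises. The recipe is: apply the hp-invertibility of $\Rule$ (Lemma \ref{lemmaInvertible}) to its premise(s), which re-opens the surviving copy of $A$ and yields, at height $\le h-1$, hypersequents containing the immediate subformulas of $A$ in duplicate; contract those subformulas by the inductive hypothesis; and reapply $\Rule$. For the propositional rules and for (L$[\cdot]$), (R$[\cdot]$) — in both of their forms — this is routine, the only care being that inverting (R$\land$) and (L$[\cdot]$) returns \emph{two} premises, of which one must be selected to rebuild each premise of the rule (contracting the duplicated $A$ on one side and the duplicated $B$ on the other). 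The one configuration in which invertibility produces a duplicated hypersequent \emph{component} rather than a duplicated formula is (R$\K$): when contracting $\K A,\K A$ on the right, the premise has the shape $G \sep X \update_\alpha M \Imp N, \K A \sep \update_\alpha \Imp A$, and inverting (R$\K$) on the remaining $\K A$ gives $G \sep X \update_\alpha M \Imp N \sep \update_\alpha \Imp A \sep \update_\alpha \Imp A$; here I collapse the two copies of $\update_\alpha \Imp A$ by the hp-admissibility of Merge, contract $A,A$ by the inductive hypothesis, and reapply (R$\K$). Note that (L$\K_3$) does not trigger this subtlety, since its principal $\K A$ remains inside the same component.

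The bookkeeping to watch — and the only real, though essentially routine, obstacle — is that the chain invert $\to$ Merge $\to$ contract $\to$ reapply keeps the height $\le h$: each of the first three steps stays at height $\le h-1$ by the relevant hp-statements and the inductive hypothesis, and the final reapplication of $\Rule$ costs exactly one.
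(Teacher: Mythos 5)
Your proposal is correct and follows essentially the same strategy as the paper's proof: simultaneous induction on the height of the derivation, applying the inductive hypothesis directly when neither occurrence of $A$ is principal, and combining hp-invertibility of the last rule with the inductive hypothesis (plus hp-admissible structural rules such as Merge, which is established before contraction, so no circularity arises) when one occurrence is principal. Your extra detail on the (R$\K$) case and on selecting the right inverted premise for two-premise rules is consistent with the paper's one worked example for (R$[\cdot])_2$.
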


\begin{proof}
    We simultaneously prove the hp-admissibility of (LC) and (RC) by induction on the height of the derivation of the premise of each rule. If $G \sepC X \update_\alpha A, A, M\Imp N$ and $G \sepC X \update_\alpha M \Imp N, A, A$ are axioms, then so are $G \sepC X \update_\alpha A, M \Imp N$ and $G \sepC X \update_\alpha M \Imp N, A$. Else, we distinguish cases on the last applied rule $ \Rule$. If neither of the occurrences of $A$ is principal, we apply the inductive hypothesis to the premise(s) of $ \Rule$, and then use $ \Rule$ again. If one of the occurrence of $A$ is principal, we start from the premise(s) of $ \Rule$ and use the hp-invertibility of the rules before applying the inductive hypothesis; then we use $ \Rule$ again. We only show one example, while the others can be treated similarly.

    If the last applied rule is (R$[\cdot])_2$, we first use hp-invertibility of (R$[\cdot])_1$, then we apply the inductive hypothesis twice, and eventually we use (R$[\cdot])_2$ again:
        \begin{align*}
        \begin{prooftree}[regular]
            \hypo{}
            \ellipsis{$d_1$}{ G \sep X \update_\alpha A, \Gamma [A]B \, \update_{\alpha \cdot A} \, \Imp B}
            \infer1[(R$[\cdot]$)]{  G \sep X \update_\alpha \Gamma, [A]B,[A]B}
        \end{prooftree}
        \overset{(R[\cdot])_1}{\dashrightarrow} \quad
        & G \sep X \update_\alpha A, A, \Gamma  \, \update_{\alpha \cdot A} \Imp B, B \\
        \byIH \quad
        & G \sep X \update_\alpha A, \Gamma   \, \update_{\alpha \cdot A} \Imp B, B \\
        \byIH \quad
        &\begin{prooftree}[regular]
            \hypo{}
            \ellipsis{$d_1'$}{ G \sep X \update_\alpha A, \, \Gamma \update_{\alpha \cdot A}\, \Imp B}
            \infer1[(R$[\cdot]$)]{ G \sep X \update_\alpha \, \Gamma,  [A]B}
        \end{prooftree} 
        \end{align*}
    
\end{proof}


We end this section by proving the admissibility of the rule $New^{A}$, which will be crucial to prove the admissibility of the Cut-rule. 

\begin{lemma}\label{lemmaAdmissibilityNewA}
The rule \emph{New}$^{A}$ is admissible.
\begin{displaymath}
\begin{prooftree}[regular]
\hypo{G \sep X \update_\alpha A, M \Imp N \update_{\alpha \cdot A} \Imp}
\infer1[(New$^A$)]{G \sep X \update_\alpha A, M \Imp N}
\end{prooftree}
\end{displaymath}
\end{lemma}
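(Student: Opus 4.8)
The plan is to argue by induction on the height $h$ of a derivation $d$ of the premise $G \sep X \update_\alpha A, M \Imp N \update_{\alpha\cdot A} \Imp$. Write $Z := X \update_\alpha A, M \Imp N \update_{\alpha\cdot A} \Imp$ for the dynamic sequent the rule acts on, and call its last component $\update_{\alpha\cdot A} \Imp$ the \emph{target component}. If the premise is an axiom, then its witnessing atom occurs in a sequent other than the (empty) target component, so erasing the target component still yields an axiom; this settles the base case. Otherwise we distinguish cases on the last rule $\Rule$ of $d$.

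In the majority of the cases $\Rule$ does not touch the target component and keeps $A$ on the left of the $\alpha$-component of $Z$: this covers every propositional rule, (R$\K$), (L$\K_1$), (L$\K_3$), the dynamic rules (Lat), (Rat), (New), and (Recall), as well as the announcement rules whenever the target component is not the component of $Z$ that the rule reworks. In all of these the target component is passive in every premise of $\Rule$; we apply the inductive hypothesis to each premise -- legitimate since the target component still has the right (empty) shape and $A$ is still on the left -- and re-apply $\Rule$, which is unaffected by the erasure of the target component. A minor variant occurs when $\Rule$ is (Recall) and the formula it recalls into the $\alpha$-component of $Z$ is $A$ itself: its premise then carries two copies of $A$, so after the inductive hypothesis we contract them using hp-admissibility of (LC) (Lemma \ref{lemmaAdmissibilityC}).

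Two groups of cases require a different manoeuvre. First, suppose $\Rule$ introduces $A$ as principal formula on the left of the $\alpha$-component of $Z$; then $A$ is $\lnot A'$, $A' \land A''$, or $[A']B$ and $\Rule$ is (L$\lnot$), (L$\land$), or (L$[\cdot]$) (the case $A = \K A'$ is harmless, since (L$\K_1$) does not remove its principal box, so it falls under the routine cases). Here the premise(s) of $\Rule$ no longer display $A$ on the left. We re-insert $A$ there in each premise using hp-admissibility of (IW) (Lemma \ref{lemmaAdmissibilityIW}); this gives derivations of strictly smaller height of dynamic hypersequents again of the shape required by the induction, so the inductive hypothesis applies; then, re-applying $\Rule$ with the re-inserted copy of $A$ absorbed into its context and removing the resulting duplicate $A$ by hp-admissibility of contraction, we obtain the goal. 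Second, suppose $\Rule$ is an instance of (L$[\cdot])_1$ or (R$[\cdot])_1$ in which the target component of $Z$ is the next-state sequent $\update_{\alpha\cdot A} M' \Imp N'$ of the rule; being equal to $\update_{\alpha\cdot A} \Imp$ it is empty, which forces the announced formula of the principal $[A']B$ to equal $A$. Here we do not use the inductive hypothesis at all: up to an appropriate re-grouping of the passive context, the premises of $\Rule$ are precisely the premises of the corresponding instance of (L$[\cdot])_2$ (resp. (R$[\cdot])_2$) whose conclusion is already $G \sep X \update_\alpha A, M \Imp N$, so one application of that instance re-derives the goal.

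The remaining case, which I expect to be the main obstacle, is when $\Rule$ is (L$\K_2$) and the sequent it enriches is the target component of $Z$: the premise then has $\update_{\alpha\cdot A} C \Imp$ where $\update_{\alpha\cdot A} \Imp$ stood, with $\K C$ occurring at label $\alpha\cdot A$ in some other dynamic sequent, so the inductive hypothesis in the present form does not apply. The plan is to prove instead a strengthened statement in which the target component is allowed to range over a suitable class of sequents $\Gamma'$ -- morally, those \emph{forced} by the boxed formulas already present at label $\alpha\cdot A$ in the dynamic hypersequent -- a class that is closed under the operation (L$\K_2$) performs on the target, namely adding the argument $C$ of a side box $\K C$ present at that level. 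The structural fact that makes such an invariant propagate upward is that the modal left rules (L$\K_1$) and (L$\K_2$) never erase a boxed formula, so a witnessing box at level $\alpha\cdot A$ stays available. The difficulty -- and the technically delicate heart of the argument -- is to pin down this class so that it is genuinely stable under \emph{all} rules that can act inside the target component (in particular (L$\land$), (L$\lnot$), (L$\K_1$), which decompose its formulas) while remaining strong enough to license deleting the component outright; the Lemma as stated is the special case in which the target component is empty.
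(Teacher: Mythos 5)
Your overall strategy (induction on the height of the derivation of the premise, routine commutation for rules not touching the target component, weakening-then-IH-then-contraction when $A$ itself is principal) matches the paper's, and those parts are sound. But there is a genuine gap, and it sits exactly where you pin your hopes on a strengthened invariant. First, you misclassify (L$\K_3$) as a rule that ``does not touch the target component'': since the target $\update_{\alpha\cdot A}\Imp$ is the last component of its dynamic sequent, it can perfectly well serve as the $\Gamma$-component of an (L$\K_3$) whose box $\K C$ sits at a label $\alpha\cdot A\cdot\beta$ in another dynamic sequent; the premises of that inference then put $B_1$ on the right of the target and append fresh components after it, so the inductive hypothesis does not apply to them in the form you use it. Second, for the (L$\K_2$)-on-target case you correctly see that the induction breaks, but the strengthening you sketch (a class of ``forced'' target sequents stable under all left rules) is never pinned down, and you yourself flag its formulation as the unsolved heart of the matter --- so as it stands the argument is incomplete precisely at the case you identify as the main obstacle.

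The paper's resolution of both cases is far more elementary than what you propose and requires no change to the induction statement: one re-derives the conclusion by an application of (L$\K_3$) over a longer announcement sequence, supplying the one missing premise via Lemma \ref{lemmaDerivabilityTrivialSequents}. Concretely, for your (L$\K_2$) obstacle: if the premise of (New$^A$) is $G\sep Y\update_{\alpha\cdot A}\K C,\Delta\sep X\update_\alpha A,M\Imp N\update_{\alpha\cdot A}\Imp$, obtained by (L$\K_2$) from $G\sep Y\update_{\alpha\cdot A}\K C,\Delta\sep X\update_\alpha A,M\Imp N\update_{\alpha\cdot A}C\Imp$, then that (L$\K_2$) premise is exactly the second premise of an instance of (L$\K_3$) with $\beta=A$ whose conclusion is the goal $G\sep Y\update_{\alpha\cdot A}\K C,\Delta\sep X\update_\alpha A,M\Imp N$; the missing first premise $G\sep Y\update_{\alpha\cdot A}\K C,\Delta\sep X\update_\alpha A,M\Imp N,A$ is derivable by Lemma \ref{lemmaDerivabilityTrivialSequents}, and no appeal to the inductive hypothesis is needed at all. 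The (L$\K_3$)-on-target case is handled identically, replacing the $(n+1)$-premise instance over $\beta$ by an $(n+2)$-premise instance over $A\cdot\beta$, with the extra premise again supplied by Lemma \ref{lemmaDerivabilityTrivialSequents} and the inductive hypothesis applied to the remaining premises. Once this move is available, the class of target components you were trying to isolate never needs to be introduced.
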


\begin{proof}
    We proceed by induction on the height of the derivation of the premise. If $G\sepC X \update_\alpha A, M \Imp N \update_{\alpha \cdot A} \Imp$ is an axiom then so is $G \sepC X \update_\alpha A, M \Imp N$. If $G\sepC X \update_\alpha A, M \Imp N \update_{\alpha \cdot A} \Imp$ is of the form $G \sepC X \update_\alpha A, M \Imp N \update_{\alpha \cdot A} \Imp \update_{\alpha \cdot A \cdot B} \, M' \Imp N'$---namely the dynamic sequent $X$ contains a sequent of the form $\update_{\alpha \cdot A \cdot B} \, M' \Imp N'$---we simply apply (New) to obtain the desired conclusion:

    \begin{center}
        \begin{prooftree}[regular]
            \hypo{G \sepC X \update_\alpha A, M \Imp N \update_{\alpha \cdot A} \Imp \update_{\alpha \cdot A \cdot B} \, M' \Imp N'}
            \infer1[(New)]{G \sepC X \update_\alpha A, M \Imp N \update_{\alpha \cdot A \cdot B} \, M' \Imp N'}
        \end{prooftree}
    \end{center}

    Finally, if $G\sepC X \update_\alpha A, M \Imp N \update_{\alpha \cdot A} \Imp$ is neither an axiom nor of the form indicated above, we consider the last applied rule $ \Rule$. If $ \Rule$ is an application of a right-rule, we apply the inductive hypothesis to the premise(s) of $ \Rule$ and then use $ \Rule$ again, as in the example below, where $\Rule$ is an occurrence of (R$\K$).
    \begin{displaymath}
        \begin{prooftree}[regular]
            \hypo{}
            \ellipsis{$d$}{G \sep X \update_\alpha A,M \Imp N \update_{\alpha \cdot A} \Imp \sep \update_\alpha \Imp B}
            \infer1[(R$\K$)]{G \sep X \update_\alpha A,M \Imp N, \K B \ \update_{\alpha \cdot A} \Imp }
        \end{prooftree}
    \
    \byIH
    \
        \begin{prooftree}[regular]
            \hypo{}
            \ellipsis{$d'$}{G \sep X \update_\alpha A,M \Imp N \sep \update_\alpha \Imp B}
            \infer1[(R$\K$)]{G \sep X \update_\alpha A,M \Imp N, \K B}
        \end{prooftree}
    \end{displaymath}

    If $ \Rule$ is an application of a left-rule that does not involve neither $A$ nor the dynamic sequent $\update_{\alpha \cdot A} \Imp$, we reason as before. Here is an example with (L$[\cdot]$):
    \begin{align*}
        &\begin{prooftree}[regular]
            \hypo{}
            \ellipsis{$d_1$}{G \sep X \update_\alpha A,\Gamma,B  \ \update_{\alpha \cdot A} \Imp}
            \hypo{}
            \ellipsis{$d_2$}{G \sep X \update_\alpha A,\Gamma \update_{\alpha \cdot A} \Imp \update_{\alpha \cdot B} \ C \Imp}
            \infer2[(L$[\cdot]$)]{G \sep X \update_\alpha [B]C,A,\Gamma \update_{\alpha \cdot A} \Imp }
        \end{prooftree}\\
    \\
    \byIH
    &\qquad \qquad
        \begin{prooftree}[regular]
            \hypo{}
            \ellipsis{$d_1'$}{G \sepC X \update_\alpha A,\Gamma,B}
            \hypo{}
            \ellipsis{$d_2'$}{G \sepC X \update_\alpha A,\Gamma \update_{\alpha \cdot B} C \Imp}
            \infer2[(L$[\cdot]$)]{G \sepC X \update_\alpha [B]C,A,\Gamma}
        \end{prooftree}
    \end{align*}

If $\Rule$ is an application of a left-rule that does involve either A or the dynamic sequent $\update_{\alpha \cdot A} \Imp$, then we obtain the desired result using hp-admissibility of weakening and contraction, applying the inductive hypothesis or directly using $\Rule$ again. We only show the most significant cases.\\
    
    \noindent $-$ $ \Rule$ is an application of (L$\K_3$) involving $\update_{\alpha\cdot A} \Imp$, with active dynamic sequent $\update_{\alpha\cdot A \cdot \beta} \K C, \Delta$ where $\beta = B_1 \cdots B_n$, as below

    \begin{center}
    \begin{prooftree}[regular]
        \hypo{}
        \ellipsis{$d_1$}{ G \sepC X \update_\alpha A, \Gamma \update_{\alpha \cdot A} \Imp B_1 \sep \overline{Y}}
        \hypo{\cdots}
        \hypo{}
        \ellipsis{$d_{n+1}$}{ G \sepC X \update_\alpha A, \Gamma \update_{\alpha \cdot A} \Imp \update_{\alpha \cdot A \cdot \beta} C \Imp \sep \overline{Y}}
        \infer3[(L$\K_3$)]{ G \sep X \update_\alpha A, \Gamma \update_{\alpha\cdot A} \Imp \sep Y \update_{\alpha \cdot A \cdot \beta} \K C, \Delta} 
    \end{prooftree}\\
    \end{center}
    
    \noindent we apply the inductive hypothesis on all the premises except the leftmost, and use $G \sepC X \update_\alpha A,  \Gamma,  A \sep Y \update_{\alpha \cdot A \cdot \beta} \K C, \Delta$ (that is derivable by Lemma \ref{lemmaDerivabilityTrivialSequents}) as a first premise to apply (L$\K_3$) with $n+2$ premises:
    
    \begin{center}    
        \begin{prooftree}[inter]
        \hypo{G \sepC X \update_\alpha A,  \Gamma,  A \sepC \overline{Y}}
        \hypo{}
        \ellipsis{$d_1$}{G \sepC X \update_\alpha A, \Gamma \update_{\alpha \cdot A} \Imp B_1 \sepC \overline{Y}}
        \hypo{\cdots}
        \hypo{}
        \ellipsis{$d_{n+1}'$}{G \sepC X \update_\alpha A, \Gamma \update_{\alpha \cdot A \cdot \beta} C \Imp \sepC \overline{Y}}
        \infer4[(L$\K_3$)]{G \sep X \update_\alpha A, \Gamma \sep Y \update_{\alpha \cdot A \cdot \beta} \K C, \Delta} 
        \end{prooftree}\\
    \end{center}
\hfill \\

    \noindent$-$ $ \Rule$ is an application of (L$[\cdot])$ where $A$ is the principal formula:
    \begin{center}
    \begin{prooftree}[regular]
            \hypo{}
            \ellipsis{$d_1$}{  G \sep X \update_{\alpha} M \Imp N, A \ \update_{[A]B} \Imp}
            \hypo{}
            \ellipsis{$d_2$}{  G \sep X \update_\alpha M \Imp N \ \update_{[A]B} \Imp \update_{\alpha \cdot A} \ B \Imp}
        \infer2[(L$[\cdot]$)]{G \sep X \update_\alpha [A]B, M \Imp N \ \update_{[A]B} \Imp}
    \end{prooftree}\\
    \end{center}
    
    \noindent We first use the hp-admissibility of (IW) to get
        \begin{align*}
            &\vdots \ d_1' & &\vdots \ d_2' \\
            \quad   G \sepC X \update_{\alpha} [A]B,M &\Imp N, A \ \update_{[A]B} \Imp \quad
            &G \sepC X \update_\alpha [A]B, M \Imp &N \ \update_{[A]B} \Imp \update_{\alpha \cdot A} \ B \Imp
        \end{align*}
    
    \noindent then we apply the inductive hypothesis, use (L$[\cdot])$ and then hp-admissibility of (LC) to conclude:

    \begin{center}
    \begin{prooftree}[regular]
            \hypo{}
            \ellipsis{$d_1''$}{G \sep X \update_{\alpha} [A]B, M \Imp N, A}
            \hypo{}
            \ellipsis{$d_2''$}{G \sep X \update_\alpha [A]B,M \Imp N \update_{\alpha \cdot A} B \Imp}
        \infer2[(L$[\cdot]$)]{G \sep X \update_\alpha [A]B, [A]B, M \Imp N}
        \infer1[(LC)]{G \sep X \update_\alpha [A]B, M \Imp N}
    \end{prooftree}
    \end{center}
    
    \end{proof}

\section{Soundness and Completeness}\label{sectionSoundAndComplete}

In this section, we show that $\calculus$ is sound and complete.


\begin{theorem}[Soundness]
     For all formulas $A \in \lPAL$, if $\vdash_{\calculus} \Imp A$ then $\satisfies_{PAL} A$.
\end{theorem}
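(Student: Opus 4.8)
The plan is to prove, by induction on the height of derivations, the claim that $\vdash_{\calculus} G$ implies $\satisfies_{PAL} G^\tau$, and then to obtain the theorem from the instance $G = (\Imp A)$. Here $(\Imp A)^\tau = \K(\T \imp A)$, and $\satisfies_{PAL} \K(\T \imp A)$ holds if and only if $\satisfies_{PAL} A$: the formula $\T \imp A$ is classically equivalent to $A$, necessitation preserves validity, and, by reflexivity of $\sim$, validity of $\K A$ entails validity of $A$. So everything reduces to the claim.

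Before attacking it I would record some elementary facts from Definition~\ref{defSemPAL}: $M_{\alpha \cdot A} = (M_\alpha)_A$, hence $W_{\alpha \cdot A} = \{ w \in W_\alpha : M_\alpha, w \satisfies A\} \subseteq W_\alpha$, $V_{\alpha \cdot A}(p) = V_\alpha(p) \inter W_{\alpha \cdot A}$, and each $\sim_\alpha$ is an equivalence relation on $W_\alpha$ with $v \sim_\alpha w$ iff $v \sim w$ and $v,w \in W_\alpha$. A short induction on the length of $\alpha$ then yields the key semantic lemma: for every $\psi$ and every $(M,w)$, $M, w \satisfies [\alpha]\psi$ iff $w \in W_\alpha$ implies $M_\alpha, w \satisfies \psi$. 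In particular $[\alpha]$ is monotone, $M, w \satisfies [\alpha]\F$ iff $w \notin W_\alpha$, and $[\alpha]\psi$ is valid whenever $\psi$ is.

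The inductive step comes down to a \emph{local soundness} statement, one for each rule $\Rule$ of $\calculus$ with premises $G_1,\dots,G_k$ and conclusion $G$, namely $\satisfies_{PAL}(G_1^\tau \et \dots \et G_k^\tau) \imp G^\tau$. Granting these, the claim follows, since the $\tau$-image of every axiom is valid (its active sequent interprets as $[\alpha]$ of a classical tautology) and validities are closed under conjunction and modus ponens. To prove a local soundness statement one exploits that $G^\tau$ is a disjunction of formulas $\K X^\tau$, each $X^\tau$ being itself a disjunction of formulas $[\gamma]\Delta^\tau$: one fixes $(M,w_0)$ at which all premises hold, may assume that the disjuncts of $G^\tau$ coming from the hypersequent context and from the dynamic-sequent prefixes all fail at $w_0$ (otherwise $G^\tau$ holds there), and chases the remaining disjuncts through the clauses for $\K$ and $[\gamma]$ with the lemma and the facts above. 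For the propositional rules this is classical reasoning localised at a world; the modal rules (L$\K_1$), (L$\K_2$), (R$\K$) are handled much as in the \textbf{S5} hypersequent calculus of \cite{Poggiolesi2008}, the extra ingredient being that they act inside a fixed label $\alpha$, so the outer $[\alpha]$ is transported using monotonicity together with reflexivity, symmetry and transitivity of $\sim_\alpha$; the announcement rules (L$[\cdot])_1$, (L$[\cdot])_2$, (R$[\cdot])_1$, (R$[\cdot])_2$ follow by unfolding $M_\alpha, v \satisfies [A]B$ via $M_{\alpha \cdot A} = (M_\alpha)_A$ and splitting on whether $v$ survives the announcement of $A$; and the dynamic rules (Lat), (Rat), (New), (Recall) transcribe, respectively, atomic permanence (for the first two), $W_{\alpha \cdot A} \subseteq W_\alpha$, and the survival condition ``$w \in W_{\alpha \cdot A}$ implies $M_\alpha, w \satisfies A$'' --- in each of these the disjunct carried by the premise in excess of the conclusion is semantically subsumed by a disjunct they share.

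I expect the only genuinely delicate case to be (L$\K_3$), which interleaves modal and dynamic reasoning and has a variable number of premises. With $\beta = B_1 \cdots B_n$, the argument turns on the descending chain $W_\alpha \supseteq W_{\alpha \cdot B_1} \supseteq \dots \supseteq W_{\alpha \cdot \beta}$. Suppose all premises hold at $(M,w_0)$ while the context disjuncts and the disjunct $\K(X \update_\alpha \Gamma)^\tau$ of the conclusion fail there; the last failure furnishes a world $w \sim w_0$ with $M,w \nvDash X^\tau$, $w \in W_\alpha$ and $M_\alpha, w \nvDash \Gamma^\tau$. Evaluating the premises at $w$ in turn, the first $n$ force $M_\alpha, w \satisfies B_1$, then $M_{\alpha \cdot B_1}, w \satisfies B_2$, and so on, so that $w \in W_{\alpha \cdot \beta}$, and the last forces $M_{\alpha \cdot \beta}, w \nvDash A$. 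On the other hand, failure of the remaining disjunct of the conclusion yields $v \sim w_0$ with $v \in W_{\alpha \cdot \beta}$ and $M_{\alpha \cdot \beta}, v \satisfies \K A$; since $w \sim w_0 \sim v$ and both $w,v$ lie in $W_{\alpha \cdot \beta}$, symmetry and transitivity give $w \sim_{\alpha \cdot \beta} v$, whence $M_{\alpha \cdot \beta}, w \satisfies A$ --- a contradiction. This closes (L$\K_3$), and with it the induction; what remains is the bookkeeping that aligns the two variants of the announcement rules with the convention governing the auxiliary dynamic sequent $\update_{\alpha \cdot A}\, M' \Imp N'$. (Alternatively, since $\tau$ takes values in $\lPAL$, each local soundness statement could be discharged by deriving it in \textbf{PAL} and invoking Theorem~\ref{AxComp}, but the direct semantic route seems shorter.)
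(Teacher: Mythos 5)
The proposal is correct and takes essentially the same route as the paper: soundness is reduced to validity of the axioms plus local validity-preservation for each rule, established semantically by unfolding the interpretation $\tau$ through the clauses for $\K$ and $[\alpha]$. The only differences are presentational --- you work out (L$\K_3$) in detail where the paper chooses (R$[\cdot])_1$, and you make explicit the auxiliary facts (the lemma on $[\alpha]$, the chain $W_\alpha \supseteq W_{\alpha\cdot A}$, and the reduction of $\satisfies_{PAL} (\Imp A)^\tau$ to $\satisfies_{PAL} A$) that the paper leaves implicit.
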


\begin{proof}
    We need to show that axioms are valid and that all rules are correct, \emph{i.e.} that they preserve validity. As for the axioms, this is straightforward. As for the rules, for the sake of clarity, we only present in detail the case of the rule  (R$[\cdot])_1$. The other rules can be treated similarly.\\

    \noindent For (R$[\cdot])_1$, let $\alpha = A_1 \cdots A_n$ and suppose $\satisfies (G \sepC X \update_\alpha A, M \Imp N \update_{\alpha \cdot A} M' \Imp N',B)^\tau$ but $\nvDash (G \sepC X \update_\alpha M \Imp N, [A]B \update_{\alpha \cdot A} M' \Imp N')^\tau$. Then, there are a model $\ModelM$ and a world $w\in W$ such that
    \begin{align*}
        \M,w \nvDash G^\tau \lor \K (X^\tau \lor [\alpha] (\bigwedge M \imp \bigvee N \lor [A]B) \lor [\alpha][A](\bigwedge M' \imp \bigvee N') 
    \end{align*}
    so there is $u\in W$ such that $w\sim u$ and:
    \begin{align*}
        (1) \quad &\M,u \nvDash X^\tau \\
        (2) \quad &\M,u \nvDash [\alpha](\bigwedge M \imp \bigvee N \lor [A]B)\\
        (3)\quad &\M,u \nvDash [\alpha][A](\bigwedge M' \imp \bigvee N')
    \end{align*}

    From $(3)$, we get $\M,u \satisfies A_1, \M_{A_1},u \satisfies A_2, \cdots, \M_\alpha, u \satisfies A$ and $\M_{\alpha \cdot A} \nvDash \bigwedge M' \imp \bigvee N'$. From $(2)$ we get $\M_\alpha,u \satisfies \bigwedge M$ and  $\M_{\alpha},u \nvDash [A]B$. From that and $\M_\alpha, u \satisfies A$, we conclude that $\M_{\alpha \cdot A},u \nvDash B$.

    However, since $\satisfies (G \sepC X \update_\alpha A,M \Imp N \update_{\alpha \cdot A} M' \Imp N', B)^\tau$ and $\M,w \nvDash G^\tau$, necessarily
    \begin{align*}
        \M,w \satisfies \K (X^\tau \lor [\alpha](A \land \bigwedge M \imp \bigvee N) \lor [\alpha][A](\bigwedge M' \imp \bigvee N' \lor B)).
    \end{align*}
    Now, because $\M,u \nvDash X^\tau$, either $\M,u \satisfies [\alpha](A \land \bigwedge M \imp \bigvee N)$ or $M,u \satisfies [\alpha][A](\bigwedge M' \imp \bigvee N' \lor B)$.
    In the first case, we get a contradiction because $\M_\alpha,u \satisfies \bigwedge M$ and $\M_\alpha,u \satisfies A$ but $\M_\alpha,u \nvDash \bigvee N$. In the other case, we get a contradiction from $\M_{\alpha \cdot A},u \nvDash \bigwedge M' \imp \bigvee N'$ and $\M_{\alpha\cdot A},u \nvDash B$ which implies $\M_{\alpha \cdot A} \nvDash \bigwedge M' \imp \bigvee N' \lor B$, hence $M,u \nvDash [\alpha][A](\bigwedge M' \Imp \bigvee N' \lor B)$.\\

    Therefore $\satisfies (G \sepC X \update_\alpha M \Imp N,[A]B \update_{\alpha \cdot A} M' \Imp N')^\tau$.

\end{proof}


We now move to the completeness of the calculus. In order to prove that $\calculus$ is complete with respect to PAL, we need to  show that all axioms of PAL are derivable, as well as that the inference rules of PAL are admissible. This can be standardly carried out except for the case of the axiom of announcement composition, which requires us to show a  derivability lemma (see Lemma \ref{lemmaComposition}). On the other hand,  in order to prove this derivability lemma, we first need an enhanced  notion of complexity of formulas, \emph{i.e.} a notion of complexity of formulas that is relative to the sequents formulas belong to in the context of a dynamic hypersequent.\footnote{This complexity measure will also be used later to prove Cut-admissibility in Section \ref{sectionCut}.} We thus also introduce the notion of \emph{DHS-complexity} in Definition \ref{defComplexity}. Finally,   note that many other calculi for PAL (\emph{e.g.} \cite{nomura2015revising},\cite{balbiani2010tab}) require some preliminary lemmas in order to show a completeness result; hence our procedure is standard in the context of proof theory for public announcement logic.

    \begin{definition}[DHS-Complexity]\label{defComplexity}
    The DHS-complexity of a formula $A$ \emph{relatively to a sequence of announcements} $\alpha$, $c(A,\alpha)$, is defined as:
    \begin{align*}
        c(A,\alpha) := c(A) + c(\alpha),
    \end{align*}
    where $c(A)$ is inductively defined as follows:
    \begin{align*}
        c(p)&:= 1  \quad \text{ for } p\in P  \\
        c(\lnot A) &:= c(A) + 1  & c(\K A) &:= c(A) + 1\\
        c(A \land B) &:= c(A) + c(B) +1 & c([A]B) &:= c(A) + c(B) + 1
    \end{align*}
    whilst $c(\alpha)$ is inductively defined as follows:
    \begin{align*}
        c(\epsilon) &:= 0 &  c(\alpha \cdot A) &:= c(\alpha) + c(A).
    \end{align*}
    \end{definition}

    \begin{lemma}\label{lemmaComposition}
    Let $A, B, C$ be formulas of $\lPAL$, and $\alpha,\beta$ be finite sequences of announcements. Then, the following dynamic hypersequents are derivable:
    \begin{align*}
        (1) \quad &G \sep X \update_{\alpha\cdot A \cdot B \cdot \beta}\ C, M \Imp N \update_{\alpha \cdot A\land[A]B \cdot \beta} \ M' \Imp N',C \\
        (2) \quad &G \sep X \update_{\alpha\cdot A \cdot B \cdot \beta} \ M \Imp N, C \update_{\alpha \cdot A\land[A]B \cdot \beta} \ C, M' \Imp N'
    \end{align*}
\end{lemma}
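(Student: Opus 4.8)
The plan is to prove both dynamic hypersequents simultaneously by induction on the DHS-complexity $c(C,\gamma)$, where $\gamma$ is the relevant sequence of announcements labelling the sequent containing the distinguished occurrence of $C$ — that is, $\gamma = \alpha\cdot A\cdot B\cdot\beta$ on the left-hand sequent and $\gamma = \alpha\cdot (A\wedge[A]B)\cdot\beta$ on the right-hand sequent. The point of using DHS-complexity rather than ordinary formula complexity is precisely that when we decompose $C$, the residual subformulas live under the same (or shorter) announcement sequences, so the measure strictly decreases; this is what the enhanced complexity notion was introduced for. The two statements $(1)$ and $(2)$ need to be proved together because, as in the usual derivability-of-identity argument, the case $C = \K D$ forces us to switch sides: the right $\K$-rule generates a new sequent $\update_\gamma\ \Imp D$ which must then be connected back by a left modal rule, and this crossing is what mixes $(1)$ and $(2)$.

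First I would treat the base case $C = p$ atomic. Here $(1)$ and $(2)$ are obtained by repeatedly applying the dynamic atomic rules (Lat), (Rat) — which propagate $p$ from a model to its updates and conversely — together with the axioms $G\sep X\update_\gamma\ p,M\Imp N,p$, to move the atom across the gap between the label $\alpha\cdot A\cdot B\cdot\beta$ and the label $\alpha\cdot(A\wedge[A]B)\cdot\beta$. The key semantic fact being mirrored is that these two successive updates yield the same set of worlds and the same valuation, so atomic permanence suffices. Then I would go through the inductive cases on the shape of $C$: for $C = \lnot D$, $C = D_1\wedge D_2$, $C = [D]E$, apply the corresponding logical rule (reading it bottom-up) to reduce to instances of $(1)$ and $(2)$ with smaller DHS-complexity — using hp-invertibility (Lemma \ref{lemmaInvertible}) where needed to rearrange premises, and hp-admissibility of weakening (Lemmas \ref{lemmaAdmissibilityIW}, \ref{lemmaAdmissibilityEW}, \ref{lemmaAdmissibilityDW}) to introduce the auxiliary dynamic sequents the announcement rules require. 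For $C = [D]E$ in particular, the two announcement rules (L$[\cdot]$) and (R$[\cdot]$) extend the announcement sequence by one more entry on both sides in a parallel way, so the induction goes through with $\gamma' = \gamma\cdot D$; this is the case where one has to be careful that the rule instances on the two sides of the update gap stay synchronised.

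The main obstacle I expect is the modal case $C = \K D$. Reading (R$\K$) bottom-up on, say, the right-hand occurrence of $\K D$ in $(1)$ spawns a sequent $\update_{\alpha\cdot(A\wedge[A]B)\cdot\beta}\ \Imp D$, and one must now derive a dynamic hypersequent that contains both $\update_{\alpha\cdot A\cdot B\cdot\beta}\ \K D, M\Imp N$ and this new empty-antecedent sequent. To close this we need a left modal rule — (L$\K_2$) or, more delicately, the dynamic rule (L$\K_3$) — to feed $D$ from the $\K D$ on the left across to the new component; and because the two components carry the announcement labels $\alpha\cdot A\cdot B\cdot\beta$ and $\alpha\cdot(A\wedge[A]B)\cdot\beta$, which differ, we are exactly in the situation (L$\K_3$) was designed for, with $\beta$-part of that rule being the "difference" between the two sequences. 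Making the bookkeeping of labels precise here — checking that (L$\K_3$)'s side conditions on $\beta = B_1\cdots B_n$ are met and that the $n+1$ premises it generates are themselves reducible to instances of $(1)$ and $(2)$ of strictly smaller DHS-complexity — is the technical heart of the argument, and may require an auxiliary sub-lemma about derivability of $\update_\alpha\ \Imp A\wedge[A]B$-type sequents relating the "$w\in M_{\alpha\cdot A\cdot B}$" and "$w\in M_{\alpha\cdot(A\wedge[A]B)}$" conditions. Everything else is routine rule-pushing, so I would state the base and easy cases briefly and spend the detail on $C = \K D$.
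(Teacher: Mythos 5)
Your plan follows essentially the same route as the paper's proof: simultaneous induction on the DHS-complexity of the occurrences of $C$, a base case handled by the dynamic atomic rules (the paper additionally uses (New) to create the intermediate sequents so that the two label chains meet at a common initial sequent where the axiom applies), routine propositional and announcement cases, and a $\K$ case resolved by (L$\K_3$) whose extra premises are supplied by the inductive hypothesis applied to the announcements $B_i$ of $\beta$ — no separate sub-lemma is needed, precisely because $c(\K C_1,\gamma)$ dominates $c(B_i,\gamma')$ for prefixes $\gamma'$ of $\gamma$. This is the paper's argument in all essentials.
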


\begin{proof}
    We simultaneously prove that $(1)$ and $(2)$ are derivable by induction on the sum of the DHS-complexity of each occurrence of C, denoted $\overline{c}(C):= c(C, \alpha\cdot A \cdot B \cdot \beta) + c(C, \alpha \cdot A \land [A]B \cdot \beta)$.

    Without loss of generality, we prove the result for empty $G$, $X$ and $\alpha$. In the following, $\beta = B_1 \cdot B_2 \cdots B_{n-1}$ and we write $\phi$ for $A \land [A]B$. We distinguish the following cases, where we only present the details for $(1)$, whilst $(2)$ is treated similarly.\\

    \noindent$\bullet$ Case $C= p$. To derive $(1)$, we (bottom-up) apply $2\times n$ times (New) to go from $\update_{\phi\cdot \beta} \ \Gamma$ to $\update_\epsilon \ \Gamma'$ and from $\update_{A\cdot B \cdot \beta} \ \Delta$ to $\update_{A} \ \Delta'$, then we apply $n+1$ times (Rat) and $n$ times (Lat) to obtain an axiom with $p\Imp p$. We write $\Rule^\ast$ when we apply several times rule $\Rule$.\\

    \begin{prooftree}[regular]
        \hypo{p \Imp p \update_A \ \Imp p \update_{A\cdot B} \ \Imp p \update \cdots \update_{A\cdot B \cdot \beta} M \Imp N, p \update_\phi \ p\Imp  \update \cdots\update_{\phi \cdot \beta} \ p, M' \Imp N'}
        \infer1[(Lat)$^\ast$ ($n$ times)]{\Imp p \update_A \ \Imp p \update_{A\cdot B} \ \Imp p \update \cdots \update_{A\cdot B \cdot \beta} M \Imp N, p  \update_\phi \ \Imp \update \cdots \update_{\phi \cdot \beta} \ p, M' \Imp N'}
        \infer1[(Rat)$^\ast$ ($n+1$ times)]{\Imp \update_A\  \Imp \update_{A\cdot B} \ \Imp \update \cdots \update_{A\cdot B \cdot \beta} M \Imp N, p  \update_\phi \ \Imp \update \cdots \update_{\phi \cdot \beta} \ p, M' \Imp N'}
        \infer1[(New)$^\ast$ ($2n$ times)]{\update_{A\cdot B \cdot \beta} M \Imp N, p \update_{\phi\cdot \beta}  \ p, M' \Imp N'}
    \end{prooftree}

    \hfill \\

    \noindent$\bullet$ Case $C =\lnot C_1$. Since $\overline{c}( C_1) < \overline{c}(\lnot C_1)$, we apply the inductive hypothesis. Then, we simply apply (L$\lnot$) and (R$\lnot$). This is straightforward.\\

    \noindent$\bullet$ Case $C = C_1 \land C_2$. We can apply the inductive hypothesis because $\overline{c}(C_i) < \overline{c}(C_1 \land C_2)$ for $i \in \lbrace 1, 2 \rbrace$. Then, we apply admissibility of (IW) and use (L$\land$) and (R$\land$). This is straightforward. \\

    \noindent$\bullet$ Case $C = \K C_1$. Note that $\overline{c}(C_1) < \overline{c}(\K C_1)$ and, for all $\beta'\cdot B_i \subseteq \beta$, $c(B_i, A \cdot B \cdot \beta') + c(B_i, A \land [A]B \cdot \beta') < \overline{c}(\K C_1)$. This allows us to use the inductive hypothesis to obtain the following DHSs:
    \begin{align*}
        (1) \quad &\update_{A \cdot B} \Imp B_1 \update_\phi B_1 \Imp \\
        (2) \quad &\update_{A \cdot B \cdot B_1} \Imp B_2 \update_{\phi \cdot B_1} B_2 \Imp \\
        \vdots \ \quad & \\
        (n-1) \quad &\update_{A \cdot B \cdots B_{n-2}} \Imp B_{n-1} \update_{\phi \cdots B_{n-2}} B_{n-1} \Imp \\
        (n) \quad &\update_{A \cdot B \cdot \beta} C_1 \Imp \update_{\phi \cdot\beta} \Imp C_1
    \end{align*}

    \noindent By admissibility of (IW), (EW) and (DW), from each $(i)$ we get the following $(i')$:
    \begin{align*}
        (1)' \ &\update_{A \cdot B \cdot \beta} \K C_1, \Gamma \update_{\phi \cdot \beta} \Gamma' \sep A \Imp \update_A B \Imp \update_{A \cdot B} \Imp B_1 \update_\phi B_1 \Imp \update \cdots \update_{\phi \cdot \beta} \Imp C_1 \\
        (2)' \ &\update_{A \cdot B \cdot \beta} \K C_1, \Gamma \update_{\phi \cdot \beta} \Gamma' \sep A \Imp \update_A B \Imp \update_{A \cdot B \cdot B_1} \Imp B_2 \update_{\phi} B_1 \Imp \update_{\phi \cdot B_1} B_2 \Imp \update \cdots \update_{\phi \cdot \beta} \Imp C_1 \\
        \vdots \quad & \\
        (n)' \ &\update_{A \cdot B \cdot \beta} \K C_1, \Gamma \update_{\phi \cdot \beta} \Gamma' \sep A \Imp \update_A B \Imp \update_{A\cdot B \cdot \beta} C_1 \Imp \update_\phi B_1 \Imp \update \cdots \update_{\phi \cdot \beta} \Imp C_1 
    \end{align*}

    \noindent To apply (L$\K_3$) with $\update_{A\cdot B\cdot \beta} \K C_1, \Gamma'$, we need $n+1$ premises, since the sequence of announcements here is $B \cdot B_1 \cdots B_{n-1}$. Hence, we also need the following DHS, that is derivable by Lemma \ref{lemmaDerivabilityTrivialSequents}:
    \begin{displaymath}
        (0)' \quad \update_{A \cdot B \cdot \beta} \K C_1, \Gamma \update_{\phi \cdot \beta} \Gamma' \sep A \Imp \update_A B \Imp B \update_\phi B_1 \Imp \update \cdots \update_{\phi \cdot \beta} \Imp C_1
    \end{displaymath}
    We now apply (L$\K_3$)  to $(0)'-(n)'$ and continue the derivation as follows, where $X := \update_{A\cdot B\cdot \beta} \ \K C_1, \Gamma \update_{\phi \cdot \beta}\ \Gamma'$:\\

    \begin{center}
    \begin{prooftree}[regular]
            \hypo{X \sep A \Imp A \update \cdots}
                \hypo{(0)'}
                \hypo{(1)'}
                \hypo{\cdots}
                \hypo{(n)'}
            \infer4[(L$\K_3$)]{ X \sep A\Imp \update_A B \Imp \update_{\phi} B_1 \Imp \update \cdots \update_{\phi \cdot \beta} \Imp C_1}
        \infer2[(L$[\cdot]$)]{\update_{A\cdot B\cdot \beta} \ \K C_1, \Gamma \update_{\phi \cdot \beta}\ \Gamma' \sep A,[A]B \Imp \update_{\phi} \ B_1 \Imp \update \cdots \update_{\phi \cdot \beta} \Imp C_1}
        \infer1[(L$\land$)]{\update_{A\cdot B\cdot \beta} \ \K C_1, \Gamma \update_{\phi \cdot \beta}\ \Gamma' \sep A \land [A]B \Imp \update_{\phi} \ B_1 \Imp \update \cdots \update_{\phi \cdot \beta} \Imp C_1}
        \infer1[(Recall)$^\ast$  ($n$ times)]{\update_{A\cdot B\cdot \beta} \ \K C_1, \Gamma \update_{\phi \cdot \beta}\ \Gamma' \sep  \Imp \update_{\phi} \Imp \update \cdots \update_{\phi \cdot \beta} \Imp C_1}
        \infer1[(New)$^\ast$  ($n$ times)]{\update_{A\cdot B\cdot \beta} \ \K C_1, \Gamma \update_{\phi \cdot \beta}\ \Gamma' \sep \update_{\phi \cdot \beta} \Imp C_1 }
        \infer1[(R$\K$)]{\update_{A\cdot B\cdot \beta} \ \K C_1, \Gamma \update_{\phi \cdot \beta}\ \Gamma', \K C_1}
    \end{prooftree}
    \end{center}

\hfill \\

\noindent$\bullet$ Case $C=[C_1]C_2$. We use the inductive hypothesis and admissibility of (DW). Note that the inductive hypothesis applies because $\overline{c}(C_1) < \overline{c}([C_1]C_2)$ and $c(C_2, A \cdot B \cdot \beta \cdot C_1) + c(C_2, A \land [A] B \cdot \beta \cdot C_1) < \overline{c}([C_1]C_2)$.\\

\noindent
\begin{prooftree}[compact]
        \hypo{\update_{A \cdot B \cdot \beta}  \, \Gamma,  C_1 \update_{\phi \cdot \beta} C_1, \Gamma'}
        \infer1[\footnotesize{(DW)}]{\update_{A \cdot B \cdot \beta} \, \Gamma,  C_1 \update_{\phi \cdot \beta} C_1, \Gamma' \update_{\phi \cdot \beta \cdot C_1} \Imp C_2}
        \hypo{\update_{A \cdot B \cdot \beta \cdot C_1}\,  C_2 \Imp \update_{\phi \cdot \beta \cdot C_1} \Imp C_2}
        \infer1[\footnotesize{(DW)}]{\update_{A \cdot B \cdot \beta} \, \Gamma \update_{\phi \cdot \beta} C_1, \Gamma' \update_{A \cdot B \cdot \beta \cdot C_1} C_2 \Imp \update_{\phi \cdot \beta \cdot C_1} \Imp C_2}
    \infer2[\footnotesize{(L$[\cdot]$)}]{\update_{A\cdot B\cdot \beta}\,  [C_1]C_2, \Gamma \ \update_{\phi \cdot \beta} C_1, \Gamma' \ \update_{\phi \cdot \beta \cdot C_1} \Imp C_2 }
    \infer1[\footnotesize{(R$[\cdot]$)}]{\update_{A\cdot B\cdot \beta} \, [C_1]C_2, \Gamma \ \update_{\phi \cdot \beta} \,  \Gamma', [C_1]C_2 }
\end{prooftree}
\end{proof}

Note that Lemma \ref{lemmaComposition} can be understood as a way of identifying dynamic sequents labelled by $A\cdot B$ and those labelled by $A \land [A]B$ (and more generally by $\alpha \cdot A \cdot B \cdot \beta$ and by $\alpha \cdot A \land [A]B \cdot \beta$). This result has a semantic counterpart. Indeed, it can be proved that, for all models $\ModelM$ and all formulas $A,B$, $W_{A\cdot B}= W_{A \land [A]B}$ (see for instance \cite{DEL}).

\begin{theorem}[Completeness]
     For all formulas $A \in \lPAL$, if $\satisfies_{PAL} A$ then $\vdash_{\calculus} \Imp A$.
\end{theorem}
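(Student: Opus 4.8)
The plan is to simulate the Hilbert system \textbf{PAL} inside \calculus. By Theorem~\ref{AxComp}, $\satisfies_{PAL} A$ implies $\proves_{\textbf{PAL}} A$, so it suffices to prove that (i) for every axiom $A$ of \textbf{PAL} we have $\proves_{\calculus}\ \Imp A$, and (ii) the two inference rules modus ponens (MP) and necessitation (Nec) preserve $\calculus$-derivability. Since all rules of \calculus carry an arbitrary context $G \sepC X \update_\alpha \cdot$ along, the derivations of the purely propositional and modal axioms are obtained by simply ignoring that context.

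For the axioms I would argue as follows. The propositional tautologies are derivable because the propositional rules of \calculus are a $\mathrm{G3}$-style calculus for classical logic: backward proof search on $\Imp A$ terminates in dynamic hypersequents that are axioms or instances of Lemma~\ref{lemmaDerivabilityTrivialSequents}. The S5 axioms K, T, 4, 5 are derived exactly as in the hypersequent calculus for S5 of \cite{Poggiolesi2008,Poggiolesi2010}, since (L$\K_1$), (L$\K_2$), (R$\K$) are those very rules transported to the dynamic setting; the extra dynamic components remain inert. For each reduction axiom $[A]\chi \eq \psi$, I would first peel off the outer propositional connectives of the biconditional with (R$\lnot$), (L$\lnot$), (L$\land$), (R$\land$), reducing the task to deriving two dynamic hypersequents in which $[A]\chi$ sits on a fixed side; I would then unfold the announcement with (L$[\cdot]$) / (R$[\cdot]$) and close the branches with the dynamic rules --- (Lat) and (Rat) for atomic permanence, and (New), (Recall), together with the admissible weakenings, for announcement-and-negation and announcement-and-conjunction, and (L$\K_3$) with (R$\K$) for $[A]\K B \eq (A \imp \K[A]B)$. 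The announcement-composition axiom $[A][B]C \eq [A\land[A]B]C$ is handled by Lemma~\ref{lemmaComposition}: after unfolding the two announcement prefixes with (L$[\cdot]$), (R$[\cdot]$), the remaining leaves are (instances of) the dynamic hypersequents $(1)$ and $(2)$ of that lemma with $\alpha = \epsilon$ and $\beta = \epsilon$, which precisely identify the sequents labelled $A \cdot B$ with those labelled $A \land [A]B$.

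For the inference rules: Nec is immediate --- from $\proves_{\calculus}\ \Imp A$, external weakening (Lemma~\ref{lemmaAdmissibilityEW}) gives $\proves_{\calculus}\ \Imp \sep \Imp A$, and one application of (R$\K$) with empty context and $\alpha = \epsilon$ yields $\proves_{\calculus}\ \Imp \K A$. For MP, given $\proves_{\calculus}\ \Imp A$ and $\proves_{\calculus}\ \Imp A \imp B$, I would use hp-invertibility of the logical rules (Lemma~\ref{lemmaInvertible}) to unfold $A \imp B$ into $\proves_{\calculus}\ A \Imp B$, and then apply the Cut rule to $\Imp A$ and $A \Imp B$ to obtain $\proves_{\calculus}\ \Imp B$. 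Invoking Cut here is legitimate: its admissibility is established, by a purely syntactic argument that does not rely on the present theorem, in Section~\ref{sectionCut}, so there is no circularity.

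The main obstacle is the derivation of the reduction axioms, and within them announcement composition and announcement-and-knowledge. For the former, the delicate point --- passing between dynamic sequents labelled $\alpha \cdot A \cdot B \cdot \beta$ and those labelled $\alpha \cdot A \land [A]B \cdot \beta$ --- is exactly what Lemma~\ref{lemmaComposition} was proved for, so the remaining work is only to fit the two announcement unfoldings around it. For the latter, care is needed so that (L$\K_3$) and (R$\K$) interact correctly with the announcement labels, but this is routine once the rule shapes are lined up. The only other point to get right is the appeal to Cut in the MP step, which, as noted, is unproblematic since cut-elimination is proved independently.
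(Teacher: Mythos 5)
Your proposal is correct and follows essentially the same route as the paper: reduce to derivability of the \textbf{PAL} axioms plus admissibility of MP (via Cut, whose syntactic admissibility is indeed established independently) and Nec (via external weakening and (R$\K$)), with the reduction axioms obtained by bottom-up proof search and announcement composition handled by Lemma~\ref{lemmaComposition}. No substantive differences to report.
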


    \begin{proof}
    Since \textbf{PAL} is complete w.r.t. the standard semantics of PAL, it is sufficient to demonstrate that all axioms of \textbf{PAL} are derivable in $\calculus$ and that the rules of modus ponens and necessitation are admissible. Modus ponens can be shown to be admissible using the cut-rule (see Section \ref{sectionCut}); whilst the necessitation rule can be shown to be admissible using external weakening. Derivations for the axioms of propositional logic and modal logic \textbf{S5} are obtained by direct (bottom-up) proof-search, see  \cite{Poggiolesi2008} for details. Derivations for the reduction axioms are also obtained by (bottom-up) proof search (except for the announcement composition axiom that we will consider separately). As an illustrative example, consider the axiom of announcement and knowledge, which we prove to be a theorem by means of the following derivation:

    \vspace{0.3cm}
    
    \begin{center}
    \begin{prooftree}[regular]
            \hypo{A \Imp A \sep A \Imp \update_A \Imp B}
            \hypo{A \Imp  \update_A \K B \Imp \sep A \Imp \update_A B \Imp B}
            \infer1[(L$\K_2$)]{A \Imp \update_A \K B \Imp \sep A \update_A \Imp B}
        \infer2[(L$[\cdot]$)]{A, [A]\K B  \Imp \sep A \Imp \update_A \Imp B}
        \infer1[(R$[\cdot]$)]{A, [A]\K B  \Imp \sep \Imp [A]B}
        \infer1[(R$\K$)]{A, [A]\K B \Imp \K [A]B}
        \infer1[(R$\imp$)]{[A]\K B \Imp A \imp \K [A]B}
        \infer1[(R$\imp$)]{\Imp [A]\K B \imp( A \imp \K [A]B)}
    \end{prooftree}
    \end{center}

    and

\begin{center}
    \begin{prooftree}[regular]
            \hypo{A \Imp A \update_A \Imp \K B}
                \hypo{A, \K [A]B \Imp  \update_A  \Imp \sep A \Imp A \update_A\Imp B}
                \hypo{A, \K [A]B \Imp  \update_A  \Imp \sep A \Imp  \update_A B \Imp B}
            \infer2[(L$[\cdot]$)]{A, \K [A] B\Imp \update_A \Imp \sep A, [A]B \Imp \update_A \Imp B}
            \infer1[(L$\K_2$)]{A, \K [A]B \Imp \update_A \Imp \sep A \Imp \update_A \Imp B}
            \infer1[(Recall)]{A, \K[A]B \Imp \update_A \Imp \sep \Imp \update_A \Imp B}
            \infer1[(New)]{A, \K[A]B \Imp \update_A \Imp \sep \update_A \Imp B}
            \infer1[(R$\K$)]{A, \K [A] B \Imp  \update_A \Imp \K B}
        \infer[separation = -5em]2[(L$\imp$)]{A, A \imp \K [A]B \Imp \update_A \Imp \K B}
        \infer1[(R$[\cdot]$)]{A \imp \K [A]B \Imp [A]\K B}
        \infer1[(R$\imp$)]{\Imp (A \imp \K [A]B) \imp [A]\K B}
    \end{prooftree}
\end{center}

    \hfill \\

    We now move to the axiom of announcements composition $[A][B]C \eq [A\land [A]B]C$. Providing a direct proof for that axiom schema requires Lemma \ref{lemmaComposition} that demonstrates derivability of dynamic hypersequents $\update_{A \cdot B}  \Imp C \update_{A \land [A]B} C \Imp$ and $\update_{A \cdot B}  C \Imp \update_{A \land [A]B} \Imp C$, for all formulas $C$. We only show the derivation for one direction, whilst the other can be obtained in a similar way.\\

    \noindent
    \begin{prooftree}[small]
                \hypo{A \Imp A \update_A B \Imp \update_{A\cdot B} \Imp C}
                \hypo{A, A \Imp \update_A B \Imp B \update_{A\cdot B} \Imp C}
                \infer1[(R$[\cdot]$)]{A \Imp [A]B \update_A B \Imp \update_{A\cdot B} \Imp C}
            \infer2[(R$\land$)]{A \Imp A \land [A]B \update_A B \Imp \update_{A \cdot B} \Imp C}
            \hypo{}
            \ellipsis{Lemma \ref{lemmaComposition}}{A \Imp \update_A \ B \Imp \update_{A \cdot B} \Imp C \update_{A \land [A]B} C \Imp}
        \infer2[(L$[\cdot]$)]{A, [A\land [A]B]C \Imp \update_{A} \ B \Imp \update_{A \cdot B} \Imp C}
        \infer1[(R$[\cdot]$)]{A, [A\land [A]B]C \Imp \update_A \Imp [B]C}
        \infer1[(R$[\cdot]$)]{[A\land [A]B]C \Imp [A][B]C}
        \infer1[(R$\imp$)]{\Imp [A\land [A]B]C \imp [A][B]C}
    \end{prooftree}\\

\end{proof}


\section{Cut Admissibility}\label{sectionCut}

In this section we demonstrate that the cut-rule is admissible in the calculus $\calculus$\footnote{Hereafter, for clarity and succinctness, we may omit explicit notations of derivations of premises $\vdots \ d$.}.  We first prove the following lemma.

\begin{lemma}\label{lemmaPermutationRules}
    Propositional rules, modal rules, announcement rules, as well as (New), (Recall) and (Lat) permute down with respect to $n$ applications of (Rat). Furthermore, any occurrence of (Rat) also permutes down with respect to $n$ applications of (Rat) when the atom $p$ that is removed in its conclusion is not active in either of these $n$ applications of (Rat).
\end{lemma}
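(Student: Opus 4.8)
The plan is to prove both assertions by first reducing to the case of a single occurrence of (Rat): once we know that each rule $\Rule$ of the list permutes below one (Rat), an easy induction on $n$ yields the general statement — peel off the lowest (Rat), apply the induction hypothesis to the block of $n-1$ occurrences above it so as to bring $\Rule$ below them, and then permute $\Rule$ below the last (Rat). The same reduction handles the second claim, one (Rat)-below-(Rat) step at a time. So it suffices, for the first part, to carry out a case analysis on $\Rule$ in the configuration where $\Rule$ is applied immediately above a single occurrence of (Rat).

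The guiding observation is that an occurrence of (Rat) modifies a dynamic hypersequent in a very restricted way: read downwards, it deletes a single atom $p$ from the succedent of some $\alpha$-labelled sequent, where that same $p$ also occurs in the succedent of the adjacent $\alpha\cdot A$-labelled sequent of the same dynamic sequent; nothing else changes. By contrast, every rule $\Rule$ in the list either has a compound principal formula, or has its active formula in an antecedent, or merely reshapes the dynamic-sequent structure. Hence in all the routine cases the two adjacent sequents on which (Rat) acts occur — possibly carrying extra context formulas — in the premise(s) of $\Rule$ as well; we therefore apply (Rat) to each premise of $\Rule$ (it is still applicable, since the two occurrences of $p$ it needs are present) and then re-apply $\Rule$, whose active data is untouched by (Rat), recovering the same conclusion. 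Propositional rules, (L$\K_1$), (L$\K_2$), (Recall), (Lat) and the announcement rules in their $(\cdot)_1$ form are all of this kind. For (R$\K$) one notes in addition that the fresh component $\update_\alpha \Imp A$ it creates is a one-sequent dynamic sequent, so (Rat), which needs two adjacent sequents inside a single dynamic sequent, can never act on it; thus (Rat) again acts on material already present in the premise. For (L$\K_3$), whose application has $n+1$ premises, one checks that every sequent of the conclusion appears, with at most extra formulas and with adjacencies preserved, in each premise; the given (Rat) therefore lifts simultaneously to all premises and (L$\K_3$) is re-applied.

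The only genuinely delicate cases — and the main obstacle of the proof — are those in which $\Rule$ deletes or creates a dynamic sequent that (Rat) might otherwise target, namely (New), (R$[\cdot])_2$ and (L$[\cdot])_2$. For (New): the $\alpha$-labelled sequent does not occur in the conclusion, so an occurrence of (Rat) applied to that conclusion cannot act on the pair $(\alpha,\alpha\cdot A)$; it must act on some other adjacent pair, which is already present in the premise of (New), and the permutation is routine. For (R$[\cdot])_2$ and (L$[\cdot])_2$: the sequent $\update_{\alpha\cdot A}\ M'\Imp N'$ is absent from the conclusion, so (Rat) on the conclusion again cannot act on the pair $(\alpha,\alpha\cdot A)$, hence acts on material already in the premises, and one re-applies the $(\cdot)_2$ form of the rule afterwards. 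The crucial verification throughout is simply to confirm, dynamic rule by dynamic rule and announcement rule by announcement rule, that the adjacent pair of sequents seen by (Rat) is either left intact by $\Rule$ or, when $\Rule$ does reshape the dynamic structure, cannot be that pair at all.

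Finally, for an occurrence $r$ of (Rat) permuting below another occurrence $r'$ of (Rat): $r$ deletes, downwards, an atom $p$ from the succedent of some $\alpha$-sequent, and by hypothesis $p$ is not active in $r'$. Thus $r'$ acts on a different succedent atom (in the same or a different sequent), so we may apply $r'$ first — it remains applicable because its active atoms are untouched by $r$, and $r$'s premise carries in fact one extra copy of $p$, which only helps — and then apply $r$, whose active atom $p$ is, precisely by the side condition, still present after $r'$. This produces the same conclusion and completes the permutation.
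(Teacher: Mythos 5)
Your argument is correct and rests on the same basic observation as the paper's proof: an application of (Rat) only deletes one succedent atom from one member of an adjacent pair of sequents, so the pair it needs survives, possibly carrying extra side formulas, into the premises of every rule in the list, and the delicate cases are exactly those rules that create or delete a dynamic sequent. The organisation, however, differs in two respects. First, you reduce to a single occurrence of (Rat) and recover the general claim by induction on $n$, whereas the paper manipulates the block of $n$ applications of (Rat) as a single unit; both work. Second, and more substantially, for rules with several premises you lift the (Rat) applications into \emph{all} premises and re-apply the rule at the bottom, whereas the paper's worked example for (L$\K_3$) treats the configuration in which the $n$ applications of (Rat) sit above just \emph{one} premise, and it realigns the remaining premises by hp-admissible internal weakening before applying (L$\K_3$) and pushing the (Rat)s underneath. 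Your uniform ``lift into every premise'' transformation needs no weakening, and it is the one that matches both the single-premise example and the configuration actually arising in case (iv) of the cut-elimination argument, where the traced applications of (Rat) form a linear chain below the conclusion of the rule $\mathcal{R}''$ being permuted; so this divergence is not a defect. Two further points in your favour: your explicit isolation of (New), (R$[\cdot])_2$ and (L$[\cdot])_2$ --- where one must check that (Rat) cannot be targeting the pair that the rule creates or removes --- makes precise something the paper leaves implicit, and you supply an argument for the second claim (permutation of (Rat) past (Rat) on a non-active atom), which the paper asserts without proof.
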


\begin{proof}
Let us first consider the permutation with 1-premise rules, that is straightforward. We show the permutability of $n$ consecutive applications of (Rat) with the rule (R$[\cdot])$, while other cases can be treated similarly. In the following, (Rat)$^\ast$ denotes $n$ applications of (Rat).

    \begin{center}
        \begin{prooftree}[regular]
            \hypo{G \sep M \Imp N, p \update \cdots \update_\alpha A, M' \Imp N', p \ \update_{\alpha \cdot A} \Imp B}
            \infer1[(R$[\cdot]$)]{G \sep M \Imp N, p \update \cdots \update_\alpha  M' \Imp N', p, [A]B }
            \ellipsis{(Rat)$^\ast$}{G \sep M \Imp N \update \cdots \update_\alpha  M \Imp N', p, [A]B}
        \end{prooftree}
        
        \bigskip
        $\downarrow$
        \bigskip
        
        \begin{prooftree}[regular]
            \hypo{G \sep M \Imp N, p \update \cdots \update_\alpha A, M' \Imp N', p \ \update_{\alpha \cdot A} \Imp B}
            \ellipsis{(Rat)$^\ast$}{G \sep M \Imp N \update \cdots \update_\alpha  A, M' \Imp N', p \ \update_{\alpha \cdot A} \Imp B}
            \infer1[(R$[\cdot]$)]{G \sep M \Imp N \update \cdots \update_\alpha  M' \Imp N', p, [A]B } 
        \end{prooftree}
    \end{center}

    Let us now consider the permutation with n-premises rules (for $n \geq 2$). As a way of paradigmatic example, we show that rule (L$\K_3$) permutes down with $n$ applications of (Rat). Without loss of generality, let us assume the first premise has been obtained by $n$ applications of (Rat). We then proceed as follows (where $\beta = B_1 \cdots B_{k-1}$):

    \begin{center}
    \begin{prooftree}[regular]
        \hypo{G \sep \Gamma,  p \update \cdots \update_\alpha \Gamma', p, B_1 \sep Y \update_{\alpha \cdot \beta} \K C, \Delta}
        \ellipsis{(Rat)$^\ast$}{G \sep \Gamma \update \cdots \update_\alpha \Gamma', p, B_1 \sep Y \update_{\alpha \cdot \beta} \K C, \Delta}
        \hypo{\cdots}
        \hypo{}
        \ellipsis{$d$}{G \sep \Gamma \update \cdots \update_\alpha \Gamma', p \update_{\alpha \cdot \beta} C \Imp \sep Y \update_{\alpha \cdot \beta} \K C, \Delta}
        \infer3[(L$\K_3$)]{G \sep \Gamma \update \cdots \update_\alpha \Gamma', p \sep Y \update_{\alpha \cdot \beta} \K C, \Delta}
    \end{prooftree}
    \end{center}
    
    \noindent Here, we use hp-admissibility of weakening on all the premises but the leftmost one to obtain the required premises and apply (L$\K_3$) before applying (Rat) $n$ times.
    \begin{center}
    \begin{prooftree}[regular]
        \hypo{G \sep \Gamma,  p \update \cdots \update_\alpha \Gamma', p, B_1 \sep Y \update_{\alpha \cdot \beta} \K C, \Delta}
        \hypo{\cdots}
        \hypo{}
        \ellipsis{$d'$}{G \sep \Gamma,  p \update \cdots \update_\alpha \Gamma', p \update_{\alpha \cdot \beta} C \Imp \sep Y \update_{\alpha \cdot \beta} \K C, \Delta}
        \infer3[(L$\K_3$)]{G \sep \Gamma,  p \update \cdots \update_\alpha \Gamma', p \sep Y \update_{\alpha \cdot \beta} \K C, \Delta}
        \ellipsis{(Rat)$^\ast$}{G \sep \Gamma \update \cdots \update_\alpha \Gamma', p \sep Y \update_{\alpha \cdot \beta} \K C, \Delta}
    \end{prooftree}
    \end{center}
\end{proof}

\begin{lemma}\label{lemmaPermutationRules2}
    Propositional rules, modal rules, announcement rules, as well as (New), (Recall) and (Rat) permute down with respect to $n$ applications of (Lat).
    
    Furthermore, any occurrence of (Lat) also permutes down with respect to $n$ applications of (Lat) when the atom $p$ that is removed in its conclusion is not active in either of these $n$ applications of (Lat).
\end{lemma}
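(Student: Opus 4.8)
The plan is to prove Lemma~\ref{lemmaPermutationRules2} in complete analogy with the proof of Lemma~\ref{lemmaPermutationRules}, simply swapping the roles of left-hand and right-hand atomic dynamic rules. First I would observe that the structure of the argument is entirely symmetric: whereas Lemma~\ref{lemmaPermutationRules} was about pushing single-premise and multi-premise rules below a block of $n$ applications of (Rat), here we are pushing the same rules below a block of $n$ applications of (Lat), where (Lat) removes a left-hand atom $p$ from a sequent labelled $\alpha \cdot A$ by recording it in the sequent labelled $\alpha$. The permutation transformations are essentially local rewrites of the derivation tree, so the work consists of checking, rule by rule, that the rule in question and (Lat) act on disjoint parts of the dynamic hypersequent, or can be made to do so.

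The key steps I would carry out are the following. For the single-premise rules --- all propositional and modal one-premise rules, the announcement rules in their relevant forms, (New), (Recall), and (Rat) --- the permutation is immediate: the principal formula of the upper rule $\Rule$ is distinct from the atom $p$ contracted into the $\alpha$-sequent by (Lat) (since the atom that (Lat) moves is atomic and, in the case of (Rat), sits on the opposite side, or is explicitly assumed not to be active), so one simply swaps the order of application, applying $\Rule$ first and then the $n$ copies of (Lat). For the multi-premise rules, (R$\land$), (L$[\cdot]$), and especially (L$\K_3$), I would mimic exactly the device used in the proof of Lemma~\ref{lemmaPermutationRules}: assume without loss of generality that the first premise of the multi-premise rule is the one obtained by the block of (Lat) applications, then use hp-admissibility of weakening (Lemma~\ref{lemmaAdmissibilityIW}, Lemma~\ref{lemmaAdmissibilityEW}, Lemma~\ref{lemmaAdmissibilityDW}) on the remaining premises to restore the extra left-hand occurrence of $p$ in the appropriate $\alpha$-sequent, so that the multi-premise rule can be applied first, with the block of (Lat) applications performed afterwards on its conclusion. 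I would display the (L$\K_3$) case explicitly, as in the companion lemma, since it is the only visually non-trivial one.

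For the final clause --- that (Lat) permutes down past a block of $n$ applications of (Lat) when the atom $p$ removed in its conclusion is not active in any of the $n$ applications in the block --- I would note that this is again the exact mirror of the corresponding clause for (Rat) in Lemma~\ref{lemmaPermutationRules}: since the atoms and the dynamic sequents involved are disjoint by the non-activity hypothesis, the two (Lat)-instances simply commute, and no appeal to admissibility of structural rules is even needed. I expect the main obstacle, such as it is, to be purely bookkeeping: making sure that in the (L$\K_3$) case the sequence of announcements $\beta$, the active dynamic sequent $\update_{\alpha \cdot \beta} \K C, \Delta$, and the sequent from which (Lat) removes $p$ all line up correctly after the weakening step, so that the reassembled instance of (L$\K_3$) really is a legitimate instance of the rule. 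Since the body of the argument is essentially identical to that of the preceding lemma, I would keep the write-up short, treating one single-premise case and the (L$\K_3$) case in detail and remarking that all remaining cases are handled analogously.
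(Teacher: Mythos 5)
Your proposal is correct and matches the paper's approach exactly: the paper proves this lemma by simply noting that the argument is "similar to that for (Rat)," i.e., the mirror-image of the proof of Lemma \ref{lemmaPermutationRules}, including the use of hp-admissible weakening to restore the atom $p$ in the non-leftmost premises of multi-premise rules such as (L$\K_3$) before reordering. Your write-up just makes explicit what the paper leaves implicit.
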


\begin{proof}
    The proof is similar to that for (Rat).
\end{proof}

\begin{theorem}
\label{thm:cut}
The rule of \emph{Cut} is admissible.

    \begin{displaymath}
    \begin{prooftree}[regular]
        \hypo{}
        \ellipsis{$d_1$}{G \sep X \update_\alpha M \Imp N, A}
        \hypo{}
        \ellipsis{$d_2$}{H \sep Y \update_\alpha A, P \Imp Q}
        \infer2[(Cut)]{G \sep H \sep X \xMerge Y \update_\alpha M,P \Imp N,Q}
    \end{prooftree}
    \end{displaymath}

\end{theorem}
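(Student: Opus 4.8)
The plan is a standard Gentzen-style cut-elimination argument, carried out by a double induction. The \emph{principal} measure is the DHS-complexity $c(A,\alpha)$ of the cut formula $A$ relative to the level $\alpha$ at which the cut is performed (Definition \ref{defComplexity}); the \emph{secondary} measure is the cut-height $h(d_1)+h(d_2)$; the two are ordered lexicographically. It suffices to show that one topmost cut can be eliminated, i.e. that from cut-free $d_1,d_2$ one builds a cut-free derivation of the conclusion, since the general statement then follows by the usual induction on the number of cuts. The measure $c(\cdot,\cdot)$ is exactly what makes this go through: a cut on $[A]B$ at level $\alpha$ is reduced to a cut on $B$ at the \emph{deeper} level $\alpha\cdot A$ together with a cut on $A$ at level $\alpha$, and both are strictly smaller in the principal measure, since $c(B,\alpha\cdot A)=c(B)+c(A)+c(\alpha)<c(A)+c(B)+1+c(\alpha)=c([A]B,\alpha)$ and $c(A,\alpha)<c([A]B,\alpha)$; likewise a cut on $\K A$ is reduced to a cut on $A$ at the same level, which again lowers $c(\cdot,\cdot)$.

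There are the usual three kinds of case. First, the \emph{axiom cases}: if $d_1$ or $d_2$ ends in an axiom, the conclusion of the cut is either an axiom itself or is derivable from the other premise by the hp-admissible weakening rules (Lemmas \ref{lemmaAdmissibilityIW}, \ref{lemmaAdmissibilityEW}, \ref{lemmaAdmissibilityDW}), according to whether the cut formula is the active atom of the axiom. Second, the \emph{non-principal cases}: if the cut formula is not principal in the last rule $\Rule$ of $d_1$ (symmetrically for $d_2$), we push the cut above $\Rule$ — cutting each premise of $\Rule$ against the other derivation, which decreases the secondary measure, and re-applying $\Rule$; whenever $\Rule$ spawns a new component or a fresh dynamic sequent or carries a side-hypersequent into several premises (as (R$\K$), (L$\K_2$), (L$\K_3$), (New) do), we restore the contexts using hp-admissibility of merge, weakening and contraction before re-applying $\Rule$. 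The delicate sub-case is when $\Rule$ is (Rat) or (Lat): a naive permutation fails because these rules create or consume the relevant atom across two levels, and one first invokes the permutation Lemmas \ref{lemmaPermutationRules} and \ref{lemmaPermutationRules2} to relocate such steps so that the cut can proceed. Third, the \emph{principal cases}: for the propositional connectives these are textbook. For $\K A$ we use the familiar S5-hypersequent double cut — cut $d_1$ against the premise of the left $\K$-rule on $\K A$ (lower secondary measure), which moves $A$ into the antecedent of the merged component, and then cut the result against the premise $\update_\alpha\,\Imp A$ spawned by (R$\K$), this time on $A$, of strictly smaller DHS-complexity; the case where the left rule is (L$\K_3$) follows the same idea but the auxiliary cut on $A$ must be threaded through all $n+1$ premises, keeping the side dynamic sequents synchronized by repeated merge and weakening. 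For $[A]B$, from the premise of (R$[\cdot]$) and the two premises of (L$[\cdot]$) we perform a cut on $B$ at level $\alpha\cdot A$ and a cut on $A$ at level $\alpha$, both strictly smaller in the principal measure; when the variants (R$[\cdot])_2$ or (L$[\cdot])_2$ are involved, a residual empty dynamic sequent $\update_{\alpha\cdot A}\Imp$ is produced, and it is discarded by the admissibility of (New$^A$) (Lemma \ref{lemmaAdmissibilityNewA}) — precisely the purpose for which that lemma was proved — before the cut on $A$ is carried out. In all cases, duplicated contexts are finally cleaned up with the hp-admissible structural rules of Section \ref{sectionAdmissibility}.

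The main obstacle is the dynamic machinery. The rules (Rat), (Lat), (New), (Recall) do not permute with cut as smoothly as ordinary logical and modal rules, since they transfer information between a model and its updates, hence between different levels of a dynamic sequent. Concretely, the two points needing real care are: the principal case for (L$\K_3$), where one cut on the single subformula $A$ must be distributed over the chain of premises that successively test the announcements $B_1,\dots,B_n$; and the atomic-cut cases in which the cut atom is introduced not by an axiom but propagated through (Rat) or (Lat) across levels. These are exactly the situations that the permutation Lemmas \ref{lemmaPermutationRules}--\ref{lemmaPermutationRules2} and the admissibility of (New$^A$) are designed to handle; once they are dealt with, the rest is routine bookkeeping with the admissible structural rules already established.
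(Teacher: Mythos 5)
Your proposal is correct and follows essentially the same strategy as the paper: a lexicographic double induction with the DHS-complexity $c(A,\alpha)$ of the cut formula as principal measure and the sum of derivation heights as secondary measure, with the announcement case reduced to cuts on $A$ at level $\alpha$ and on $B$ at the deeper level $\alpha\cdot A$, the residual empty dynamic sequent discharged via (New$^A$), the (L$\K_3$) case threaded through all premises with Merge and contraction, and the atomic cases handled by the permutation Lemmas \ref{lemmaPermutationRules}--\ref{lemmaPermutationRules2}. Your explicit verification that $c(B,\alpha\cdot A)<c([A]B,\alpha)$ is a welcome clarification of why the enhanced complexity measure is the right one, but it does not change the argument.
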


\begin{proof}
    The proof proceeds by main induction on the DHS-complexity of the cut-formula (see Definition \ref{defComplexity}), and secondary induction on the sum of the heights of the derivations $d_1$ and $d_2$ of each premise. We distinguish cases on the last rule applied on the left premise.\footnote{Note that in this proof, we denote by $\Rule^\ast$ any finitely many applications of rule $\Rule$.}

\medskip\noindent $-$ If $G \sepC X \update_\alpha M \Imp N, A$ is an axiom, then either the conclusion is an axiom as well, or the cut can be replaced by several applications of the rules (IW), (EW) and (DW) on $H \sepC Y \update_\alpha p, P \Imp Q$ so to obtain the conclusion.

\medskip\noindent $-$ If $G \sepC X \update_\alpha M \Imp N, A$ has been obtained by a rule $\mathcal{R}$ where the formula $A$ is not principal,  then we apply the inductive hypothesis to the premise(s) of $\mathcal{R}$ and then $ \Rule$ again.\footnote{Note that, in some cases, we might need to apply another version of the rule, \emph{e.g.} (L$\K_2$) instead of (L$\K_3$) or (R$[\cdot])_1$ instead of (R$[\cdot])_2$, or no rule at all if $\Rule$ is (New).}

\medskip\noindent $-$ If $G \sepC X \update_\alpha M \Imp N, A$ has been obtained by a rule $\mathcal{R}$ where the formula $A$ is principal,  then we distinguish the following subcases. (i) The rule $\mathcal{R}$ is a propositional rule, (ii) the rule $\mathcal{R}$ is a modal rule, (iii) the rule $\mathcal{R}$ is an announcement rule, (iv) the rule $\mathcal{R}$ is a dynamic rule.
   
\medskip(i) As an example, we consider the case where $\mathcal{R}$ is the rule (R$\neg$), we have:
\begin{center}
    \begin{prooftree}[regular]
            \hypo{ G \sepC X \update_\alpha A, M \Imp N}
            \infer1[(R$\lnot$)]{ G \sepC X \update_\alpha  M \Imp N, \lnot A}
            \hypo{  H \sepC Y \update_\alpha \lnot A, P \Imp Q}
        \infer2[(Cut)]{G \sepC H \sepC X \xMerge Y \update_\alpha M, P \Imp N, Q}
    \end{prooftree}
\end{center}

\noindent Here, we consider the right premise of the Cut. We first apply hp-invertibility of (L$\lnot$) to obtain $H \sepC Y \update_\alpha P \Imp Q,A$ and then cut $A$, where this application of (Cut) is admissible by induction on the DHS-complexity of $A$:

\begin{center}
    \begin{prooftree}[regular]
            \hypo{ G \sepC X \update_\alpha A, M \Imp N}
            \hypo{  H \sepC Y \update_\alpha P \Imp Q, A}
        \infer2[(Cut)]{G \sepC H \sepC X \xMerge Y \update_\alpha M, P \Imp N, Q}
    \end{prooftree}
\end{center}

\medskip(ii) Consider the case where $\mathcal{R}$ is the rule (R$\Box)$. We then need to look at the right-hand premise $H \sepC Y \update_\alpha \K A, P \Imp Q$. If it is an axiom, then the conclusion is an axiom as well. If it has been obtained by a rule $\Rule'$ where $\K A$ is not the principal formula, we use the inductive hypothesis---with induction on the sum of the heights---and then apply $\Rule'$ again. If finally $H \sepC Y \update_\alpha \K A, P \Imp Q$ has been obtained by a rule $\Rule'$ which is either (L$\Box_{1})$ or (L$\Box_{2})$ and has $\Box A$ as principal formula, we apply the inductive hypothesis twice, to cut $\K A$ and then $A$, with first induction of the sum of the heights and second on the DHS-complexity of the cut-formula. We only analyse the case of the rule (L$\K_2$), whilst that of (L$\K_1$) is dealt with analogously. We have the following situation:

\begin{center}
\begin{prooftree}[regular]
	\hypo{G \sep X \update_\alpha \Gamma \sep \update_\alpha \Imp A}
	\infer1[(R$\K$)]{G \sep X \update_\alpha \Gamma,\K A}
 	\hypo{H \sep Y \update_\alpha \K A, \Delta \sep Z \update_\alpha A, \Lambda}
 	\infer1[(L$\K_2$)]{H \sep Y \update_\alpha \K A, \Delta \sep Z \update_\alpha \Lambda}
 	\infer2[(Cut)]{G \sep H \sep X \xMerge Y \update_\alpha \Gamma \Delta \sep Z \update_\alpha \Lambda}
\end{prooftree}
\end{center}

\noindent We first apply (Cut) on the right-hand premise and on $G \sep X \update_\alpha \Gamma,\K A$. This application of (Cut) is admissible by inductive hypothesis on the sum of the heights:

\begin{center}
\begin{prooftree}[regular]
	\hypo{G \sep X \update_\alpha \Gamma, \K A}
	\hypo{H \sep Y \update_\alpha \K A, \Delta \sep Z \update_\alpha A, \Lambda}
	\infer2[(Cut)]{G \sep H \sep X \xMerge Y \update_\alpha \Gamma \Delta \sep Z \update_\alpha A, \Lambda}	
\end{prooftree}
\end{center}

\noindent We then cut the remaining $A$, with an application of (Cut) that is admissible by induction on the DHS-complexity of the cut-formula. Eventually, the desired conclusion is obtained by several applications of the contraction rules as well as (Merge):

\begin{center}
\begin{prooftree}[regular]
	\hypo{G \sep X \update_\alpha \Gamma \sep \update_\alpha \Imp A}
	\hypo{G \sep H \sep X \xMerge Y \update_\alpha \Gamma \Delta \sep Z \update_\alpha A, \Lambda}
	\infer2[(Cut)]{G \sep G \sep H \sep X \xMerge Y \update_\alpha \Gamma \Delta \sep X \update_\alpha \Gamma \sep Z \update_\alpha \Lambda}	
	\infer1[(Merge)$^\ast$+(C)$^\ast$]{G \sep H \sep X \xMerge Y \update_\alpha \Gamma \Delta \sep Z \update_\alpha \Lambda}
\end{prooftree}
\end{center}

\medskip Suppose now that the rule $\mathcal{R}^{\prime}$ is (L$\K_3$) and the cut-formula is principal in (L$\K_3$). For the sake of simplicity, we show how to proceed with an example where only three premises of the rule (L$\K_3$) occur. The generalisation to $n$ premises can be found in the Appendix.\footnote{Here, $\overline{H} := H \sep Y \update_{B \cdot C} \K A, \Delta$.}

    \begin{center}
    \begin{prooftree}[regular]
            \hypo{ G \sep X \update_{B\cdot C} \Gamma \sepC \update_{B\cdot C} \Imp A}
            \infer1[(R$\K$)]{ G \sep X \update_{B\cdot C}  \Gamma,  \K A}
                \hypo{}
                \ellipsis{$(a)$}{  \overline{H}  \sep \Lambda,  B}
                \hypo{}
                \ellipsis{$(b)$}{\overline{H}  \sep  \Lambda \update_B \Imp C}
                \hypo{}
                \ellipsis{$(c)$}{\overline{H}  \sep  \Lambda \update_{B \cdot C} A \Imp}
            \infer3[(L$\K_3$)]{ H \sep Y \update_{B\cdot C}  \K A, \Delta \sep \Lambda}
            \infer2[(Cut)]{G \sepC H \sep X \xMerge Y \update_{B \cdot C} \Gamma \Delta \sepC \Lambda}
        \end{prooftree}
    \end{center}
   
    We first work with premise $(c)$ in the following way:
    \begin{center}
        \begin{prooftree}[regular]
            \hypo{G \sepC X \update_{B\cdot C} \Gamma \sepC \update_{B\cdot C} \Imp A}
                    \hypo{ G \sepC X \update_{B\cdot C}  \Gamma,  \K A}
                    \hypo{}
                    \ellipsis{$(c)$}{  H \sepC Y \update_{B \cdot C} \K A, \Delta  \sepC \Lambda \update_{B \cdot C} A \Imp}
                \infer2[(Cut)]{G \sepC H \sepC X\xMerge Y \update_{B\cdot C} \, \Gamma\Delta \sep \Lambda \update_{B\cdot C} A \Imp}
            \infer2[(Cut)]{G \sepC G \sepC H \sepC X \update_{B\cdot C} \Gamma \sep X \xMerge Y \update_{B \cdot C} \Gamma \Delta \sep \Lambda \ \update_{B \cdot C} \Imp}
            \infer1[(Merge)$^\ast+$(C)$^\ast$]{G \sepC H \sep X \xMerge Y \update_{B \cdot C} \Gamma \Delta \sep \Lambda \ \update_{B \cdot C} \Imp}
            \infer1[(DW)]{G \sepC H \sep X \xMerge Y \update_{B \cdot C} \Gamma \Delta \sep \Lambda \update_B C \Imp \update_{B \cdot C} \Imp}
            \infer1[(New$^A$)]{G \sepC H \sep X \xMerge Y \update_{B \cdot C} \Gamma \Delta \sep \Lambda \ \update_{B} C \Imp}
        \end{prooftree}
    \end{center}
    
    \medskip\noindent where the first cut is admissible by induction on the sum of the heights, whilst the second by induction on DHS-complexity of the cut-formula. Then we re-use the previously obtained DHS in order to apply the same method to premise (b):

    \begin{center}
        \begin{prooftree}[regular]
            \hypo{G \sepC H \sep X \xMerge Y \update_{B \cdot C} \Gamma \Delta \sep \Lambda \ \update_{B} C \Imp}
                    \hypo{ G \sepC X \update_{B\cdot C}  \Gamma,  \K A}
                    \hypo{}
                    \ellipsis{$(b)$}{  H \sepC Y \update_{B \cdot C} \K A, \Delta  \sepC \Lambda \ \update_{B} \ \Imp C}
                \infer2[(Cut)]{G \sepC H \sepC X\xMerge Y \update_{B\cdot C} \Gamma\Delta \sep \Lambda \ \update_{B}  \ \Imp C}
            \infer2[(Cut)]{G \sepC G \sepC H \sepC H \sepC X\xMerge Y \update_{B\cdot C} \Gamma\Delta \sep X \xMerge Y \update_{B \cdot C} \Gamma \Delta \sep \Lambda \Lambda \ \update_B \ \Imp}
            \infer1[(Merge)$^\ast+$(C)$^\ast$]{G \sepC H \sep X \xMerge Y \update_{B \cdot C} \Gamma \Delta \sep \Lambda \ \update_B \ \Imp}
            \infer1[(IW)]{G \sepC H \sep X \xMerge Y \update_{B \cdot C} \Gamma \Delta \sep B,  \Lambda\  \Imp \update_B \ \Imp}
            \infer1[(New$^A$)]{G \sepC H \sep X \xMerge Y \update_{B \cdot C} \Gamma \Delta \sep B,  \Lambda}
        \end{prooftree}\\
    \end{center}
    
    \medskip\noindent where the first cut is admissible by induction on the sum of the heights, whilst the second by induction of DHS-complexity of the cut-formula. Finally, we reach the desired conclusion with the following derivation:

        \begin{prooftree}[regular]
            \hypo{G \sepC H \sepC X \xMerge Y \update_{B \cdot C} \Gamma \Delta \sep B,  \Lambda}
                    \hypo{G \sepC X \update_{B\cdot C}  \Gamma,  \K A}
                    \hypo{}
                    \ellipsis{$(a)$}{H \sepC Y \update_{B \cdot C} \K A, \Delta  \sepC  \Lambda,  B}
                \infer2[(Cut)]{G \sepC H \sepC X\xMerge Y \update_{B\cdot C} \Gamma\Delta \sep  \Lambda,  B}
            \infer2[(Cut)]{G \sepC G \sepC H \sepC H \sepC X\xMerge Y \update_{B\cdot C} \Gamma\Delta \sep X \xMerge Y \update_{B \cdot C} \Gamma \Delta \sep \Lambda \Lambda}
            \infer1[(Merge)$^\ast+$(C)$^\ast$]{G \sepC H \sep X \xMerge Y \update_{B \cdot C} \Gamma \Delta \sep \Lambda}
        \end{prooftree}

    \hfill\\

    \medskip\noindent where the first cut is admissible by induction on the sum of the heights, whilst the second by induction of DHS-complexity of the cut-formula.

    \medskip(iii) Consider the case where $\mathcal{R}$ is the rule (R$[\cdot])_2$.\footnote{The case where $\mathcal{R}$ is (R$[\cdot])_1$ can be treated in an analogous, though simpler manner.} We then need to look at the right-hand premise $H \sepC Y \update_\alpha [A]B, P \Imp Q$. If it is an axiom, or it has been obtained by a rule $\mathcal{R}^{\prime}$ where $[A]B$ is not the principal formula, then the procedure is the standard one. Consider then the case where $\mathcal{R}^{\prime}$ is the rule (L$[\cdot])$. We have the following situation:\footnote{We consider the case where $\mathcal{R}^{\prime}$ is  (L$[\cdot])_2$; the case where  $\mathcal{R}^{\prime}$ is (L$[\cdot])_1$ can be treated analogously.}

    \begin{center}
    \begin{prooftree}[regular]
        \hypo{ G \sepC X \update_\alpha A, \Gamma \update_{\alpha \cdot A} \Imp B}
        \infer1[(R$[\cdot])_2$]{  G \sepC X \update_\alpha  \Gamma,  [A]B}
            \hypo{  H \sepC Y \update_\alpha  \Delta,  A}
            \hypo{  H \sepC Y \update_\alpha \Delta \update_{\alpha \cdot A} B \Imp}
            \infer2[(L$[\cdot])_2$]{  H \sepC Y \update_\alpha [A]B, \Delta}
        \infer2[(Cut)]{G \sepC  H \sepC X \xMerge Y \update_\alpha \Gamma \Delta}
    \end{prooftree}\\
    \end{center}

\medskip\noindent We proceed in the following way:

    \begin{center}
    
    \begin{prooftree}[regular]
        \hypo{ H \sepC Y \update_\alpha  \Delta,  A}
            \hypo{  G \sepC X \update_\alpha A, \Gamma \update_{\alpha \cdot A} \Imp B}
            \hypo{  H \sepC Y \update_\alpha \Delta \update_{\alpha \cdot A} B \Imp}
            \infer2[(Cut)]{G \sepC H \sepC X \xMerge Y \update_\alpha A, \Gamma \Delta \update_{\alpha \cdot A} \Imp}
            \infer1[(New$^A$)]{G \sepC H \sepC X \xMerge Y \update_\alpha A, \Gamma \Delta}
        \infer2[(Cut)]{G \sepC H \sepC H \sepC X \xMerge Y \xMerge Y \update_\alpha \Gamma \Delta \Delta}
        \infer1[(Merge)$^\ast +$ (C)$^\ast$]{G \sepC H \sepC X \xMerge Y \update_\alpha \Gamma \Delta}
    \end{prooftree}
    \end{center}
    \hfill \\
    
    \noindent where the first cut is admissible by induction on the sum of the heights, whilst the second is admissible by induction on the DHS-complexity of the cut-formula.
    
\medskip(iv) Consider the case where $\mathcal{R}$ is the rule (Rat)---the cases of the other dynamic rules can be treated straightforwardly. We have the following situation, where $\alpha = A_1 \cdots A_n$ and for all $1 \leq i \leq n$, $\alpha_i = A_1 \cdots A_i$:

    \begin{center}       
    \begin{prooftree}[regular]
        \hypo{}
        \ellipsis{$d_1'$}{G \sep X \update \cdots \update_{\alpha_{n-1}} \Gamma_{n-1}, p \update_{\alpha_n} \Gamma_n, p}
        \infer1[(Rat)]{G \sep X \update_{\alpha_{n-1}}  \Gamma_{n-1} \update_{\alpha_n} \Gamma_n, p}
        \hypo{}
        \ellipsis{$d_2$}{H \sep Y \update_{\alpha_n} p, \Delta_n}
        \infer2[(Cut)]{G \sep H \sep X \xMerge Y \update \cdots \update_{\alpha_n} \Gamma_n\Delta_n}
    \end{prooftree}\\
    \end{center}

    \noindent Here we go up derivation $d_1'$ until we find an application of a rule $\mathcal{R}^{\prime\prime}$ different from (Rat) on that atom $p$. There are two possibilities:

    \begin{itemize}
        \item[$(1)$] After $m$ applications of (Rat), we indeed find a rule $\mathcal{R}^{\prime\prime}$ different from (Rat) on that atom $p$. Thanks to Lemma \ref{lemmaPermutationRules}, we can permute $\mathcal{R}^{\prime\prime}$ down with the $m$ applications of (Rat). We thus obtain a DHS that we can use, together with $H \sepC Y \update_{\alpha_n} \ p, \Delta_n$ to cut atom $p$. This cut is admissible by induction on the sum of heights. We then apply rule $\mathcal{R}^{\prime\prime}$ again\footnote{ As already mentioned, in some cases, we might need to apply another version of the rule---\emph{e.g.} (L$\K_2$) instead of (L$\K_3$)---or no rule at all---in case of (New).}, to obtain the desired result.  
    \end{itemize}

    \begin{itemize}
        \item[$(2)$] Alternatively, there is no such rule because $G \sepC X \update \cdots \update_{\alpha_{n-1}} \Gamma_{n-1}, p \update_{\alpha} \Gamma_n, p$ is an axiom.

    \begin{center}       
    \begin{prooftree}[regular]
        \hypo{G \sep X' \update \cdots \update_{\alpha_{n-1}} \Gamma_{n-1}, p \update_{\alpha_n} \Gamma_n, p}
        \ellipsis{(Rat)$^\ast$}{G \sep X \update \cdots \update_{\alpha_{n-1}} \Gamma_{n-1}, p \update_{\alpha_n} \Gamma_n, p}
        \infer1[(Rat)]{G \sep X \update \cdots \update_{\alpha_{n-1}} \Gamma_{n-1} \update_{\alpha_n} \Gamma_n, p}
        \hypo{}
        \ellipsis{$d_2$}{H \sep Y \update_{\alpha_{n-1}} \Delta_{n-1} \update_{\alpha_n} p, \Delta_n}
        \infer2[(Cut)]{G \sep H \sep X \xMerge Y \update \cdots \update_{\alpha_n} \Gamma_n\Delta_n}
    \end{prooftree}\\
    \end{center}
        
        We distinguish three cases, that we analyse in the following:
        \begin{enumerate}
            \item[$(2a)$] $p$ is principal in $\Gamma_n$;
            \item[$(2b)$] $p$ is principal in another dynamic sequent $\Gamma_i$ (for $1 \leq i < n$);
            \item[$(2c)$] $p$ is principal in none of them.
        \end{enumerate}
    \end{itemize}

    $(2a)$ We have $G \sepC X' \update \cdots \update_{\alpha_{n-1}} \Gamma_{n-1}, p \update_{\alpha_n} p,\Gamma_n, p$ so we start from the right-hand premise $H \sepC Y \update_{\alpha_{n-1}} \Delta_{n-1} \update_{\alpha_n} p, \Delta_n$ and obtain the conclusion by admissibility of (IW), (EW) and (DW). \\

    $(2b)$ Let $1 \leq i < n$. The axiom we have is $G \sep X' \update \cdots \update_{\alpha_i} p,\Gamma_i, p \update \cdots \update_{\alpha_n} \Gamma_n, p$. Here, we need to look at the right-hand premise and distinguish the following cases:
    \begin{itemize}
        \item[$(2bi)$] If it is an axiom, where $p$ is principal in $\Delta_n$, we start from the left-hand premise $G \sep X \update \cdots \update_{\alpha_{n-1}} \Gamma_{n-1} \update_{\alpha_n} \Gamma_n, p$ and obtain the conclusion by admissibility of the weakening rules.
        \item[$(2bii)$] If it is an axiom with a principal formula different from $p$ in $\Delta_n$, then the conclusion is also an axiom.
        \item[$(2biii)$] If it has been obtained by a rule $\mathcal{R^{\prime}}$ that does not operate on $p$, then we consider the premise of $\mathcal{R}^{\prime}$, and we apply a cut between such premise and $G \sep X \update \cdots \update_{\alpha_n} \Gamma_n, p$. The cut is admissible by induction on the sum of heights. We then apply $\mathcal{R}^{\prime}$ again to obtain the desired conclusion.
        \item[$(2biv)$] If it is the result of $l$ applications of (Lat), then, as before, we go up the derivation until we find a rule other than (Lat) on the atom $p$. There are two possibilities that are analogous to (1) and (2) above. Let us call them (1$^{\prime}$) and (2$^{\prime}$), respectively. As for (1$^{\prime}$), we proceed analogously to (1). As for (2$^{\prime}$), we have the situation where after $l$ applications of (Lat), we find an axiom. We need to distinguish cases again.
            \begin{enumerate}
                \item[$(2^{\prime}a)$] $p$ is principal in $\Delta_n$, we proceed as in $(2a)$;
                \item[$(2^{\prime}b)$] $p$ is principal in a dynamic sequent labelled by the same sequence $\alpha_i$ as the axiom in the left-hand premise. Then the conclusion is an axiom.
                \item[$(2^{\prime}c)$] $p$ is principal in a dynamic sequent with a label $\alpha_j$ different from $\alpha_i$, \emph{i.e.} where $i\neq j$. Without loss of generality, let us consider $i < j$. In such a case, we start from the axioms, removing the $p$ in $\Gamma_{k}$ (resp. $\Delta_k$) for all $k > j$. We apply the inductive hypothesis on the sum of heights, and then apply (Lat) or (Rat) as many times as needed to obtain the conclusion. Below is an example.
            \end{enumerate}
        \end{itemize}

    \begin{center}
    \begin{prooftree}[regular]
                \hypo{p, \Gamma_1, p \update_{ A} \Gamma_2, p \update_{ A \cdot B} \Gamma_3, p \update_{A \cdot B \cdot C} \Gamma_4, p}
            \infer1[(Rat)]{p, \Gamma_1 \update_A \Gamma_2, p \update_{A \cdot B} \Gamma_3, p \update_{A \cdot B \cdot C} \Gamma_4, p}
            \infer1[(Rat)]{p, \Gamma_1 \update_A \Gamma_2 \update_{A \cdot B} \Gamma_3, p \update_{A \cdot B \cdot C} \Gamma_4, p}
            \infer1[(Rat)]{p, \Gamma_1 \update_A \Gamma_2 \update_{A \cdot B} \Gamma_3 \update_{A \cdot B \cdot C} \Gamma_4, p}
                \hypo{\Delta_1 \update_A p, \Delta_2, p\update_{A \cdot B} p, \Delta_3 \update_{A \cdot B \cdot C} p, \Delta_4}
            \infer1[(Lat)]{\Delta_1 \update_A \Delta_2, p \update_{A \cdot B} p, \Delta_3 \update_{A \cdot B \cdot C} p, \Delta_4}
            \infer1[(Lat)]{\Delta_1 \update_A \Delta_2, p \update_{A \cdot B} \Delta_3 \update_{A \cdot B \cdot C} p, \Delta_4}
        \infer2[(Cut)]{p,\Gamma_1 \Delta_1 \update_A \Gamma_2 \Delta_2, p \update_{A \cdot B} \Gamma_3 \Delta_3 \update_{A \cdot B \cdot C} \Gamma_4 \Delta_4}
    \end{prooftree}
    \end{center}

    \noindent We proceed as explained above, where $i=1$ and $j=2$: we start from the axioms wherein we remove $p$ from $\Gamma_3,\Gamma_4,\Delta_3$ and $\Delta_4$. We then cut $p$ in $\Gamma_2$ and $\Delta_2$ and apply (Rat) to remove the atom $p$ that still needs to be removed in $\Gamma_1$:

    \begin{center}
        \begin{prooftree}[regular]
        \hypo{p, \Gamma_1, p \update_A \Gamma_2, p \update_{A \cdot B} \Gamma_3 \update_{A \cdot B \cdot C} \Gamma_4 }
        \hypo{\Delta_1 \update_A p, \Delta_2, p \update_{A \cdot B} \Delta_3 \update_{A \cdot B \cdot C} \Delta_4}
        \infer2[(Cut)]{p, \Gamma_1 \Delta_1, p \update_A \Gamma_2 \Delta_2, p \update_{A \cdot B} \Gamma_3 \Delta_3 \update_{A \cdot B \cdot C} \Gamma_4 \Delta_4}
        \infer1[(Rat)]{p, \Gamma_1 \Delta_1 \update_A \Gamma_2 \Delta_2, p  \update_{A \cdot B} \Gamma_3 \Delta_3 \update_{A \cdot B \cdot C} \Gamma_4 \Delta_4}
    \end{prooftree}
    \end{center}

    The method generalises naturally: when (Rat) has been applied $m$ times, and (Lat) $l$ times, if $i < j$, then $m > l$. Therefore, after having applied the inductive hypothesis, we need to apply $(m-l)$ times (Rat) to remove the remaining atoms $p$ in sequents $\Gamma_k$ for all $i \leq k < j$.\\

    $(2c)$ Then the conclusion is an axiom so we can just start from it.
    
\end{proof}

\begin{corollary}
    $\calculus$ enjoys the subformula property: all formulas occurring in a $\calculus$-derivation ending with the dynamic hypersequent $G$ are subformulas of one of the formulas occurring in  $G$ or of one of the formulas occurring as announcement in $G$.
\end{corollary}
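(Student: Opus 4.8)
The plan is to reduce the statement to a local check on the rules of \calculus\ followed by an induction on derivation height. Note first that Cut is not a primitive rule of the calculus --- by Theorem~\ref{thm:cut} it is only admissible --- so every \calculus-derivation already uses nothing but the axioms and the propositional, modal, announcement and dynamic rules with which \calculus\ was defined. For a dynamic hypersequent $H$, write $\mathrm{Sub}(H)$ for the set of all formulas that are subformulas either of some formula occurring in a sequent of $H$, or of some formula occurring in one of the announcement lists labelling a dynamic sequent of $H$; this set is closed under taking subformulas. The corollary then amounts to the claim that every formula occurring anywhere in a \calculus-derivation with conclusion $G$ belongs to $\mathrm{Sub}(G)$.

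The key step I would isolate is the following local property: for every rule $\Rule$ of \calculus, if $H$ is its conclusion and $H'$ any of its premises, then every formula occurring in $H'$ lies in $\mathrm{Sub}(H)$, whence $\mathrm{Sub}(H') \subseteq \mathrm{Sub}(H)$. I would check this rule by rule. For the propositional rules, for (L$\K_1$), (L$\K_2$) and (R$\K$), and for both versions of (L$[\cdot]$) and (R$[\cdot]$), any formula newly appearing in a premise (such as $A$, $B$, or $A \land B$) is an immediate subformula of the principal formula of $H$; the dynamic sequent that is appended to a premise of (L$[\cdot])_2$ or (R$[\cdot])_2$ when it is absent from the conclusion is taken with empty antecedent and succedent, hence contributes no formula; and the fresh label $\alpha\cdot A$ carried by these premises consists of the announcements already present in $\alpha$ together with $A$, which is a subformula of the principal formula $[A]B$ of $H$. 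For the dynamic rules (Lat), (Rat) and (New), the premise contains no formula that is absent from $H$.

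The two rules that genuinely exploit the announcement lists of the conclusion --- and hence are the reason the statement speaks of formulas occurring as announcements --- are (Recall) and (L$\K_3$). In (Recall), the formula $A$ added to the antecedent of the $\alpha$-sequent of the premise is precisely the last announcement of the label $\alpha\cdot A$ attached to the corresponding dynamic sequent of $H$, so $A \in \mathrm{Sub}(H)$. In (L$\K_3$), with $\beta = B_1 \cdots B_n$ and active dynamic sequent labelled $\alpha\cdot\beta$ in $H$, every $B_i$ occurring as a sequent-formula in a premise is an announcement of the label $\alpha\cdot\beta$ of $H$, hence lies in $\mathrm{Sub}(H)$; each fresh label $\alpha\cdot B_1\cdots B_k$ appearing in a premise is built solely from announcements of $\alpha\cdot\beta$; and the formula $A$ of the rightmost premise is a subformula of $\K A$, which occurs in $H$. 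Once this local property is in hand, the corollary follows by a straightforward induction on the height of a derivation $d$ of $G$: the leaves of $d$ are axioms, and chaining the inclusions $\mathrm{Sub}(H') \subseteq \mathrm{Sub}(H)$ down every branch of $d$ shows that every dynamic hypersequent occurring in $d$, and therefore every formula occurring in $d$, lies in $\mathrm{Sub}(G)$. The only subtle point is precisely this bookkeeping for (L$\K_3$) and the announcement rules, where premises may carry labels and formulas not visibly present in the conclusion's sequents; the verification above shows that these are nevertheless controlled either by the announcements of the conclusion or by subformulas of formulas of the conclusion, so that $\mathrm{Sub}$ never grows as one moves upward through the derivation.
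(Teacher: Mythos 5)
Your proposal is correct and is essentially the argument the paper has in mind: the paper's proof is the one-line remark that the property is ``straightforward from the rules of the calculus and Theorem~\ref{thm:cut},'' and you have simply carried out the rule-by-rule verification that every premise formula is a subformula of a sequent formula or of an announcement of the conclusion (with (Recall) and (L$\K_3$) correctly identified as the cases that force the ``or occurring as announcement'' clause). No gap; your write-up just makes explicit what the paper leaves implicit.
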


\begin{proof} Straightforward from the rules of the calculus $\calculus$ and Theorem \ref{thm:cut}.
\end{proof}

\section{Conclusion and future work}\label{sectionConclusion}

In this paper, we introduced a syntactic method for extending hypersequents to their dynamic counterparts. On the basis of this method, we developed a calculus for Public Announcement Logic, one of the most prominent and widely studied dynamic logics. The calculus was shown to satisfy several fundamental properties, including soundness, completeness, and cut-elimination. These results are of intrinsic interest, but they also suggest a number of directions for further research. First, we wish to establish additional metatheoretical results for the calculus of PAL, such as decidability or  interpolability. In both cases, it seems reasonable to believe that standard techniques -- such as the appeal to minimal derivations (see e.g. \cite{Poggiolesi2008, Poggiolesi2010}) in the first case, and adaptation of Maehara's method for Craig interpolation (see \cite{Kuznetsinterpolation_hypersequents,KUZNETSmulticomponent_method_interpolation}) in the second -- can be adapted quite naturally to the framework of dynamic hypersequents. We also wish to investigate issues related to complexity and countermodels extraction.
Second, the framework of dynamic hypersequents appears well suited for the systematic development of sequent calculi for a broader class of dynamic epistemic logics involving different dynamic modalities. As already noted, dynamic hypersequents significantly extend standard hypersequents by making it possible to represent all possible epistemic PAL updates of a model. Their structure is therefore genuinely rich. It is precisely this richer structure, which can be further extended to account for other kinds of epistemic updates, that we consider sufficiently robust to capture additional dynamic epistemic logics.

\section*{Acknowledgements.}
We would like to thank the referee for the valuable remarks, which greatly helped improve the paper.

\nocite{*}
\printbibliography

\appendix
\renewcommand\appendixname{Appendix}
\renewcommand\appendixpagename{Appendix}
\appendixpage

Let us explain the general method to prove admissibility of (Cut) when the left-hand premise has been obtained by an application of (R$\K$) and the right-hand premise has been obtained by an application of (L$\K_3$) with $k$ premises, for $\beta = B_1 \cdots B_{k-1}$. Let $(1), \cdots, (k)$ denote each premise of (L$\K_3$), let $(i)$ denote $G \sepC X \update_\alpha \Gamma \sepC \update_\alpha \Imp A$, and $(ii)$ $G \sepC X \update_\alpha  \Gamma,  \K A$. The derivation has this shape\footnote{Here $\overline{H} := H \sep Y \update_{\alpha \cdot \beta} \K A, \Delta$.}:

    \begin{center}
        \begin{prooftree}[regular]
            \hypo{}
            \ellipsis{}{G \sepC X \update_{\alpha\cdot \beta} \Gamma \sepC \update_\alpha \Imp A}
            \infer1[(R$\K$)]{G \sepC X \update_{\alpha\cdot \beta}  \Gamma,  \K A}
                \hypo{}
                \ellipsis{$(1)$}{\overline{H} \sepC Z \update_\alpha  \Lambda,  B_1}
                \hypo{\cdots}
                \hypo{}
                \ellipsis{$(k)$}{\overline{H} \sepC Z \update_\alpha \Lambda \update_{\alpha \cdot \beta} A \Imp}
            \infer3[(L$\K_3$)]{H \sepC Y \update_{\alpha \cdot \beta}  \K A, \Delta \sep Z \update_\alpha \Lambda}
            \infer2[(Cut)]{G \sepC H \sepC X \xMerge Y \update_{\alpha \cdot \beta} \Gamma \Delta \sepC Z \update_\alpha \Lambda}
        \end{prooftree}
    \end{center}
    \hfill \\
  
    \noindent We first apply $k$ times (Cut) between $(ii)$ and each $(j)$ premise (for $1 \leq j \leq k$), where each of these applications is admissible, by induction on the sum of heights. We denote by $(I_1), \cdots (I_k)$ the DHSs obtained this way:

    \begin{center}  
    \begin{prooftree}
        \hypo{}
        \ellipsis{$(ii)$}{ G \sepC X \update_{\alpha\cdot \beta} \Gamma, \K A}
        \hypo{}
        \ellipsis{$(1)$}{ H \sepC Y \update_{\alpha\cdot \beta} \K A, \Delta \sep Z \update_\alpha  \Lambda,  B_1}
        \infer2[(Cut)]{(I_1): \quad G \sepC H \sepC X \xMerge Y \update_{\alpha \cdot \beta} \Gamma \Delta \sep Z \update_\alpha  \Lambda,  B_1}
    \end{prooftree}\\

    \bigskip
    
    $\vdots$ \\
    
    \hfill\\
    
    \begin{prooftree}[regular]
        \hypo{}
        \ellipsis{$(ii)$}{ G \sepC X \update_{\alpha \cdot \beta} \Gamma, \K A}
        \hypo{}
        \ellipsis{$(k)$}{ H \sepC Y \update_{\alpha\cdot \beta} \K A, \Delta \sep Z \update_\alpha \Lambda \update_{\alpha \cdot \beta} A \Imp}
        \infer2[(Cut)]{(I_k): \quad G \sepC H \sepC X \xMerge Y \update_{\alpha \cdot \beta} \Gamma \Delta \sep Z \update_\alpha \Lambda \update_{\alpha \cdot \beta} A \Imp}
    \end{prooftree}\\
    \end{center}

    \noindent We now apply the inductive hypothesis between $(i)$ and $(I_k)$, with induction on the DHS-complexity of the cut-formula. We then use admissibility of (Merge), contraction rules, (DW) and (New$^A$) as follows, to obtain $(J_k): G \sep H \sep X \xMerge Y \update_{\alpha \cdot \beta} \Gamma \Delta \sep Z \update_\alpha \Lambda \update_{\alpha \cdot B_1 \cdots B_{k-2}} B_{k-1} \Imp $.

    \begin{center}
    \begin{prooftree}[regular]
            \hypo{}
            \ellipsis{$(i)$}{ G \sep X \update_{\alpha\cdot \beta} \Gamma \sep \update_{\alpha \cdot \beta} \Imp A}
            \hypo{}
            \ellipsis{$(I_k)$}{G \sepC H \sepC X \xMerge Y \update_{\alpha \cdot \beta} \Gamma \Delta \sep Z \update_\alpha \Lambda \update_{\alpha \cdot \beta} A \Imp}
        \infer2[(Cut)]{G \sepC G \sepC H \sepC X \sepC X \xMerge Y \update_{\alpha \cdot \beta}\  \Gamma \Delta \sep Z \update_\alpha \Lambda \update_{\alpha \cdot \beta} \Imp }
        \infer1[(Merge)$^\ast+$ (C)$^\ast$]{G \sepC H \sepC X \xMerge Y \update_{\alpha \cdot \beta} \ \Gamma \Delta \sep Z \update_\alpha \Lambda \update_{\alpha \cdot \beta} \Imp}
        \infer1[(DW)]{ G \sepC H \sepC X \xMerge Y \update_{\alpha \cdot \beta} \ \Gamma \Delta \sep Z \update_\alpha \Lambda \update_{\alpha \cdot B_1 \cdots B_{k-2}} B_{k-1} \Imp \update_{\alpha \cdot \beta} \Imp}
        \infer1[(New$^A$)]{(J_k): \quad G \sep H \sep X \xMerge Y \update_{\alpha \cdot \beta} \Gamma \Delta \sep Z \update_\alpha \Lambda \update_{\alpha \cdot B_1 \cdots B_{k-2}} B_{k-1} \Imp}
    \end{prooftree}
    \end{center}
    
    \hfill \\

    \noindent We now take $(I_{k-1})$ and $(J_k)$ and follow the same procedure: we apply the inductive hypothesis to cut $B_{k-1}$, and then use admissibility of the structural rules, thereby obtaining DHS $(J_{k-1}) : G \sepC H \sepC X \xMerge Y \update_{\alpha \cdot \beta} \Gamma \Delta \sep Z \update_\alpha \Lambda \update_{\alpha \cdot B_1 \cdots B_{k-3}} B_{k-2}\Imp$. The induction is here on the DHS-complexity of the cut-formula. The derivation proceeds as follows\footnote{Here $\beta':= B_1 \cdots B_{k-2}$.}:\\

    \begin{center}
    \begin{prooftree}[regular]
            \hypo{}
            \ellipsis{$(I_{k-1})$}{ G \sepC H \sepC X \xMerge Y \update_{\alpha \cdot \beta} \Gamma \Delta \sepC  Z \update_\alpha \Lambda \update_{\alpha \cdot \beta'} \Imp B_{k-1}}
            \hypo{}
            \ellipsis{$(J_k)$}{ \cdots \sepC Z \update_\alpha \Lambda \update_{\alpha \cdot \beta'} B_{k-1} \Imp}
        \infer2[(Cut)]{G \sepC G \sepC H \sepC H \sepC X \xMerge Y \update_{\alpha \cdot \beta} \Gamma \Delta \sepC X \xMerge Y \update_{\alpha \cdot \beta}\  \Gamma \Delta \sep Z \xMerge Z \update_\alpha \Lambda \Lambda \update_{\alpha \cdot \beta'} \Imp }
        \infer1[(Merge)$^\ast+$(C)$^\ast$]{G \sepC H \sepC X \xMerge Y \update_{\alpha \cdot \beta} \Gamma \Delta \sep Z \update_\alpha \Lambda \ \update_{\alpha\cdot \beta'} \Imp }
        \infer1[(DW)]{G \sepC H \sepC X \xMerge Y \update_{\alpha \cdot \beta} \Gamma \Delta \sep Z \update_\alpha \Lambda \update_{\alpha \cdot B_1 \cdots B_{k-3}} B_{k-2}\Imp \update_{\alpha\cdot \beta'} \Imp }
        \infer1[(New$^A$)]{(J_{k-1}): \quad G \sepC H \sepC X \xMerge Y \update_{\alpha \cdot \beta} \Gamma \Delta \sep Z \update_\alpha \Lambda \update_{\alpha \cdot B_1 \cdots B_{k-3}} B_{k-2}\Imp }
    \end{prooftree}
    \end{center}

    \hfill\\
    
    \noindent We proceed this way until we obtain $(J_2): G \sep H \sep X \xMerge Y \update_{\alpha \cdot \beta} \Gamma \Delta \sep Z \update_\alpha B_1,  \Lambda$. Then, we use it together with $(I_1)$ to apply (Cut), which application is admissible by induction on the DHS-complexity of the cut-formula. We conclude with the admissibility of the structural rules:
    
    \begin{center}
    \begin{prooftree}[regular]
            \hypo{}
            \ellipsis{$(I_1)$}{G \sep H \sep X \xMerge Y \update_{\alpha \cdot \beta} \Gamma \Delta \sep  Z \update_\alpha  \Lambda,  B_1}
            \hypo{}
            \ellipsis{$(J_2)$}{G \sep H \sep X \xMerge Y \update_{\alpha \cdot \beta} \Gamma \Delta \sep Z \update_\alpha B_1,  \Lambda}
        \infer2[(Cut)]{G \sepC G \sepC H \sepC H \sep X \xMerge Y \update_{\alpha \cdot \beta} \Gamma \Delta \sep X \xMerge Y \update_{\alpha \cdot \beta}\  \Gamma \Delta \sep Z \xMerge Z \update_\alpha \Lambda \Lambda}
        \infer1[(Merge)$^\ast+$(C)$^\ast$]{G \sep H \sep X \xMerge Y \update_{\alpha \cdot \beta} \Gamma \Delta \sep Z \update_\alpha \Lambda }
    \end{prooftree}\\
    \end{center}

    \hfill \\

\end{document}